\newif\iflong
\newif\ifshort
\newcommand{\sump}{\sqcup}
\newcommand{\cc}[1]{{\mbox{\textnormal{\textsf{#1}}}}\xspace} 
\newcommand{\NP}{\cc{NP}}
\newcommand{\XP}{\cc{XP}}
\newcommand{\ang}[1]{\langle #1 \rangle}
\newcommand{\sep}{\;|\;}
\newcommand{\seq}{\subseteq}
\newcommand{\ca}[1]{\mathcal{#1}}
\newcommand{\bigoh}{\mathcal{O}}
\newcommand{\tww}{\textnormal{tww}}
\newcommand{\dist}{\textnormal{dist}}
\title{Computing Twin-Width Parameterized by \\ the Feedback Edge Number}
\titlerunning{Computing Twin-Width Parameterized by the Feedback Edge Number}
\author{Jakub Balab\' an}{Faculty of Informatics, Masaryk University, Brno, Czech Republic}{485053@mail.muni.cz}{0000-0002-2475-8938}{}
\author{Robert Ganian}{Algorithms and Complexity Group, TU Wien, Vienna, Austria}{rganian@gmail.com}{0000-0002-7762-8045}{Robert Ganian acknowledges support by the FWF and WWTF Science Funds (FWF project Y1329 and WWTF project ICT22-029).}
\author{Mathis Rocton}{Algorithms and Complexity Group, TU Wien, Vienna, Austria}{mrocton@ac.tuwien.ac.at}{0000-0002-7158-9022}{Mathis Rocton acknowledges support by the \includegraphics[width=0.5cm]{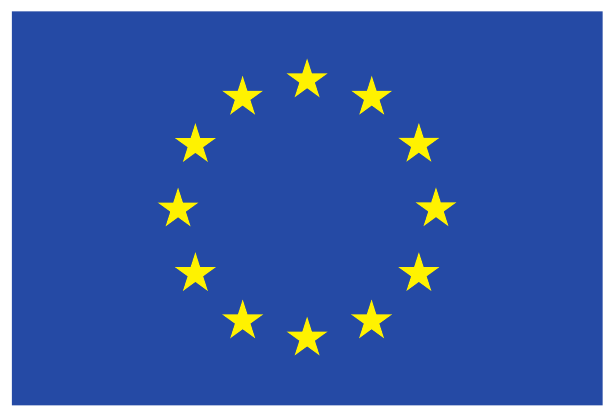} European Union's Horizon 2020 research and innovation COFUND programme (LogiCS@TUWien, grant agreement No 101034440), and the FWF Science Fund (FWF project Y1329).}
\authorrunning{Jakub Balab\' an, Robert Ganian, Mathis Rocton} 
\keywords{twin-width, parameterized complexity, kernelization, feedback edge number}
\Crefname{splemma}{Lemma}{Lemmas}
\Crefname{sptheorem}{Theorem}{Theorems}
\Crefname{spdefinition}{Definition}{Definitions}
\Crefname{spproperty}{Property}{Properties}
\Crefname{spcorollary}{Corollary}{Corollaries}
\xpatchcmd\thmt@restatable{%
\csname #2\@xa\endcsname\ifx\@nx#1\@nx\else[{#1}]\fi
}{%
\ifthmt@thisistheone
\csname #2\@xa\endcsname\ifx\@nx#1\@nx\else[{#1}]\fi
\else
\iflong \csname #2\@xa\endcsname\fi
\ifshort \csname #2\@xa\endcsname[{$\clubsuit$}]\fi
\fi}{}{}
\begin{document}
\hideLIPIcs

\definecolor{turquoise}{rgb}{0.19, 0.84, 0.78}
\definecolor{ffxfqq}{rgb}{1.,0.4980392156862745,0.}
\definecolor{ududff}{rgb}{0.30196078431372547,0.30196078431372547,1.}
\definecolor{ffqqqq}{rgb}{1.,0.,0.}
\definecolor{yqyqyq}{rgb}{0.5019607843137255,0.5019607843137255,0.5019607843137255}
\definecolor{ududff}{rgb}{0.30196078431372547,0.30196078431372547,1.}
\definecolor{big}{rgb}{0.30196078431372547,0.30196078431372547,1.}

\newcommand{\cen}{\Psi}

\maketitle

\begin{abstract}
The problem of whether and how one can compute the twin-width of a graph---along with an accompanying contraction sequence---lies at the forefront of the area of algorithmic model theory. While significant effort has been aimed at obtaining a fixed-parameter approximation for the problem when parameterized by twin-width, here we approach the question from a different perspective and consider whether one can obtain (near-)optimal contraction sequences under a larger parameterization, notably the feedback edge number $k$. 
As our main contributions, under this parameterization we obtain
(1) a linear bikernel for the problem of 
either computing a $2$-contraction sequence or determining that none exists and (2) an approximate fixed-parameter algorithm which computes an $\ell$-contraction sequence (for an arbitrary specified $\ell$) or determines that the twin-width of the input graph is at least $\ell$.
These algorithmic results rely on newly obtained insights into the structure of optimal contraction sequences, and as a byproduct of these we also slightly tighten the bound on the twin-width of graphs with small feedback edge number.
\end{abstract}

\newpage
\section{Introduction}\label{sec:intro}
Since its introduction by Bonnet, Kim, Thomassé and Watrigant in 2020~\cite{BonnetKTW22},
 the notion of \emph{twin-width} has made an astounding impact on the field of algorithmic model-checking~\cite{DBLP:conf/soda/BonnetGKTW21,DBLP:conf/icalp/BonnetG0TW21,DBLP:conf/stoc/BonnetGMSTT22,DBLP:conf/stacs/BonnetGMT23,DBLP:conf/soda/BonnetKRT22}. Indeed, it promises a unified explanation of why model-checking first order logic is fixed-parameter tractable on a number of graph classes which were, up to then, considered to be separate islands of tractability for the model-checking problem, including proper minor-closed graphs, graphs of bounded rank-width, posets of bounded width and map graphs~\cite{BonnetKTW22}; see also the recent works on other graph classes of bounded twin-width~\cite{DBLP:conf/iwpec/BalabanH21,DBLP:conf/wg/BalabanHJ22,EppsteinConfluent}. Beyond this, twin-width was shown to have fundamental connections to rank-width and path-width~\cite{DBLP:conf/soda/BonnetKRT22} 
as well as to matrix theory~\cite{DBLP:conf/stacs/BonnetGMT23}, 
and has by now been studied even in areas such as graph drawing~\cite{EppsteinConfluent} 
and SAT Solving~\cite{DBLP:conf/sat/GanianPSSS22,DBLP:conf/ijcai/SchidlerS23}.

And yet, essentially all twin-width based algorithmic results known to date require a corresponding decomposition---a so-called \emph{contraction sequence}---to be provided as part of the input. The fact that the inner workings of these algorithms rely on a contraction sequence is not surprising; after all, the same reliance on a suitable decomposition is present in essentially all graph algorithms parameterized by classical width measures such as treewidth~\cite{RobertsonS86} or rank-width~\cite{Oum05,GanianH10}. But while optimal decompositions for treewidth and rank-width can be computed in fixed-parameter time when parameterized by the respective width measure~\cite{Bodlaender96,HlinenyO08} and even more efficient algorithms are known when aiming for decompositions that are only a constant-factor worse than optimal~\cite{DBLP:conf/focs/Korhonen21,DBLP:conf/stoc/FominK22}, the situation is entirely different in the case of twin-width. In particular, it is known that already deciding whether a graph has twin-width at most $4$, i.e., admits a $4$-contraction sequence, is \NP-hard~\cite{DBLP:conf/icalp/BergeBD22} (ruling out fixed-parameter as well as \XP\ algorithms for computing optimal contraction sequences). Moreover, whether one can at least compute approximately-optimal contraction sequences in fixed-parameter time is arguably the most prominent open question in contemporary research of twin-width. 

\smallskip
\noindent \textbf{Contribution.}\quad
Given the difficulty of computing (near-)optimal contraction sequences when parameterized by twin-width itself, in this article we ask whether one can at least compute such contraction sequences under a larger parameterization, i.e., when using an auxiliary parameter which yields stronger restrictions on the input graph\footnote{When approximating width parameters, it is desirable to aim for approximation errors which depend only on the targeted width parameter.}. Algorithms obtained under such stronger restrictions are not intended to be used as a pre-computation step prior to using twin-width for model checking, but rather aim to further our understanding of the fundamental problem of computing (near-)optimal contraction sequences. In this sense, our work follows in the footsteps of previous work on, e.g., treedepth parameterized by the vertex cover number~\cite{DBLP:conf/iwpec/KobayashiT16}, 
MIM-width parameterized by the feedback edge number and other parameters~\cite{DBLP:conf/innovations/EibenGHJK22}, 
treewidth parameterized by the feedback vertex number~\cite{DBLP:journals/siamdm/BodlaenderJK13}
and the directed feedback vertex number parameterized by the (undirected) feedback vertex number~\cite{DBLP:journals/algorithmica/BergougnouxEGOR21}.

As our two main contributions, we obtain the first non-trivial fixed-parameter algorithms for computing (near-)optimal contraction sequences.

\begin{restatable}{theorem}{thmone}
\label{thm:tww2}
The problem of deciding whether the twin-width of an input graph is at most $2$ admits a linear bikernel when parameterized by the feedback edge number $k$. Moreover, a $2$-contraction sequence for $G$ (if one exists) can be computed in time $2^{\bigoh(k\cdot \log k)}+n^{\bigoh(1)}$.
\end{restatable}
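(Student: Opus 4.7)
The plan is to combine a structural kernelization with a brute-force enumeration on the reduced instance.

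Given $G$ with a feedback edge set $F$ of size $k$ (computable in polynomial time via spanning-tree extraction), I would first identify the set $B$ of \emph{branch vertices}---degree-$\geq 3$ vertices of $G$ together with the endpoints of edges in $F$. A standard handshake argument on the forest $G - F$ yields $|B| = \bigoh(k)$. After cutting $B$ out, what remains is a collection of $\bigoh(k)$ maximal \emph{subdivided paths} (chains of degree-$2$ vertices) connecting pairs of vertices in $B$, plus a few pendant paths attached to $B$.

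The heart of the kernelization is a \emph{path-shortening lemma}: there exists an absolute constant $c$ such that, whenever a subdivided path has more than $c$ internal vertices, one can contract one of its middle edges without affecting whether $G$ admits a $2$-contraction sequence. Both directions must be established. For the easy direction, any $2$-contraction sequence of the reduced graph $G'$ is lifted to one of $G$ by prepending local contractions that merge the restored degree-$2$ vertex into a neighbour along the path, while verifying that the red degree remains at most $2$ throughout. For the hard direction, given a $2$-contraction sequence of $G$, the ``local behaviour'' on a very long path must fall into one of finitely many canonical patterns (zipper-like contractions sweeping from an endpoint, contractions meeting at a midpoint, etc.); a case analysis shows that at least one such pattern is always present and that it can be truncated by one step, yielding a valid sequence for $G'$. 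Applying this rule exhaustively reduces the instance to $\bigoh(k)$ vertices, giving the linear bikernel.

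For the algorithmic statement, I would run the kernelization and then solve the reduced instance by brute force: enumerate contraction sequences step by step, pruning any partial sequence whose current red degree exceeds $2$. Since the kernel has $n' = \bigoh(k)$ vertices and each of the $n'-1$ contraction steps picks one of at most $\binom{n'}{2}$ pairs, the total search cost is $(n')^{\bigoh(n')} = 2^{\bigoh(k \log k)}$. Adding the polynomial-time kernelization and the polynomial-time lifting of the sequence back to $G$ yields the claimed $2^{\bigoh(k\cdot \log k)}+n^{\bigoh(1)}$ bound.

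The main obstacle is the hard direction of the path-shortening lemma. With only $2$ units of red-degree budget there is essentially no slack, yet red edges emanating from contractions near $B$ can propagate into a long subdivided path in subtle ways. The structural task is therefore to classify, up to symmetry and time-reversal, all possible ``traversal patterns'' of a $2$-contraction sequence along a long subdivided path, and to show that each such pattern contains a translationally invariant block that can be safely removed. If such a classification can be pushed through cleanly, the remainder of the proof is essentially bookkeeping.
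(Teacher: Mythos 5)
Your overall skeleton (small core plus long degree-$2$ chains, then a path-shortening rule, then brute force) matches the paper's architecture at a high level, and the brute-force step with its $2^{\bigoh(k\log k)}$ bound is essentially identical to the paper's Observation~\ref{obs:brute-force}. However, there are two substantial gaps.

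\textbf{The branch-vertex bound is false without tree pruning.} You claim that the set $B$ of degree-$\geq 3$ vertices of $G$ plus endpoints of $F$ satisfies $|B| = \bigoh(k)$ by a handshake argument on $G-F$. The handshake argument on a tree bounds the number of degree-$\geq 3$ vertices by the number of \emph{leaves}, and the number of leaves of $G-F$ is not bounded by $k$. A caterpillar (a path $v_1,\ldots,v_n$ with one pendant leaf attached to every $v_i$) has $n-2$ vertices of degree $3$ and feedback edge number $0$. More generally, whenever $G-F$ has dangling subtrees, your decomposition does not yield $\bigoh(k)$ branch vertices or $\bigoh(k)$ pendant ``paths''; the pendant pieces can be arbitrary trees and arbitrarily numerous. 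The paper spends all of Section~\ref{sec:cutting-forest} resolving exactly this: dangling trees are replaced by constant-size \emph{stumps}, and showing that this preserves twin-width effectively (Lemmas~\ref{lem:onebigtree}, \ref{lem:twostumps}, \ref{lem:two-black-one-red}) is genuinely delicate---trees cannot simply be deleted, and the reduction to stumps involves a careful case analysis on how an optimal contraction sequence interacts with the stump. Only after this pruning (Theorem~\ref{thm:pruning}, followed by the cleanup of Corollary~\ref{cor:no-stumps}) does one obtain the structure you posit, with $|V(H)| = \bigoh(k)$ and $\bigoh(k)$ dangling paths.

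\textbf{The path-shortening lemma is the heart of the result, and you do not prove it.} You correctly identify the hard direction (given a $2$-contraction sequence of $G$, produce one for the graph with one path edge contracted) as the main obstacle, but your proposed resolution---``classify all traversal patterns up to symmetry and time-reversal and show each contains a translationally invariant block that can be removed''---is a program, not a proof. The paper's argument is conceptually different: it does not shorten paths by one edge with a constant threshold, but instead contracts each dangling path all the way down to a \emph{single vertex}. This is justified via the notion of a \emph{regular} prefix of a contraction sequence (Definition~\ref{def:regular-prefix}), the key technical lemma being Lemma~\ref{lem:regularize}: any width-$2$ sequence can be iteratively restructured so that contractions on dangling paths are local (shorten one path by one vertex, or delete a singleton path), with a degenerate red-$4$-cycle case treated separately. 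Proposition~\ref{prop:tww2} then derives the shortening by simply deleting all path-shortening contractions from a fully regular sequence. This argument crucially exploits the rigidity of red degree exactly $2$; the informal observation that ``with only $2$ units of red-degree budget there is essentially no slack'' is correct, but turning it into a proof is precisely where the work lies, and the translational-invariance picture you sketch is not how the paper gets there.
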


We remark that Theorem~\ref{thm:tww2} providing a \emph{bikernel}~\cite{DowneyF13} (instead of a kernel) is merely due to the output being a \emph{trigraph}~\cite{BonnetKTW22}. Our second result targets graphs of higher twin-width:

\begin{restatable}{theorem}{thmtwo}
\label{thm:tww-3+}
There is an algorithm which takes as input an $n$-vertex graph $G$ with feedback edge number $k$, runs in time $f(k) \cdot n^{\bigoh(1)}$ for a computable function $f$, and outputs a contraction sequence for $G$ of width at most $\tww(G)+1$.
\end{restatable}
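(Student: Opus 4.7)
The plan is to exploit the structure of graphs with small feedback edge number and reduce the computation of a near-optimal contraction sequence of $G$ to an exhaustive search on a graph of size $\bigoh(k)$.

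First I would identify a \emph{skeleton} $S \seq V(G)$ consisting of the endpoints of a minimum feedback edge set $F$ together with every vertex that has degree at least $3$ in the forest $G-F$; standard counting arguments give $|S|=\bigoh(k)$. The vertices outside $S$ then split into \emph{pendant subtrees} rooted at a single vertex of $S$ and \emph{subdivision paths} whose internal vertices all have degree $2$ in $G$ and whose two endpoints lie in $S$.

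The heart of the argument is a structural lemma: there exists a contraction sequence for $G$ of width at most $\tww(G)+1$ that proceeds in two phases, a canonical \emph{preparation} phase which contracts each pendant subtree into its root and each subdivision path into a single (possibly red) edge between its skeleton endpoints, followed by a \emph{skeleton} phase that contracts the resulting trigraph $G^\star$, whose underlying graph has only $\bigoh(k)$ vertices. To prove it, I would start from an optimal contraction sequence $\sigma$ for $G$ and rearrange it. Contractions strictly inside a pendant subtree can be moved to the front at no cost, since such a subtree is attached to $G$ through a single cut vertex and hence no red edge produced inside can ever leak into the rest of $G$. Contractions inside a long subdivision path can similarly be pulled forward using a "contract-from-the-middle" argument exploiting $\tww(P_n)=1$: during this pre-processing each skeleton endpoint of the path sees at most one red edge coming from that path at any time, and the $+1$ slack is precisely what absorbs a possible clash with red edges already incident to that endpoint in $\sigma$. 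Once the lemma is in place, the algorithm simply executes the preparation phase and then brute-forces all $2^{\bigoh(k\log k)}$ contraction orderings of $G^\star$, picking one of minimum width; both phases run in $f(k)\cdot n^{\bigoh(1)}$ time.

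The main obstacle is establishing the $+1$ slack cleanly. Swapping a contraction inside a subdivision path past a contraction that involves its endpoints in $\sigma$ temporarily inflates the red degree at that endpoint, and one must verify that this inflation is bounded by exactly one above $\tww(G)$. I would prove this by induction on the number of such swaps, using the fact that at any given moment only the two adjacent contracted path fragments affect the endpoint; very short paths (whose length is bounded by a constant depending on $\tww(G)$) would be absorbed directly into $S$ at the beginning, which still keeps $|S|=\bigoh(k)$ since there are $\bigoh(k)$ paths in total.
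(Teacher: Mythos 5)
Your skeleton decomposition and the general two-phase idea correspond to the paper's Sections~3--4, but the structural lemma you propose is false, and the paper's Section~6 exists precisely to work around this. If a width-$(\tww(G)+1)$ contraction sequence existed whose preparation phase contracts every subdivision path down to a single (red) edge and whose skeleton phase contracts the resulting $\bigoh(k)$-vertex trigraph $G^\star$, then necessarily $\tww(G^\star) \le \tww(G)+1$. This is refuted by, e.g., the $(2\log n)$-subdivision of $K_n$: there $\tww(G)\le 4$, the feedback edge number is $k=\Theta(n^2)$, and $G^\star$ is a red $K_n$ of twin-width $n-1=\Theta(\sqrt{k})$, so no brute-force over $G^\star$ can ever return a width-$5$ sequence. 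Proposition~\ref{prop:dangling} in the paper records this phenomenon in a sharper form: even changing a dangling path's length by one unit can change the twin-width, so there is no constant-length shortening of subdivision paths that preserves twin-width within any fixed additive error. The paper therefore only shortens paths to length $3\cdot f_H(2|V(H)|^2)+9$, a tower-of-exponentials function of $k$, and the correctness of that shortening is exactly Theorem~\ref{thm:paths_shortening}.

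The ``contract-from-the-middle'' rearrangement also does not establish the lemma. A skeleton vertex $v$ may be an endpoint of several subdivision paths; pulling all their internal contractions to the front forces each of them to present a red edge to $v$ simultaneously, which can exceed the $+1$ budget by the number of such paths. More importantly, the paper never reorders the contractions of an optimal sequence $\sigma$ at all. Instead it extracts $\bigoh(|V(H)|^2)$ \emph{checkpoints} of $\sigma$ (the big-step sequence), records for each only a bounded-size \emph{blueprint}, and constructs from scratch surrogate gadgets (\emph{centipedes}) that realize the blueprint transitions within width $\tww(G)+1$; the subdivision paths must be kept long precisely to supply the material from which the centipedes are built and repeatedly rebuilt. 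To rescue your plan you would need to shorten paths only to a function of $k$ and prove an analogue of Theorem~\ref{thm:paths_shortening}; there does not appear to be a shortcut past the paper's machinery for that step.
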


We note that the graph parameter used in our results---the feedback edge number or equivalently the edge deletion distance to acyclicity---is highly restrictive and provides stronger structural guarantees on the input graph than not only twin-width itself, but also rank-width and treewidth. In a sense, it is one of the two most ``restrictive'' structural parameters used in the design of fixed-parameter algorithms~\cite{UhlmannW13,BannisterCE18,GanianO21,GanianK21,FichteGHSO23}, with the other being the vertex cover number, i.e., the minimum size of a vertex cover (see also Figure~\ref{fig:hierarchy}). %
 But while there is a trivial fixed-parameter algorithm for computing optimal contraction sequences w.r.t.\ the vertex cover number\footnote{In particular, this follows by repeatedly deleting vertices which are twins (an operation which is known to preserve twin-width~\cite{BonnetKTW22}) until one obtains a problem kernel.} and also a polynomial-time algorithm for solving the same problem on trees~\cite{BonnetKTW22}, lifting the latter to our more general setting parameterized by the feedback edge number is far from a trivial undertaking and goes hand in hand with the development of new insights into optimal contraction sequences of highly structured graphs. 

\begin{figure}
\floatbox[{\capbeside\thisfloatsetup{capbesideposition={left},capbesidewidth=8cm}}]{figure}[\FBwidth]
{
\scalebox{0.6}{\begin{tikzpicture}
\clip (-5.75, -8.75) rectangle (5.25, 0.75);
  \node [draw, fill=red!30, rectangle, minimum width=2.5cm, minimum height=1cm, rounded corners] (twin) at (-1,-2) {\textbf{Twin-width}};
  
  \node [draw, rectangle, minimum width=2.5cm, minimum height=1cm, rounded corners] (clique) at (-1,-3.5) {\textbf{Clique-width}};
  
  \node [draw, rectangle, minimum width=2.5cm, minimum height=1cm, rounded corners] (tw) at (1.5,-4.8) {\textbf{Treewidth}};
  
  \node [draw, fill=green!30, rectangle, minimum width=2.5cm, minimum height=1cm, rounded corners, align=center] (nd) at (-4,-6) {\textbf{Neighborhood}\\ \textbf{diversity}};
  
  \node [draw, rectangle, minimum width=2.5cm, minimum height=1cm, rounded corners] (td) at (-1,-6.4) {\textbf{\textbf{Treedepth}}};
  
  \node [draw, fill=green!30, rectangle, minimum width=2.5cm, minimum height=1cm, rounded corners, align=center] (vc) at (-3.5,-8) {\textbf{Vertex cover number}};
  
  \node [draw, fill=blue!30, rectangle, minimum width=2.5cm, minimum height=1cm, rounded corners, align=center] (fen) at (1,-8) {\textbf{Feedback edge number}};
  \draw[{Latex[length=2mm]}-] (twin.south) -- (clique.north);
  \draw[{Latex[length=2mm]}-] ([xshift=20pt]clique.south) -- ([xshift=-25pt]tw.north);
  \draw[{Latex[length=2mm]}-] ([xshift=-20pt]clique.south) -- ([xshift=20pt]nd.north);
  \draw[{Latex[length=2mm]}-] ([xshift=-15pt]tw.south) -- ([xshift=15pt]td.north);
  \draw[{Latex[length=2mm]}-] ([xshift=-15pt]td.south) -- ([xshift=20pt]vc.north);
  \draw[{Latex[length=2mm]}-] (nd.south) -- ([xshift=-14pt]vc.north);
  \draw[{Latex[length=2mm]}-] (tw.south) -- ([xshift=15pt]fen.north);
\end{tikzpicture}}
}
{\caption{Complexity of computing twin-width with respect to notable structural parameters. A directed path from parameter $x$ to parameter $y$ indicates that $y$ is upper-bounded by a function of $x$, i.e., that $x$ is more restrictive than $y$. Green marks parameters where the problem is trivially fixed-parameter tractable; red and white signify para-\NP-hardness and cases where the complexity is unknown, respectively; the parameter considered in this paper is highlighted in blue.\label{fig:hierarchy}~
\vspace{0.3cm}}}
\end{figure}

\smallskip
\noindent \textbf{Proof Overview and Techniques.}\quad
At its core, both of our results are kernelization routines in the sense that they apply polynomial-time reduction rules in order to transform the input instance into an ``equivalent'' instance whose size is upper-bounded by a function of the parameter alone. The required reduction rules are in fact very simple---the difficulty lies in proving that they are safe. Below, we provide a high-level overview of the proof; for brevity, we assume here that readers are familiar with basic terminology associated with twin-width (as also introduced in Section~\ref{sec:prelim}).

The first step towards both desired kernelization algorithms consists of a set of tree-pruning rules which allow us to ``cut off'' subtrees in the graph that are connected to the rest of the graph via a bridge, i.e., a single edge. Already this step (detailed in Section~\ref{sec:cutting-forest}) requires some effort in the context of twin-width, and replaces the cut-off subtrees by one of two kinds of \emph{stumps} which depend on the properties of the replaced subtree. After the exhaustive application of these general rules, in Section~\ref{sec:preprocessing} we apply a further cleanup step which uses the structural properties guaranteed by our parameter to deal with the resulting stumps. This reduces the instance to an equivalent trigraph consisting of $\bigoh(k)$-many vertices plus a set of ``\emph{dangling}'' paths connecting these vertices---paths whose internal vertices have degree $2$.
We note that some of the reduction rules developed here, and in particular those which allow us to safely remove trees connected via a bridge, provide techniques that could be lifted to remove other kinds of dangling subgraphs and hence may be of general interest. In fact, we use our reduction rules to improve the previously known twin-width $2$ upper bound from trees to graphs of feedback edge number $1$ (Theorem~\ref{thm:fenone}), which additionally yields a slightly tighter relationship between twin-width and the feedback edge number (Corollary~\ref{cor:fenmore}).

The structure of the trigraph at this point seems rather simple: it consists of a bounded-size part plus a small set of arbitrarily long dangling paths.
Intuitively, one would like to obtain a bikernel by showing that each sufficiently long dangling path can be replaced by a path of length bounded by some constant without altering the twin-width. In Section~\ref{sec:tww2}, we implement this approach by guaranteeing the existence of ``well-structured'' $2$-contraction sequences for graphs of twin-width $2$, in turn allowing us to complete the proof of Theorem~\ref{thm:tww2}.

The situation becomes significantly more complicated when aiming for contraction sequences for graphs of higher twin-width. In particular, not only does the approach used in Section~\ref{sec:tww2} not generalize, but we prove that there can exist no safe twin-width preserving rule to shorten dangling paths to a constant length, for any constant (see Proposition~\ref{prop:dangling}). Circumventing this issue---even when allowing for an additive error of one---in Section~\ref{sec:tww-3+} forms the most challenging part of our results. The core idea used in the proof here is to partition a hypothetical optimal contraction sequence into a bounded number of stages (defined via so-called \emph{blueprints}), whereas we show that the original contraction sequence can be transformed into a ``nice'' sequence where we retain control over the operations carried out in each stage, at the cost of allowing for a slightly higher width of the sequence. The structure in these nice sequences is defined by aggregating all the descendants of the dangling paths into so-called \emph{centipedes}. Afterwards, we use an iterative argument to show that such a nice sequence can also be used to deal with a kernelized trigraph where all the long dangling paths are replaced by paths whose length is not constant, but depends on a function of $k$.

A mindmap of our techniques and algorithmic results is provided in Figure~\ref{fig:intro}.

\ifshort
\smallskip
\noindent \emph{Statements where proofs or details are provided in the appendix are marked with~$\clubsuit$}.
\fi

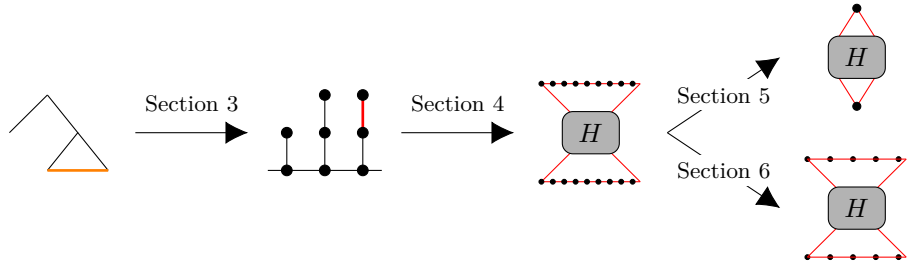
\begin{figure}[t]
\vspace{-0.4cm}
\begin{tikzpicture}

\begin{footnotesize}
\node [ rectangle, minimum width=2cm, minimum height=1.75cm, rounded corners, dashed] (a) at (5,5){};
\node [ rectangle, minimum width=2cm, minimum height=1.75cm, rounded corners, dashed] (b) at (8.5,5){};
\node [rectangle, minimum width=2cm, minimum height=1.75cm, rounded corners, dashed] (c) at (12,5){};
\node [rectangle, minimum width=2cm, minimum height=1.75cm, rounded corners, dashed] (d) at (15.5,6){};
\node [, rectangle, minimum width=2cm, minimum height=1.75cm, rounded corners, dashed] (e) at (15.5,4){};
\draw[-{Latex[length=3mm, width=3mm]}] (a.east)-- (b.west) node[midway, above=5pt] {Section 3};
\draw[-{Latex[length=3mm, width=3mm]}] (b.east)-- (c.west) node[midway, above=5pt] {Section 4};

\draw[-{Latex[length=3mm, width=3mm]}] (c.east)-- (d.west) node[midway, midway, fill=white] {Section 5};

\draw[-{Latex[length=3mm, width=3mm]}] (c.east)-- (e.west) node[midway, fill=white] {Section 6};
\end{footnotesize}

\draw[color=red](12,5)-- (11.35,5.65);
\draw[color=red](12,5)-- (12.65,5.65);
\draw[color=red](12,5)-- (12.65,4.35);
\draw[color=red](12,5)-- (11.35,4.35);
\draw[color=red](11.35,5.65)--(12.65,5.65);
\draw[color=red](11.35,4.35)--(12.65,4.35);
\node [draw, rectangle, minimum width=0.75cm, minimum height=0.56cm, rounded corners, fill=black!30] (cc) at (12,5){$H$};

\foreach \i in {0, 0.15, ..., 1.3}{
\draw [fill=black] (11.35+\i, 5.65) circle (0.9pt);
\draw [fill=black] (11.35+\i, 4.35) circle (0.9pt);
}

\foreach \i in {0, 0.3, ..., 1.3}{
\draw [fill=black] (14.85+\i, 4.65) circle (0.9pt);
\draw [fill=black] (14.85+\i, 3.35) circle (0.9pt);
}

\draw[color=red] (15.25,6.25)-- (15.5, 6.65)-- (15.75, 6.25);
\draw[color=red] (15.25,5.75)-- (15.5, 5.35)-- (15.75, 5.75);
\node [draw, rectangle, minimum width=0.75cm, minimum height=0.56cm, rounded corners, fill=black!30] (dd) at (15.5,6){$H$};
\draw [fill=black] (15.5, 6.65) circle (1.5pt);
\draw [fill=black] (15.5, 5.35) circle (1.5pt);

\draw[color=red](15.5,4)-- (14.85,4.65);
\draw[color=red](15.5,4)-- (16.15,4.65);
\draw[color=red](15.5,4)-- (16.15,3.35);
\draw[color=red](15.5,4)-- (14.85,3.35);
\draw[color=red](14.85,4.65)--(16.15,4.65);
\draw[color=red](14.85,3.35)--(16.15,3.35);
\node [draw, rectangle, minimum width=0.75cm, minimum height=0.56cm, rounded corners, fill=black!30] (dd) at (15.5,4){$H$};

\draw (4.85, 5.5)-- (5.25,5)-- (4.85, 4.5);
\draw (5.25, 5)-- (5.65,4.5);
\draw (4.85,5.5)-- (4.35, 5);
\draw[color=orange, line width = 1pt] (5.65,4.5)-- (4.85,4.5);

\draw(7.75, 4.5)-- (9.25, 4.5);
\draw(8,4.5)--(8,5);
\draw(8.5,4.5)--(8.5,5.5);
\draw(9,4.5)--(9,5);
\draw[color=red, line width=1pt](9,5)--(9,5.5);

\draw [fill=black] (8, 4.5) circle (2pt);
\draw [fill=black] (8.5, 4.5) circle (2pt);
\draw [fill=black] (9, 4.5) circle (2pt);
\draw [fill=black] (8, 5.) circle (2pt);
\draw [fill=black] (8.5, 5.) circle (2pt);
\draw [fill=black] (9, 5.) circle (2pt);
\draw [fill=black] (8.5, 5.5) circle (2pt);
\draw [fill=black] (9, 5.5) circle (2pt);

\end{tikzpicture}
\caption{\textbf{Left:} The input graph---a tree with $k$ extra edges. \textbf{Middle left:} Three kinds of stumps, which are obtained by cutting down dangling trees. \textbf{Middle right:} A small subtrigraph $H$ and long dangling red paths. \textbf{Top right:} If the target twin-width is 2, the paths can be shortened to single vertices. \textbf{Bottom right:} Paths can be shortened to bounded length while retaining a guarantee on the twin-width. A detailed example depicting the first two steps is provided in Figure~\ref{fig:preprocessing}.}
\label{fig:intro}
\end{figure}

\section{Preliminaries}\label{sec:prelim}

For integers $i$ and $j$, we let $[i,j] := \{n \in \mathbb N \sep i \le n \le j\}$ and $[i] := [1, i]$. 
We assume familiarity with basic concepts in graph theory~\cite{Diestel} and parameterized algorithmics~\cite{DowneyF13,CyganFKLMPPS15}. \ifshort ($\clubsuit$)\fi
\iflong
Given a vertex set $U$, we will use $G - U$ to denote the graph $G[V(G)\setminus U]$, and similarly for an edge set $F$, $G-F$ denotes $G$ after removing the edges in $F$. We use $\deg_G(u)$ to denote the degree of $u$ in $G$ and $N_G(u)$ to denote the neighborhood of $u$ in $G$.

An edge $ab$ in $G$ is called a \emph{bridge} if every path from $a$ in $b$ in $G$ uses the edge $ab$; in other words, $G-\{ab\}$ contains one more connected component than $G$.
\fi
\ifshort

\fi
The \emph{length} of a path is the number of edges it contains.
\iflong
The \emph{distance} between two vertices $u$ and $v$, denoted $\dist_G(u,v)$, is the length of the shortest path between them.

\fi
An edge set $F$ in an $n$-vertex graph $G$ is called a \emph{feedback edge set} if $G-F$ is acyclic, and the \emph{feedback edge number} of $G$ is the size of a minimum feedback edge set in $G$.
\iflong
 We remark that a minimum feedback edge set can be computed in time $\bigoh(n)$ as an immediate corollary of the classical (DFS- and BFS-based) algorithms for computing a spanning tree in an unweighted graph $G$. 
 \fi
A \emph{dangling path} in $G$ is a path of vertices which all have degree $2$ in $G$, and a \emph{dangling tree} in $G$ is an induced subtree in $G$ which can be separated from the rest of $G$ by a bridge (see, e.g., Figure~\ref{fig:preprocessing} later).

\smallskip
\noindent \textbf{Twin-Width.}\quad
A \emph{trigraph} $G$ is a graph whose edge set is partitioned into a set of \emph{black} and \emph{red} edges. The set of red edges is denoted $R(G)$, and the set of black edges $E(G)$.
The \emph{black (\emph{resp}.\ red) degree} of $u\in V(G)$ is the number of black (resp.\ red) edges incident to $u$ in $G$.
We extend graph-theoretic terminology to trigraphs by ignoring the colors of edges; for example, the degree of $u$ in $G$ is the sum of its black and red degrees\iflong~(in the literature, this is sometimes called the \emph{total degree})\fi.
We say a (sub)graph is \emph{black (\emph{resp}.\ red)} if all of its edges are black (resp. red); for example, $P$ is a red path in $G$ if it is a path containing only red edges. Without a color adjective, the path (or a different kind of subgraph) may contain edges of both colors. We use $G[Q]$ to denote the subtrigraph of $G$ induced on $Q\subseteq V(G)$. 

Given a trigraph $G$, a \emph{contraction} of two distinct vertices $u,v\in V(G)$ is the operation which produces a new trigraph by (1) removing $u, v$ and adding a new vertex $w$, (2) adding a black edge $wx$ for each $x\in V(G)$ such that $xu$, $xv\in E(G)$, and (3) adding a red edge $wy$ for each $y\in V(G)$ such that $yu\in R(G)$, or $yv\in R(G)$, or $y$ contains only a single black edge to either $v$ or $u$.
A sequence $C = (G = G_1,\ldots,G_n)$ is a \emph{partial contraction sequence of $G$} if it is a sequence of trigraphs such that for all $i\in [n-1]$, $G_{i+1}$ is obtained from $G_i$ by contracting two vertices.
A \emph{contraction sequence} is a partial contraction sequence which ends with a single-vertex graph.
The \emph{width} of a (partial) contraction sequence $C$, denoted $w(C)$, is the maximum red degree over all vertices in all trigraphs in $C$; we also use \emph{$\alpha$-contraction sequence} as a shorthand for a contraction sequence of width at most $\alpha$. 
The \emph{twin-width} of $G$, denoted $\tww(G)$, is the minimum width of any contraction sequence of $G$, and a contraction sequence of width $\tww(G)$ is called \emph{optimal}. An example of a contraction sequence is provided in Figure~\ref{fig:seq}.

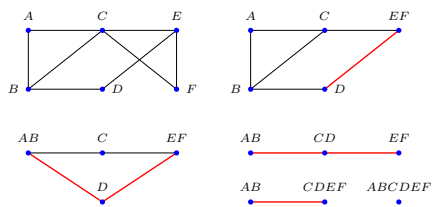
\begin{figure}[ht]
\floatbox[{\capbeside\thisfloatsetup{capbesideposition={left},capbesidewidth=8cm}}]{figure}[\FBwidth]
{
\vspace{-0.4cm}
\scalebox{0.65}{\begin{tikzpicture}[line cap=round,line join=round, x=1.0cm,y=1.0cm]
\draw (2,5)-- (2,3.8);
\draw (3.5,5)-- (2,5);
\draw (3.5,5)-- (2,3.8);
\draw (3.5,3.8)-- (2,3.8);
\draw (5,5)-- (3.5,5);
\draw (5,5)-- (3.5,3.8);
\draw (5,3.8)-- (3.5,5);
\draw (5,3.8)-- (5,5);
\draw (6.5,5)-- (6.5,3.8);
\draw (8,5)-- (6.5,5);
\draw (8,5)-- (6.5,3.8);
\draw (8,3.8)-- (6.5,3.8);
\draw (9.5,5)-- (8,5);
\draw [color=red, thick] (9.5,5)-- (8,3.8);
\draw (3.5,2.5)-- (2,2.5);
\draw (5,2.5)-- (3.5,2.5);
\draw [color=red, thick] (5,2.5)-- (3.5,1.5);
\draw [color=red, thick] (2,2.5)-- (3.5,1.5);
\draw [color=red, thick] (9.5,2.5)-- (8,2.5);
\draw [color=red, thick] (6.5,2.5)-- (8,2.5);
\draw [color=red, thick] (6.5,1.5)-- (8,1.5);
\begin{scriptsize}
\fill [color=blue] (2,5) circle (1.5pt);
\draw[color=blue] (2,5.3) node {\color{black}$A$};
\fill [color=blue] (3.5,5) circle (1.5pt);
\draw[color=blue] (3.5,5.3) node {\color{black}$C$};
\fill [color=blue] (5,5) circle (1.5pt);
\draw[color=blue] (5,5.3) node {\color{black}$E$};

\fill [color=blue] (2,3.8) circle (1.5pt);
\draw[color=blue] (1.7,3.8) node {\color{black}$B$};
\fill [color=blue] (3.5,3.8) circle (1.5pt);
\draw[color=blue] (3.8,3.8) node {\color{black}$D$};
\fill [color=blue] (5,3.8) circle (1.5pt);
\draw[color=blue] (5.3,3.8) node {\color{black}$F$};

\fill [color=blue] (8,3.8) circle (1.5pt);
\draw[color=blue] (8.3,3.8) node {\color{black}$D$};
\fill [color=blue] (6.5,3.8) circle (1.5pt);
\draw[color=blue] (6.2,3.8) node {\color{black}$B$};

\fill [color=blue] (6.5,5) circle (1.5pt);
\draw[color=blue] (6.5,5.3) node {\color{black}$A$};

\fill [color=blue] (8,5) circle (1.5pt);
\draw[color=blue] (8,5.3) node {\color{black}$C$};

\fill [color=blue] (9.5,5) circle (1.5pt);
\draw[color=blue] (9.5,5.3) node {\color{black}$EF$};

\fill [color=blue] (2,2.5) circle (1.5pt);
\draw[color=blue] (2.,2.8) node {\color{black}$AB$};
\fill [color=blue] (3.5,2.5) circle (1.5pt);
\draw[color=blue] (3.5,2.8) node {\color{black}$C$};
\fill [color=blue] (3.5,1.5) circle (1.5pt);
\draw[color=blue] (3.5,1.8) node {\color{black}$D$};
\fill [color=blue] (5,2.5) circle (1.5pt);
\draw[color=blue] (5.,2.8) node {\color{black}$EF$};

\fill [color=blue] (6.5,2.5) circle (1.5pt);
\draw[color=blue] (6.5,2.8) node {\color{black}$AB$};
\fill [color=blue] (8,2.5) circle (1.5pt);
\draw[color=blue] (8.,2.8) node {\color{black}$CD$};
\fill [color=blue] (9.5,2.5) circle (1.5pt);
\draw[color=blue] (9.5,2.8) node {\color{black}$EF$};

\fill [color=blue] (6.5,1.5) circle (1.5pt);
\draw[color=blue] (6.5,1.8) node {\color{black}$AB$};
\fill [color=blue] (8,1.5) circle (1.5pt);
\draw[color=blue] (8.,1.8) node {\color{black}$CDEF$};

\fill [color=blue] (9.5,1.5) circle (1.5pt);
\draw[color=blue] (9.5,1.8) node {\color{black}$ABCDEF$};
\fill [color=white] (9.5,1) circle (1.5pt);
\end{scriptsize}
\end{tikzpicture}
}}
{\caption{A $2$-contraction sequence of the graph depicted on the top left; the sequence consists of $6$ trigraphs and ends at the bottom right. \vspace{1cm}~
\label{fig:seq}
}}
\end{figure}

Let us now fix a contraction sequence $C = (G = G_1,\ldots,G_n)$.
For each $i \in [n]$, we associate each vertex $u \in V(G_i)$ with a set $\beta(u, i) \seq V(G)$, called the \emph{bag} of $u$, which contains all vertices contracted into $u$. \ifshort ($\clubsuit$)\fi
\iflong
Formally, we define the bags as follows:

\begin{itemize}
\item for each $u \in V(G)$, $\beta(u, 1) := \{u\}$;
\item for $i \in [n-1]$, if $w$ is the new vertex in $G_{i+1}$ obtained by contracting $u$ and $v$, then $\beta(w, i+1) := \beta(u, i) \cup \beta(v,i)$; otherwise, $\beta(w, i+1) := \beta(w, i)$. 
\end{itemize}
\fi

Note that if a vertex $u$ appears in multiple trigraphs in $C$, then its bag is the same in all of them, and so we may denote the bag of $u$ simply by $\beta(u)$.
Let us fix $i, j \in [n]$, $i \le j$.
If $u \in V(G_i)$, $v \in V(G_j)$, and $\beta(u) \seq \beta(v)$, then we say that $u$ is \emph{an ancestor} of $v$ in $G_i$ and $v$ is \emph{the descendant} of $u$ in $G_j$ (clearly, the descendant is unique).
If $H$ is an induced subtrigraph of $G_i$, then $u \in V(G_j)$ is a \emph{descendant} of $H$ if it is a descendant of at least one vertex of $H$, and we say that $u \in V(G_i)$ is \emph{contracted to $H$ in $G_j$} if $u$ is an ancestor of a descendent of $H$ in $G_j$.
A contraction of $u, v \in V(G_j)$ into $uv \in V(G_{j+1})$ \emph{involves} $w \in V(G_i)$ if $w$ is an ancestor of $uv$. 

\iflong
The following definition provides terminology that allows us to partition a contraction sequence into ``steps'' based on contractions between certain vertices in the original graph.

\begin{definition}\label{def:restriction}
Let $C$ be a contraction sequence of a trigraph $G$, and let $H$ be an induced subgraph of $G$ with $|V(H)| = m$.
For $i \in [m-1]$, let $C\ang i_H$ be the trigraph in $C$ obtained by the $i$-th contraction between two descendants of $H$, and let $C\ang 0_H = G$. For $i\in [m-1]$, let $U_i$ and $W_i$ be the bags of the vertices which are contracted into the new vertex of $C\ang i_H$.

A contraction sequence $C[H] = (H = H_1, \ldots, H_m)$ is the \emph{restriction of $C$ to $H$} if for each $i \in [m-1]$, $H_{i+1}$ is obtained from $H_{i}$ by contracting the two vertices $u, w \in V(H_{i})$ such that $\beta(u) = U_i \cap V(H)$ and $\beta(w) = W_i \cap V(H)$.
\end{definition}
\fi

Next, we introduce a notion that will be useful when dealing with reduction rules in the context of computing contraction sequences.

\begin{definition}\label{def:effective}
Let $G, G'$ be trigraphs. We say that the twin-width of $G'$ is \emph{effectively} at most the twin-width of $G$, denoted $\tww(G') \le_e \tww(G)$, if (1) $\tww(G') \le \tww(G)$ and (2) given a contraction sequence $C$ of $G$, a contraction sequence $C'$ of $G'$ of width at most $w(C)$ can be constructed in polynomial time.
If $\tww(G') \le_e \tww(G)$ and $\tww(G) \le_e \tww(G')$, then we say that the two graphs have \emph{effectively} the same twin-width, $\tww(G') =_e \tww(G)$.
\end{definition}

We say that $G'$ is a \emph{pseudoinduced} subtrigraph of $G$ if $G'$ is obtained from an induced subtrigraph of $G$ by the removal of red edges or their replacement with black edges.

\iflong
\begin{observation}
\fi
\ifshort
\begin{observation}[$\clubsuit$]
\fi
\label{obs:induced}
If $G'$ is a pseudoinduced subtrigraph of $G$, then $\tww(G') \le_e \tww(G)$.
\end{observation}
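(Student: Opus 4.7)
The plan is to prove the statement by explicit simulation: given any contraction sequence $C = (G = G_1, \ldots, G_n)$ of $G$, I will construct a contraction sequence $C' = (G' = G'_1, \ldots, G'_m)$ of $G'$ by projecting $C$ onto the vertex set $V(G') \subseteq V(G)$. For each step of $C$ that contracts $u, v \in V(G_i)$ into $w = uv \in V(G_{i+1})$, if both $\beta(u) \cap V(G')$ and $\beta(v) \cap V(G')$ are non-empty, the simulation performs the corresponding contraction in $G'_i$ between the (unique) vertices $u', v'$ whose bags (taken in the $G'$-sequence) equal $\beta(u) \cap V(G')$ and $\beta(v) \cap V(G')$; otherwise, the step is skipped. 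A short induction shows that this maintains the invariant that each $G'_i$ has a vertex $u'$ with bag $\beta(u) \cap V(G')$ for every $u \in V(G_i)$ with $\beta(u) \cap V(G') \neq \emptyset$, and that the final trigraph $G'_m$ is a single vertex whose bag is $V(G')$. This construction is clearly polynomial-time.

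The core step is then to verify that $w(C') \le w(C)$, which will follow from the structural claim that for every red edge $u'v' \in R(G'_i)$, the corresponding pair $u, v \in V(G_i)$ satisfies $uv \in R(G_i)$. This immediately implies that the red degree of $u'$ in $G'_i$ is bounded by the red degree of $u$ in $G_i$, so in particular $w(C') \le w(C)$. To establish the claim, I will argue contrapositively: assume $uv \notin R(G_i)$, so that by the definition of a contraction sequence the pair $(\beta(u), \beta(v))$ is homogeneous in $G$ (either there are only black edges between them, or there are no edges at all). Since $B_{u'} := \beta(u) \cap V(G') \subseteq \beta(u)$ and analogously for $B_{v'}$, the same homogeneity holds between $B_{u'}$ and $B_{v'}$ inside the induced subtrigraph $G[V(G')]$. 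The definition of a pseudoinduced subtrigraph only allows red edges to be deleted or recolored to black; black edges and non-edges are preserved verbatim. Hence $(B_{u'}, B_{v'})$ remains homogeneous in $G'$, which means no red edge appears between $u'$ and $v'$ in $G'_i$, as desired.

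The only mild bookkeeping issue is that the projected sequence may contain repeated trigraphs (whenever a skipped step is performed); these are removed to obtain a genuine contraction sequence $C'$ in the sense of the preliminaries. There is no real obstacle here, as the argument is essentially a direct bookkeeping exercise once one sees that black edges and non-edges are monotonically preserved by both the subtrigraph operation and the recoloring operation allowed in the definition of \emph{pseudoinduced}. Combined with the polynomial-time simulation, this yields $\tww(G') \le_e \tww(G)$.
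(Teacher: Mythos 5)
Your proof is correct and takes essentially the same approach as the paper, which simply invokes the ``restriction of $C$ to $G[V(G')]$'' from Definition~\ref{def:restriction} --- exactly the projection you construct. Your homogeneity argument spells out explicitly why this restriction has width at most $w(C)$ and why it remains valid when red edges of the induced subtrigraph are removed or recolored to form the pseudoinduced $G'$; the paper leaves both points implicit.
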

\iflong \begin{proof}
Given a contraction sequence $C$ of $G$, simply take the restriction of $C$ to $G[V(G')]$.
\end{proof} \fi

\smallskip
\noindent \textbf{Preliminary Observations and Remarks.}\quad
We begin by stating a simple brute-force algorithm for computing twin-width.

\iflong
\begin{observation}
\fi
\ifshort
\begin{observation}[$\clubsuit$]
\fi
\label{obs:brute-force}
An optimal contraction sequence of an $n$-vertex graph can be computed 
in time $2^{\bigoh(n\cdot \log n)}$.
\end{observation}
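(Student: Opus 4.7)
The plan is to obtain the claimed running time by a direct brute-force enumeration of all contraction sequences, without attempting any clever dynamic programming. First I would observe that a contraction sequence of an $n$-vertex graph has exactly $n-1$ contractions, since each contraction reduces the vertex count by one. In the $i$-th step (for $i \in [n-1]$), the current trigraph has $n-i+1$ vertices, so there are at most $\binom{n-i+1}{2} \le n^2$ choices of which pair to contract next.

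Multiplying these choices across all steps gives at most $(n^2)^{n-1} \le n^{2n}$ possible contraction sequences, which is bounded by $2^{\bigoh(n \log n)}$. I would then enumerate all such sequences, and for each one simulate the contractions one by one, maintaining the current trigraph (together with its red edges) explicitly. Each individual contraction can be performed in time polynomial in $n$ by directly applying the definition from Section~\ref{sec:prelim}, and the maximum red degree seen during the simulation can be recorded on the fly, yielding the width of that sequence in polynomial time.

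Finally, I would output a sequence of minimum width among all enumerated sequences; by definition this is an optimal contraction sequence and has width $\tww(G)$. The total running time is bounded by the number of candidate sequences times the per-sequence cost, giving $2^{\bigoh(n \log n)} \cdot n^{\bigoh(1)} = 2^{\bigoh(n \log n)}$. There is no real obstacle here: the only subtlety is confirming that the red-edge rules in the definition of contraction can indeed be applied in polynomial time per step, which is immediate since each rule only inspects the neighborhoods of the two contracted vertices.
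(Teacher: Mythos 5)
Your proposal is correct and follows essentially the same brute-force enumeration argument as the paper: bound the number of contraction sequences by $(n^2)^{n-1} \le 2^{\bigoh(n\log n)}$ and note that the width of each can be computed in polynomial time. No meaningful differences.
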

\iflong \begin{proof}
Each contraction sequence is defined by $n-1$ choices of a pair of vertices, and so the number of contraction sequences is $\bigoh((n^2)^n) = \bigoh(2^{2n\cdot \log n}) \le 2^{\bigoh(n\cdot \log n)}$. Moreover, computing the width of a contraction sequence can clearly be done in polynomial time.
\end{proof} \fi

The following observation not only establishes the twin-width of trees---which form a baseline case for our algorithms---but also notes that the necessary contractions are almost entirely independent of $r$.

\begin{observation}[{\cite[Section~3]{BonnetKTW22}}]\label{obs:contract-trees}
For any rooted tree $T$ with root $r$, there is a contraction sequence $C$ of $T$ of width at most $2$ such that the only contraction involving $r$ is the very last contraction in $C$.
\end{observation}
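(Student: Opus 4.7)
The plan is to proceed by induction on $|V(T)|$. The base case $|V(T)| = 1$ is immediate since the contraction sequence is just $(T)$ and contains no contractions; for $|V(T)| = 2$, the single contraction between $r$ and the unique other vertex is trivially the last (and only) one. For the inductive step I aim to perform a sequence of contractions, none involving $r$, that reduces $T$ to a smaller rooted tree at $r$ to which the hypothesis can be re-applied.

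The core reduction is local. Pick a deepest leaf $\ell$ with parent $p$; by the depth-maximality of $\ell$, every child of $p$ is a leaf. First, contract the children of $p$ pairwise: each such contraction involves two degree-one vertices whose only (black) neighbor is $p$, so the resulting vertex simply has a black edge to $p$ and no red edges are introduced. This eventually collapses all children of $p$ into a single leaf $c^*$ adjacent to $p$ by a black edge. If $p \neq r$, I then contract $c^*$ with $p$: since $c^*$'s only neighbor was $p$, the new vertex acquires a red edge to each other neighbor of $p$; for a freshly reduced tree this is only $p$'s parent $p'$, yielding exactly one new red edge. The result is a smaller tree rooted at $r$ (with a single red edge attached to $p'$), to which the inductive hypothesis can be applied---provided the accumulation of red edges across iterations does not push the width above $2$.

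The main obstacle is precisely this red-edge accumulation: iterating the reduction can make an internal vertex $q$ (including $r$) red-adjacent to several summary vertices, potentially breaching the width bound. To handle this, I would strengthen the inductive hypothesis to apply to trigraphs consisting of a tree plus a limited number of incident red edges, and interleave the leaf-processing with \emph{summary-merges}: whenever a vertex $q$ becomes red-adjacent to two summary vertices, I contract those two summaries first. Since both summaries are red-adjacent to $q$ and are otherwise separated from the remaining tree, their contraction produces only a single red edge back to $q$, restoring $q$'s red degree to $1$ and introducing no new red neighbors elsewhere. By scheduling these merges greedily---always merging $r$'s red summaries before a third could accumulate---every vertex maintains red degree at most $2$ throughout.

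Iterating the leaf-pruning step, the summary-merges, and the inductive hypothesis reduces $T$ to a two-vertex trigraph $\{r, c^*\}$ without any intermediate contraction involving $r$; the concluding contraction of $c^*$ with $r$ is then the unique contraction involving $r$, as required.
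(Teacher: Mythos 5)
The paper cites this observation (to Section~3 of the original twin-width paper of Bonnet, Kim, Thomass\'e, and Watrigant) rather than reproving it, and your bottom-up ``deepest leaf'' argument with eager merging of summary leaves reconstructs essentially the same approach as that source. A few small slips worth repairing: once the strengthened induction hypothesis admits red pendant leaves, the merged vertex $c^*$ need not be joined to $p$ by a \emph{black} edge---if one of $p$'s children was already a red summary, the edge $c^*p$ is red---though this does not affect the subsequent contraction of $c^*$ into $p$; the strengthened invariant should be stated explicitly (for instance, ``the trigraph is a tree whose red edges are pendant, with at most one red pendant summary leaf per vertex''), since the greedy-scheduling claim at the end can only be verified against a precise invariant; and the phrase ``always merging $r$'s red summaries'' is presumably a slip for ``$q$'s.''
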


Next, we recall that a $1$-contraction sequence can be computed in polynomial time not only on graphs, but also on trigraphs with at most one red edge.

\begin{theorem}[{\cite[Section 7]{DBLP:journals/algorithmica/BonnetKRTW22}}]
\label{thm:tww1-deciding}
If $G$ is a trigraph with at most one red edge, then it can be decided in polynomial time whether the twin-width of $G$ is at most 1. In the positive case, the algorithm also returns an optimal contraction sequence of $G$.
\end{theorem}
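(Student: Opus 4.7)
The plan is to design a polynomial-time iterative algorithm that builds a $1$-contraction sequence of $G$ one contraction at a time, or certifies that none exists. The algorithm maintains the invariant that the current trigraph $H$ has maximum red degree at most $1$ (equivalently, its red edges form a matching). This invariant holds initially because $G$ has at most one red edge, and it must continue to hold throughout any $1$-contraction sequence by the definition of width.

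The crux of the argument is a \emph{safety lemma}: if $H$ is a trigraph with $\tww(H) \le 1$ whose red edges form a matching, then among the polynomially many pairs $(u,v)$ whose contraction would not raise the maximum red degree above $1$, at least one \emph{identifiable} pair $(u^\star, v^\star)$ also satisfies $\tww(H') \le 1$ where $H'$ is the resulting trigraph. To prove this, I would fix an arbitrary optimal $1$-contraction sequence $C = (H_1, \ldots, H_m)$ of $H$ and analyze its first contraction. Then I would establish, via an exchange argument, that one can always advance to the front of $C$ a contraction chosen by a \emph{local priority rule}—for example: first try contracting a pair of true or false twins, then try contracting the endpoints of an existing red edge, and otherwise try contracting two vertices whose symmetric difference of neighborhoods has size exactly one. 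Each of these three cases corresponds to a different type of admissible first move, and the exchange argument would show that rearranging the original sequence to begin with such a move never increases its width.

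The main obstacle is the exchange argument underlying the safety lemma: two contractions in a $1$-contraction sequence do not commute in general, because they may share endpoints, act on overlapping neighborhoods, or interact through the evolving red matching. The proof will thus require a careful case analysis distinguishing whether the two pairs being swapped are vertex-disjoint, share one vertex, or are connected through the red matching, and showing in each case that the swap preserves the matching-of-red-edges invariant and the width bound of $1$. The cases involving the unique red edge of the input (or its descendants later in the sequence) are the most delicate, since the red edge must be ``resolved'' by some contraction involving its endpoints, which constrains which swaps are legal.

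Given the safety lemma, the algorithm is straightforward: at each iteration, enumerate all $O(n^2)$ candidate pairs, apply the prioritization rule of the safety lemma, and check locally whether the contraction keeps the maximum red degree at most $1$; commit to the first valid pair found. The trigraph shrinks by one vertex per iteration, so the algorithm terminates in $O(n)$ iterations for a total polynomial running time. If at some iteration no admissible pair exists, the safety lemma guarantees $\tww(G) > 1$; otherwise, the sequence of committed contractions is the desired optimal $1$-contraction sequence of $G$.
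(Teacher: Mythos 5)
The paper does not prove this statement itself---it imports it from Bonnet, Kim, Reinald, Thomass\'e and Watrigant (Algorithmica 2022, Section~7), so your proposal is best read as an independent attempt. As such it has a genuine gap: the entire argument hinges on your ``safety lemma,'' namely that among all locally admissible contractions (those keeping the red degree at most $1$) the one selected by your three-tier priority rule can always be promoted to the front of \emph{some} witnessing $1$-contraction sequence. You explicitly flag the exchange argument for this lemma as ``the main obstacle'' and leave it as ``a careful case analysis'' to be done later, but that case analysis \emph{is} the theorem; without it there is nothing that rules out the greedy choice wandering into a dead end. It is also not evident that your particular priority (true/false twins, then red-edge endpoints, then symmetric-difference-one pairs) is the right one: the standard commuting argument that lets you front-load a contraction of \emph{genuine} twins does not transfer to near-twins, because the single vertex on which a near-twin pair disagrees may itself be an endpoint of a later scheduled contraction or of a red edge, and merging the pair early can permanently destroy the structure that a later step of the hypothetical witness sequence relied on. Likewise, contracting the two endpoints of a red edge is not obviously safe to prioritize merely because it is locally admissible. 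None of these interactions are resolved in the proposal.

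For what it is worth, the cited source does not argue via a greedy exchange at all; it works from a structural description of the class of trigraphs of twin-width at most~$1$ (building on the twin-width-$0$/cograph case and modular structure) and reads off both the decision procedure and the explicit sequence from that characterization. So even if your exchange lemma could be completed, the resulting proof would take a different route from the source, and the missing lemma is exactly where the new mathematical content would have to live. As written, the proposal correctly isolates the sub-problem that must be solved but does not solve it, so it is a proof plan rather than a proof.
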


Finally, we formalize the claim made in Section~\ref{sec:intro} that there 
can be no ``simple'' twin-width preserving reduction rule for handling long dangling paths; in particular, any such rule for simplifying dangling paths cannot depend purely on the length of the path itself.

\iflong
\begin{proposition}
\fi
\ifshort
\begin{proposition}[$\clubsuit$]
\fi
\label{prop:dangling}
For every integer $c\geq 1$, there exists a graph $G^c$ with the following properties: (1) $G^c$ contains a dangling path $P$ of length $c$, (2) $\tww(G^c)\geq 5$, and (3) the graph obtained by subdividing one edge in $P$ (i.e., replacing $P$ with a path $P'$ whose length is $c+1$) has twin-width at most $4$.
\end{proposition}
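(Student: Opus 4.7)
The plan is to exhibit, for each $c \geq 1$, an explicit graph $G^c$ consisting of a small fixed ``core'' gadget $H$ together with a single dangling path $P$ of length $c$ attached at a designated vertex of $H$. Since condition~(1) will be immediate from the construction, the whole content of the proposition reduces to proving the two matching bounds in~(2) and~(3). Crucially, the gadget cannot simply be an induced subgraph of twin-width $5$ sitting inside $G^c$: if it were, then subdividing an edge in $P$ would leave it as an induced subtrigraph of $G^{c+1}$, and by \Cref{obs:induced} the twin-width would remain at least $5$, contradicting~(3). Hence the $\geq 5$ lower bound must come from a genuine \emph{interaction} between $H$ and the length of $P$, and subdividing one edge of $P$ must break this interaction via a parity or modular shift.

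For the construction, I would design $H$ so that any low-width contraction sequence of $G^c$ is forced to sweep the dangling path from its free endpoint towards $H$ while simultaneously performing a specific sequence of contractions inside $H$. The core $H$ should be chosen so that (i) the only way to avoid creating a red vertex of degree $5$ is to synchronize these two processes in lockstep, and (ii) whether such a synchronization is possible depends on the length of $P$ modulo some small integer. A natural template is to take $H$ to contain two (or more) substructures that can only be merged cheaply after exactly a certain number of path-contractions have happened; a concrete realization can be built from two copies of a dense gadget connected through the attachment vertex of $P$.

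For the lower bound $\tww(G^c) \geq 5$, I would argue by contradiction: assume a contraction sequence $C$ of width at most $4$ exists, and trace the behaviour of $C$ restricted to $P$ using \Cref{def:restriction}. A case analysis on when the free endpoint of $P$ is first involved in a contraction, combined with a red-degree bookkeeping argument along $P$, should show that synchronizing the path-sweep with the required contractions inside $H$ forces some vertex to accumulate red degree at least $5$ precisely because of the length $c$. This is the most delicate part of the proof: one needs to rule out all potentially subtle interleavings of path contractions and core contractions, which is typically done by extracting a ``canonical form'' for $C$ and then checking finitely many types of first contractions inside $H$.

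For the upper bound on $G^{c+1}$, I would exhibit an explicit $4$-contraction sequence: after the subdivision, the length of $P'$ now matches the modular condition required for synchronization, so one can process $P'$ inward from the free endpoint while simultaneously and in phase performing the contractions in $H$, with a direct verification that every intermediate trigraph has maximum red degree $4$. The main obstacle, as noted above, is designing a gadget $H$ rigid enough to force the lower bound to $5$ while still being flexible enough, after a single subdivision, to admit the promised $4$-sequence; once such $H$ is identified, both bounds follow from a finite case analysis that is uniform in $c$.
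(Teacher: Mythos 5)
Your proposal is a plan for a proof, not a proof: it never actually constructs the gadget $H$, and you explicitly flag the construction as ``the main obstacle.'' The two tasks you defer to that unconstructed gadget---forcing red degree $5$ for all contraction sequences of $G^c$ via a synchronization argument, and exhibiting a $4$-sequence after one subdivision---are precisely the content of the proposition, so nothing is established. Your preliminary observation (that the $\geq 5$ lower bound cannot come from an induced subgraph independent of $|P|$, since such a subgraph would survive the subdivision and contradict (3) by Observation~\ref{obs:induced}) is correct and shows you understand the constraint, but it only explains why a naive gadget fails; it does not produce a working one, and designing such a parity-sensitive gadget with provable bounds in both directions is genuinely difficult.

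The paper avoids any explicit gadget construction by a threshold (intermediate-value) argument over subdivisions of $K_n$: by a result of Bonnet, Geniet, Kim, Thomass\'e, and Watrigant, the $c$-th subdivision $G_1$ of $K_n$ has twin-width at least $5$ once $n$ is large enough (exponential in $c$), while by a result of Berg\'e, Bonnet, and D\'epr\'es, the $(2\log n)$-th subdivision $G_2$ of $K_n$ has twin-width at most $4$. Progressively subdividing one edge of a dangling path at a time transforms $G_1$ into $G_2$, so somewhere along the way the twin-width drops from $\geq 5$ to $\leq 4$; the graph just before this drop is taken to be $G^c$. This is non-constructive and uses heavy external machinery, but it sidesteps the delicate case analysis your approach would require. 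If you wish to pursue your direction, the missing (and hard) step is to produce an explicit $H$ together with complete arguments for both the lower bound on $G^c$ and the upper bound after subdivision; as it stands, there is a genuine gap at exactly that point.
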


\iflong \begin{proof}
By setting $d:=4$ and $k:=c$ in Proposition~28 stated in the second paper on twin-width~\cite{DBLP:conf/soda/BonnetGKTW21}, we obtain that the $c$-th subdivision $G_1$ of $K_n$ for some sufficient large $n$ (where $n$ is exponential in $c$) has twin-width at least $5$. At the same time, it is known that the $2\log n$-th subdivision $G_2$ of $K_n$ has twin-width at most $4$~\cite[Theorem 2]{DBLP:conf/icalp/BergeBD22}. Thus, if we look at the sequence of graphs obtained from $G_1$ by progressively subdividing the dangling paths between the initial vertices of the $K_n$ until we obtain $G_2$, there must exist a final graph in this sequence which has twin-width $5$. We take this graph as $G^c$, and observe that it clearly satisfies all the stated requirements.
  \end{proof} \fi
  
  Finally, we remark that throughout the paper, we assume the input graph $G$ to be connected; this is without loss of generality, since otherwise one can handle each of the graph's connected components separately.

\section{Cutting Down a Forest}\label{sec:cutting-forest}

After computing a minimum feedback edge set of an input graph $G$, the first task on our route towards Theorems~\ref{thm:tww2} and~\ref{thm:tww-3+} is to devise reduction rules which can safely deal with dangling trees. While it is not possible to delete such trees entirely while preserving twin-width, we show that they can be safely ``cut down''; in particular, depending on the structure of the tree it can be replaced by one of the two kinds of stumps defined below.

\begin{definition}
Let $G$ be a trigraph, and let $u,v,w \in V(G)$. We say that $u$ has a \emph{half stump} $v$ if $uv \in E(G)$, and the degree of $v$ in $G$ is 1. We say that $u$ has a \emph{red} (resp. \emph{black}) \emph{stump} $vw$ if $uv \in E(G)$ and $vw \in R(G)$ (resp. $vw \in E(G)$), and the degrees of $v$ and $w$ are 2 and 1, respectively. Half, red, and black stumps are collectively called \emph{stumps}. The stump $vw$ (or $v$) then \emph{belongs} to $u$.
\end{definition}

We begin by observing that special kinds of subtrees---specifically, stars---can be safely replaced with just a black stump.

\iflong
\begin{observation}
\fi
\ifshort
\begin{observation}[$\clubsuit$]
\fi
\label{obs:onestar}
Let $G$ be a trigraph with a dangling tree $T$ connected to the rest of the graph via the bridge $e=uv$ where $v\in V(T)$. Assume that $T$ is a black star consisting of at least one vertex other than $v$. 
Then $G$ has effectively the same twin-width as the trigraph $G'$ obtained from $G-T$ by adding a black stump to $u$.
\end{observation}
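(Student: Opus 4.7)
The plan is to establish the effective equivalence $\tww(G) =_e \tww(G')$ by proving the two inequalities $\tww(G') \le_e \tww(G)$ and $\tww(G) \le_e \tww(G')$ separately. I interpret the setup in the natural way where $v$ is the center of the black star $T$, so that $T$ consists of $v$ together with leaves $\ell_1, \ldots, \ell_m$ ($m \ge 1$) all connected to $v$ by black edges. In particular, near the bridge $G$ looks locally like $u - v - \ell_i$, while $G'$ has the black stump $u - v' - w'$.

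For $\tww(G') \le_e \tww(G)$, I would exhibit $G'$ as a pseudoinduced subtrigraph of $G$ and invoke Observation~\ref{obs:induced}. Concretely, $G[V(G) \setminus \{\ell_2, \ldots, \ell_m\}]$ is isomorphic to $G'$ under the identification $v \leftrightarrow v'$, $\ell_1 \leftrightarrow w'$: the rest of $G$ is unchanged, the bridge edge $uv$ plays the role of $uv'$, and the single retained leaf edge $v\ell_1$ plays the role of the stump edge $v'w'$, both black. The isomorphism is immediate from the induced structure, so the inequality follows from the observation without further work.

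For the converse $\tww(G) \le_e \tww(G')$, given a contraction sequence $C'$ of $G'$, I would prepend $m-1$ twin contractions to $C'$ to first transform $G$ into a trigraph isomorphic to $G'$, and then run $C'$. The leaves $\ell_1, \ldots, \ell_m$ are mutually twins in $G$, since each has precisely the black neighborhood $\{v\}$ and no incident red edges. Contracting two such twins yields a new vertex that is again only black-adjacent to $v$, so no red edges are introduced and the remaining leaves are still mutual twins of the merged vertex. Iterating merges all leaves into a single vertex $\ell^*$, at which point the current trigraph is isomorphic to $G'$ (with $v \leftrightarrow v'$, $\ell^* \leftrightarrow w'$), and the remainder of $C'$ can be applied verbatim. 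Since the prepended contractions create no red edges and the red edges of $G$ coincide exactly with those of $G'$ (the star and the stump are both black, and nothing else changes), the total width of the constructed sequence equals $w(C')$.

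The main obstacle here is conceptual rather than technical: one must recognize that the blackness hypothesis on $T$ is precisely what makes its leaves true (non-red) twins in $G$, which in turn is what allows the $m-1$ preliminary contractions to be carried out for free in terms of red degree. Once this is noticed, both directions reduce to standard manipulations—the induced-subtrigraph observation for the forward direction and a twin-merging prefix for the backward direction—and the identification of vertex roles via the star center pins down the isomorphism cleanly on both sides.
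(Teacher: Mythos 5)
Your proof is correct and takes essentially the same route as the paper: the forward inequality is obtained by recognizing $G'$ as an induced (hence pseudoinduced) subtrigraph of $G$ and invoking Observation~\ref{obs:induced}, and the reverse inequality is obtained by first merging the $m$ leaves of $v$—which are twins because $T$ is a black star—into a single vertex via $m-1$ cost-free contractions and then following $C'$.
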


\iflong \begin{proof}
Notice that $\tww(G') \le_e \tww(G)$ by Observation~\ref{obs:induced}. For the converse, let us consider a contraction sequence $C'$ of $G'$. We define a contraction sequence $C$ of $G$ of width at most $w(C')$. First, $C$ contracts all neighbors of $v$ in $T$ (which are all twins) into a single vertex. The obtained trigraph is isomorphic to $G'$, and so from now on, $C$ follows $C'$. This establishes that $\tww(G) \le_e \tww(G')$, as desired.
  \end{proof} \fi

Next, we show that all dangling trees not covered by Observation~\ref{obs:onestar} can be safely cut down to a red stump, as long as the trigraph obtained by the cutting has twin-width at least $2$ (a condition that will be handled later on). The proof of this case is significantly more difficult than the previous one.

\iflong \begin{lemma} \fi \ifshort \begin{lemma}[$\clubsuit$]\fi
\label{lem:onebigtree}
Let $G$ be a trigraph with a bridge $e=uv$ such that the connected component $T$ of $G-\{e\}$ containing $v$ is a black dangling tree which contains a vertex at distance $2$ from $v$ and let $G'$ be the trigraph obtained from $G - T$ by adding a red stump to $u$. If $\tww(G) \ge 2$ and $\tww(G') \ge 2$, then $G$ and $G'$ have effectively the same twin-width.
\end{lemma}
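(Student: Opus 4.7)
The plan is to prove both inequalities $\tww(G)\leq_e \tww(G')$ and $\tww(G')\leq_e \tww(G)$ comprising $\tww(G)=_e\tww(G')$.

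\textbf{Direction $\tww(G)\leq_e\tww(G')$.} Given a contraction sequence $C'$ of $G'$ of width $\alpha$ (where $\alpha\ge 2$ because of the hypothesis $\tww(G')\ge 2$), I would prepend a short partial tree-contraction of $T-\{v\}$. Observation~\ref{obs:contract-trees} applied to $T$ rooted at $v$ yields a contraction sequence of $T$ of width at most $2$ whose only contraction involving $v$ is the last one. Executing all but this last step on $G$ collapses $T-\{v\}$ into a single vertex $w^*$; the distance-$2$ hypothesis guarantees that at least one contraction inside $T-\{v\}$ creates a red edge between $v$ and the descending subtree, so the edge $vw^*$ ends up being red and the resulting trigraph is isomorphic to $G'$ under $v\leftrightarrow v'$, $w^*\leftrightarrow w'$. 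Continuing with $C'$ then produces a contraction sequence of $G$ of width $\max(2,\alpha)=\alpha$.

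\textbf{Direction $\tww(G')\leq_e\tww(G)$.} Given a contraction sequence $C$ of $G$ of width $\alpha\ge 2$, I would fix $v^*$ to be a child of $v$ in $T$ that itself has at least one further child (this exists by the distance-$2$ hypothesis), and set $H=G[(V(G)\setminus V(T))\cup\{v,v^*\}]$. The restriction $C[H]$ has width at most $\alpha$ by Observation~\ref{obs:induced}, and $H$ is identical to $G'$ (under the identification $v\leftrightarrow v'$, $v^*\leftrightarrow w'$) except that the edge $vv^*$ is black in $H$ while the corresponding edge $v'w'$ is red in $G'$. I would then apply the contractions of $C[H]$ one-by-one to $G'$ to obtain a candidate contraction sequence $C_{G'}$ of $G'$, and bound its width.

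\textbf{Main obstacle.} The technical heart of the proof is showing $w(C_{G'})\le \alpha$ in spite of the colour discrepancy on the edge $vv^*$ versus $v'w'$. Inductively the trigraphs produced by $C[H]$ and $C_{G'}$ coincide except possibly at the edge joining the current bag of $v$ and the current bag of $v^*$; since $v^*$ has degree one in $H$, as soon as a contraction of $C[H]$ involves $v^*$'s ancestor with any third vertex, the edge becomes red in the $H$-trigraph too, and from that moment the two sequences evolve identically. While the discrepancy persists, it adds at most one red edge to each of the ``stump'' bags in $G'_i$: when the bag of $v'$ is still the singleton $\{v'\}$, its degree in $G'_i$ is at most $2\le\alpha$, which suffices. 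The delicate case is when $v^*$'s bag is contracted with some other $H$-vertex bag (notably $u$'s bag) before $v$ has been touched; here I would charge the extra red edge in $G'_i$ against red edges that $C$ necessarily carries in the corresponding step, namely edges from $v^*$'s ancestor in $C$ to the still-separate ``$T$-only'' bags containing $v^*$'s children in $T$. Because $v^*$ has at least one child by the distance-$2$ hypothesis and these children are invisible to the restriction $C[H]$, they contribute at least $+1$ to the red degree of the analogous bag in $C$ but $0$ in $C[H]$, furnishing exactly the slack $w(C)\ge w(C[H])+1$ required to absorb the excess. Verifying this charging argument across all contraction patterns that $C$ may exhibit (including cases where $u$'s or $v^*$'s $C$-bag has already absorbed various $T$-only or outside vertices) is the technical crux.
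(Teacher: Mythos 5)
Your approach matches the paper's in its overall skeleton: the first direction is handled exactly as in the paper by prepending the tree-contraction from Observation~\ref{obs:contract-trees}, and the second direction sets up the same restricted sequence (your $H$ is the paper's $G^-$, your $v^*$ is the paper's $w$) and observes that the only discrepancy with $G'$ is the colour of the single edge $vv^*$. You also correctly identify that the discrepancy resolves itself harmlessly in most contraction patterns, and that the delicate case is when the bag of $v^*$ is contracted with the bag of $u$ while $v$ is still untouched.

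However, your resolution of that delicate case contains a real gap. You propose to \emph{charge} the extra red edge in the $G'$-sequence against the red edges that $C$ necessarily carries from $v^*$'s $C$-ancestor to the ``still-separate $T$-only'' bags of $v^*$'s children. This works only when those children really are still separate from every $H$-bag; that is exactly the paper's first sub-case (where the descendant $x^*$ of a child $x$ of $w$ is not yet contracted to $H$), and there the argument goes through precisely as you describe. But you flag the possibility that $u$'s or $v^*$'s $C$-bag ``has already absorbed various $T$-only vertices'' as merely something to verify. In that situation there is nothing left to charge against: if $x^*$ has already been merged into an $H$-bag, the contraction of $u$ and $w$ in $C$ produces no extra red edge from $u^*$ into $T$, so $w(C)$ need not exceed $w(C[H])$ by one, and the charging argument simply fails. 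The paper handles this by abandoning the charging strategy entirely and constructing a \emph{new} contraction sequence $C''$: it identifies the step $j$ at which $x^*$ is first absorbed into some $H$-descendant $y$, preemptively contracts $w$ with $y$ at that point (so $w$ takes over the role $x^*$ played in $C$), then contracts $v$ to $u$, and shows the resulting trigraph is a pseudoinduced subtrigraph of the corresponding trigraph of $C$, after which one finishes by Observation~\ref{obs:induced}. This rerouting idea is the missing ingredient; without it, the remaining sub-case is not a matter of bookkeeping but of a structurally different construction, and the proposed charging argument cannot close it.
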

\ifshort
\begin{proof}[Proof Sketch]
We begin by establishing $\tww(G) \le_e \tww(G')$. Let $C'$ be a contraction sequence of $G'$, and we will show how to construct a contraction sequence $C = (G = G_0, G_1,\ldots)$ of $G$ of width at most $w(C')$. %
$C$ starts with contracting $T$, following the contraction sequence $C_T$ given by Observation~\ref{obs:contract-trees} (with $v$ as the root), but stops before the first contraction involving $v$ (which is the very last contraction in $C_T$).
Let $G_i$ be the obtained trigraph. By definition of $C_T$, no vertex of $G_j$, $j \in [i]$, exceeds the bound on its red degree since $w(C_T) \le 2 \le \tww(G')$.
$G_i$ is isomorphic to $G'$ and so from here on, $C$ follows~$C'$. %

Our task in the remainder of the proof will be to establish $\tww(G') \le_e \tww(G)$.
Let $C = (G, G_1,\ldots)$ be a contraction sequence of $G$, and let $w, x \in V(T) \setminus\{v\}$ be such that $vw, wx \in E(T)$.
We need to construct a contraction sequence of $G'$ of width at most $w(C)$; note that there can be vertices with red degree up to 2 in this desired sequence since $w(C) \ge \tww(G)\ge 2$.
Let $G^- := G[V(H)\cup\{v, w\}]$.
Observe that the only difference between $G'$ and $G^-$ is the color of the edge $vw$ (it is red in $G'$, but black in $G^-$) and let $C^- = (G^- = G^-_0, G_1^-, \ldots, G_{m}^-)$ be the restriction of $C$ to $G^-$; hence $w(C^-)\leq w(C)$.
Let us consider the contraction sequence $C' = (G' = G'_0, G_1', \ldots, G_m')$ of $G'$ which follows $C^-$ in each step (i.e., $V(G_i^-) = V(G_i')$ for all $i \in [m]$). To avoid any confusion, we explicitly note that it is not possible to rule out $w(C')>w(C^-)$. We complete the proof by performing a case distinction that will allow us to either guarantee $w(C') \le w(C^-)$, or---in the most difficult case---construct a new contraction sequence $C''$ or $G'$ such that $w(C'') \le w(C)$.
\end{proof}
\fi
\iflong \begin{proof}
We begin by establishing $\tww(G) \le_e \tww(G')$. Let $C'$ be a contraction sequence of $G'$, and we will show how to construct a contraction sequence $C = (G = G_0, G_1,\ldots)$ of $G$ of width at most $w(C')$. %
$C$ starts with contracting $T$, following the contraction sequence $C_T$ given by Observation~\ref{obs:contract-trees} (with $v$ as the root), but stops before the first contraction involving $v$ (which is the very last contraction in $C_T$).
Let $G_i$ be the obtained trigraph. By definition of $C_T$, no vertex of $G_j$, $j \in [i]$, has too high red degree since $w(C_T) \le 2 \le \tww(G')$.
Observe that $G_i$ is isomorphic to $G'$, and so from now on, $C$ follows $C'$. %

Our task in the remainder of the proof will be to establish $\tww(G') \le_e \tww(G)$.
Let $C = (G, G_1,\ldots)$ be a contraction sequence of $G$, and let $w, x \in V(T) \setminus\{v\}$ be such that $vw, wx \in E(T)$.
We need to construct a contraction sequence of $G'$ of width at most $w(C)$; note that there can be vertices with red degree up to 2 in this desired sequence since $w(C) \ge \tww(G)\ge 2$.
Let $G^- := G[V(H)\cup\{v, w\}]$.
Observe that the only difference between $G'$ and $G^-$ is the color of the edge $vw$ (it is red in $G'$, but black in $G^-$) and let $C^- = (G^- = G^-_0, G_1^-, \ldots, G_{m}^-)$ be the restriction of $C$ to $G^-$; hence $w(C^-)\leq w(C)$.
Let us consider the contraction sequence $C' = (G' = G'_0, G_1', \ldots, G_m')$ of $G'$ which follows $C^-$ in each step (i.e., $V(G_i^-) = V(G_i')$ for all $i \in [m]$). To avoid any confusion, we explicitly remark that it could happen that $w(C')>w(C^-)$. We complete the proof by performing a case distinction that will allow us to either guarantee $w(C') \le w(C^-)$, or---in the most difficult case---construct a new contraction sequence $C''$ or $G'$ such that $w(C'') \le w(C)$.

Let $i \in [m]$ be the smallest index s.t.\ $G_i'$ (or, equivalently, $G_i^-$) is obtained by a contraction involving at least one element of $\{u,v,w\}$.
Observe that for all $j \in [i-1]$, all descendants of $H$ have the same red degree in $G_j'$ as in $G_j^-$, and $v$ and $w$ have red degree exactly 1 in $G_j'$.
Let $y \in V(G_{i-1}')\setminus\{u,v,w\}$.
If $G_i'$ is obtained by contracting $v$ and $w$, $u$ and $v$, $v$ and $y$, or $w$ and $y$, then $G_i' = G_i^-$ because the edge $vw$ either disappears (in the first case) or becomes red (in the latter three cases). By induction, $G_j' = G_j^-$ for all $j \in [i, m]$. Thus, in these four cases, we have $w(C') \le w(C^-) \le w(C)$.

Suppose $G_i'$ is obtained by contracting $u$ and $y$. In this case, all descendants of $H$ have the same red degree in $G_i'$ as in $G_i^-$, $v$ has red degree $2$, and $w$ has red degree 1 in $G_i'$ (which, as argued above, does not affect $w(C')$).
Now let $j \in [i+1, m]$ be the smallest index such that $G_j'$ is obtained by a contraction involving $v$ or $w$. Observe that $G_j'$ is obtained by contracting $v$ and $w$, $v$ and $z$, or $w$ and $z$ for some $z \in V(G_{j-1}')\setminus\{v,w\}$.
In all these case, $G_j' = G_j^-$; indeed, similarly as in the cases described in the previous paragraph, the edge $vw$ either becomes red in $G_j^-$ or disappears in both trigraphs (notice that in the case of contracting $w$ with the descendant of $u$, this holds only thanks to the previous contraction of $u$ and $y$). Thus, in this case we also conclude that $w(C') \le w(C^-)$, as desired.

Finally, suppose $G_i'$ is obtained by contracting $u$ and $w$ into a new vertex which we call $u^*$. This case is more involved because $u^*v$ is a red edge in $G_i'$ but black in $G_i^-$, and the red degree of $u^*$ in $G_i^-$ could be as large as $\tww(G_i^-)$.
Let $d$ be the red degree of $u^*$ in $G_{i}^-$, and observe that the red degree of $u^*$ in $G_{i}'$ is $d+1$. %
We now recall the notation from Definition~\ref{def:restriction}, and turn our attention back to the original contraction sequence $C$: let $C\ang j := C\ang j _{H'}$ for all $j \in [0, m]$, where $H' := G[V(H) \cup \{v, w\}]$. Moreover, recalling that $w$ has a neighbor $x$ distinct from $v$, we will now consider the descendant $x^*$ of $x$ in $C\ang i$.

First, suppose $x^*$ is not contracted to any vertex of $H$ in $C \ang i$. This means that the red degree of $u^*$ in $C\ang i$ is $d+1$ because the contraction of $u$ and $w$ creates a red edge $u^*x$ in $C\ang i$ which is not present in $G_{i}^-$. Thus, $d+1 \le w(C) = \tww(G)$.
We prove that $w(C') \le \max\{w(C'), d+1)\} \le w(C)$.
Let $j \in [i+1, m]$ be the smallest index such that $G_j'$ is obtained by a contraction involving $u^*$ or $v$. Notice that $G_j' = G_j^-$ because the edge $u^*v$, which is black in all trigraphs of $C^-$ before $G_j^-$, becomes red in $G_j^-$ or disappears.
Thus, it remains only to deal with trigraphs $G_k'$ for $k \in [i, j-1]$. However, the only vertices which have higher red degree in $G_k'$ than in $G_k^-$ are $v$, which has red degree 1 in $G_k'$, and $u^*$, which has red degree at most $d+1$ since $u^*$ has no black neighbors in $G_i'$, and so the red degree of $u^*$ may not increase by contractions not involving $u^*$. 

Second, suppose $x^*$ is contracted to some vertex of $H$ in $C \ang i$. Let $j \in [i-1]$ be the largest index such that $x^*$ is not contracted to $H$ in $C \ang j$. 
Let $y \in V(C \ang j)$ be the descendant of $H$ with which $x^*$ is contracted in $C\ang {j+1}$. We define a new contraction sequence $C'' = (G' = G''_0, G''_1, \ldots)$ of $G'$ as follows: 

\begin{enumerate}
\item for the first $j$ contractions, $C''$ follows $C'$ (or, equivalently, $C^-$). 
\item Then, $G''_{j+1}$ is obtained by contracting $w$ with $y$ (or, more precisely, with the unique vertex which is the descendant of the same vertices of $H$ in $G''_j$ as $y$ in $C \ang j$; we ignore this subtle difference in the rest of the proof as it does not impact any of the arguments). Informally, we view this contraction as being ``inserted'' into $C'$, and now $C''$ is shifted by one relative to $C'$. Formally, for $k \in [j+1, i-1]$ (whereas this set may be empty), $G''_{k+1}$ is obtained from $G''_k$ by performing the same contraction (between descendants of $H$ by definition of $i$) as the one performed to obtain $G'_k$ from $G'_{k-1}$. 
\item Next, $G''_{i+1}$ is obtained by contracting $v$ to $u$. Observe that $G''_{i+1}$ is a pseudoinduced subtrigraph of $C \ang i$ (informally, $w$ played the role originally played by $x^*$, and $v$ played the role originally played by $w$).
\item Finally, we invoke Observation~\ref{obs:induced} and complete $C''$ by following the restriction of $C$ from $C \ang i$ onward.
\end{enumerate}
To argue correctness, it suffices to consider the red degrees in $G''_k$ for $k \in [j+1, i]$. In $G''_k$, $v$ has red degree 1 (its only red neighbor is the descendant of $w$), %
and all descendants of $H$ have a red degree which is upper-bounded by the red degree of their counterparts in $C \ang {k-1}$ by construction (we note that the red degrees may be higher on the latter because in $C \ang {k-1}$, many vertices of $T$ may be contracted to $H$, whereas in $G''_k$, only $y$ has received such a contraction---albeit from $w$ instead of $x^*$). 
Thus, $w(C'') \le w(C)$, as desired.

The six cases above are exhaustive and in each case we have identified a contraction sequence (either $C'$ or $C''$) of width at most $w(C)$. Moreover, for each direction of establishing that the effective twin-width of the two trigraphs is the same, the proof guarantees the existence of a polynomial-time algorithm for constructing the relevant contraction sequences (assuming that the sequence for the other graph is provided). Therefore, $G$ has effectively the same twin-width as the trigraph $G'$.
  \end{proof} \fi

Let us now provide some intuition on how we aim to apply Lemma~\ref{lem:onebigtree} (a formalization is provided in the proof of Theorem~\ref{thm:pruning} at the end of this section). Assume without loss of generality that the input graph $G$ has twin-width at least $2$. The first time we want to apply Lemma~\ref{lem:onebigtree} to go from $G$ to $G'$, we can verify that $\tww(G')\geq 2$ by Theorem~\ref{thm:tww1-deciding}; if this check fails then we show how to construct a $2$-contraction sequence of $G$, and otherwise we replace $T$ with a red stump as per the lemma statement.
For every subsequent application of Lemma~\ref{lem:onebigtree}, $G'$ will have two red edges and hence we cannot rely on Theorem~\ref{thm:tww1-deciding} anymore---but instead, we can guarantee that the condition on $G'$ holds by establishing the following lemma.

\iflong \begin{lemma} \fi \ifshort \begin{lemma}[$\clubsuit$]\fi
\label{lem:twostumps}
Let $G$ be a connected trigraph with two red stumps. Then $\tww(G)\geq 2$.
\end{lemma}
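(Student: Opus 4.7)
The plan is to suppose for contradiction that $G$ admits a $1$-contraction sequence $C = (G = G_1, \ldots, G_n)$ and derive a configuration forcing some vertex to have red degree at least~$2$. The starting observation is that in any trigraph of maximum red degree at most one, the red edges form a matching.

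For each $i \in \{1, 2\}$ and each step $j$, let $\beta_i^j$ denote the bag of $G_j$ containing $w_i$. The central invariant I would prove by induction on $j$ is the following: whenever $\beta_1^j \neq \beta_2^j$, each $\beta_i^j$ has degree exactly one in $G_j$ with its unique incident edge being red (to some partner $\pi_i^j$), and the four vertices $\beta_1^j, \pi_1^j, \beta_2^j, \pi_2^j$ are pairwise distinct. The base case is immediate from the disjoint red-stump structure of $G$. For the inductive step I would do a case analysis on the pair $(p, q)$ contracted at step $j+1$, split according to their membership in $R := \{\beta_1^j, \pi_1^j\} \cup \{\beta_2^j, \pi_2^j\}$; the crucial observation is that any \emph{cross} contraction pairing a vertex of $\{\beta_1^j, \pi_1^j\}$ with one of $\{\beta_2^j, \pi_2^j\}$ makes the new bag inherit (via rule~(3)) red edges to both remaining partners, forcing red degree~$2$ and violating $1$-validity. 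All remaining $1$-valid sub-cases either leave the invariant directly intact or update $\pi_i^j$ in a controlled way that preserves the pendant form and distinctness, with the connectedness of $G$ used to rule out degenerate configurations.

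The contradiction then follows quickly: in a $1$-valid sequence, the only contraction that can turn $\beta_1^j \neq \beta_2^j$ into $\beta_1^{j+1} = \beta_2^{j+1}$ is one that directly merges the two bags $\beta_1^j$ and $\beta_2^j$, which is precisely the cross contraction ruled out above. Hence $\beta_1^j \neq \beta_2^j$ for every $j$, contradicting the fact that $G_n$ is a single vertex with $\beta_1^n = \beta_2^n$. Therefore no $1$-contraction sequence of $G$ exists, so $\tww(G) \geq 2$.

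The main obstacle I anticipate is the inductive step, in particular the sub-cases where the contracted pair contains exactly one element of $\{\pi_1^j, \pi_2^j\}$ paired with a vertex outside $R$: here one must verify simultaneously that the new bag remains a valid red partner of $\beta_i^j$, and that no new red edge is inadvertently created at $\beta_{3-i}^j$. Both checks lean on the red-pendant form of $\beta_{3-i}^j$ (whose sole neighbor is $\pi_{3-i}^j$) together with the matching property to conclude that the outside vertex cannot have been adjacent to $\beta_{3-i}^j$ without already forcing a $1$-validity violation at an earlier step.
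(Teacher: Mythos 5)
Your proof is correct, but it takes a genuinely different route from the paper's. The paper first reduces to a simpler subgraph: using connectivity, it extracts the induced path $w_1$--$v_1$--$\cdots$--$v_2$--$w_2$ between the two stump-leaves (on at least $5$ vertices, with red first and last edges), notes via Observation~\ref{obs:induced} that an induced subtrigraph cannot have higher twin-width, and then proves a clean auxiliary lemma by induction on path length: any path on $p\ge 3$ vertices whose first and last edges are red has twin-width at least $2$ (the only width-$1$ first contraction shortens the path by one while preserving the red-endpoint structure). Your proof instead works directly in $G$ with a hypothetical $1$-contraction sequence and maintains an explicit invariant on the bags $\beta_1^j,\beta_2^j$ of the two leaves: as long as they are distinct, each is a red pendant whose partners $\pi_1^j,\pi_2^j$ are distinct from everything in sight, and every contraction that could merge the two sides forces red degree $2$. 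This is a ``two tokens cannot meet'' argument, and I checked the main sub-cases you flag as delicate ($p=\pi_i^j$ with $q\notin R$, and $p=\beta_i^j,q=\pi_i^j$): they work out as you anticipate, using $1$-validity plus connectivity of $G_j$ to pin down the local structure. The tradeoff is that the paper's approach yields a reusable lemma about red-ended paths and has slightly less casework, while yours is more self-contained (it does not rely on the subtrigraph observation) at the cost of a more intricate invariant. One small clarity issue: you never define $w_1,w_2$; the reader has to infer they are the degree-$1$ leaves of the two red stumps (which are vertex-disjoint, as a short check confirms), so this should be stated explicitly in a final write-up.
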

\iflong \begin{proof}
Let us first remark that, due to $G$ being connected, it contains an induced path (possibly on one vertex) between the two vertices the stumps belong to, and thus contains an induced path whose first and last edge is red and which consists of at least $5$ vertices (we add the stumps at both ends of the initial path). This means that we only need to lower bound the twin-width of such a path to prove the claimed result.

We now prove by induction that any path on $p\geq 3$ vertices whose first and last edge is red has twin-width at least $2$. 
The base case with $p=3$ is trivial, since the initial trigraph has a vertex of red degree $2$. 
Let $p\geq 4$, consider the first contraction in an optimal contraction sequence. It is easy to check that the only contractions not creating a vertex with red degree at least $2$ are contracting one end-point of the path with its unique neighbor. However, the trigraph obtained after this first contraction is exactly the path on $p-1$ vertices with the first and last edge being red, and the statement thus holds by the induction hypothesis.
  \end{proof} \fi

With Observation~\ref{obs:onestar} and Lemmas~\ref{lem:onebigtree}-\ref{lem:twostumps}, we can effectively preprocess (tri)graphs by ``cutting down'' all dangling trees (i.e., replacing them with stumps). However, we will also need to deal with the fact that a vertex could now be connected to many distinct stumps. For half stumps this is not an issue, as multiple half stumps can be assumed to be contracted into a single half stump due to the vertices in them being twins. For all pairs of other kinds of stumps (except for a pair consisting of a half and a black stump), we show that it is sufficient to replace these with a single stump instead.

\iflong
\begin{observation}
\fi
\ifshort
\begin{observation}[$\clubsuit$]
\fi
\label{obs:redstumps}
Let $G'$ be a trigraph such that $\tww(G') \ge 2$, let $u \in V(G')$ be a vertex with a red stump $S$ in $G'$, and let $G$ be a trigraph obtained from $G'$ by adding an additional stump to $u$. Then $G'$ and $G$ have effectively the same twin-width.
\end{observation}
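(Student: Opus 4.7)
The plan is to establish the observation by proving the two inequalities separately. The direction $\tww(G') \le_e \tww(G)$ is immediate from Observation~\ref{obs:induced}, since $G'$ is an induced subtrigraph of $G$ (obtained by deleting the vertices of the additional stump). For the converse, given an arbitrary contraction sequence $C'$ of $G'$---which necessarily satisfies $w(C') \ge \tww(G') \ge 2$---I would build a contraction sequence $C$ of $G$ of width at most $w(C')$ by prepending a short initial prefix of contractions that merges the added stump into the existing red stump $S = vw$, arriving at a trigraph isomorphic to $G'$, and then following $C'$ from that point on.

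The construction of the initial prefix splits into cases according to the type of the added stump. If it is a half stump $v'$ (with $uv' \in E(G)$), a single contraction of $v$ with $v'$ suffices: both vertices have a single black edge to $u$, which yields a black edge from the merged vertex to $u$, and the existing red edge $vw$ persists, so the merged vertex has red degree exactly $1$ while no other red degree in the graph changes. In the case of a red or black stump $v'w'$ (with $uv' \in E(G)$ and $v'w'$ the second edge of the stump), I would prepend two contractions: first merge $v$ with $v'$, then merge $w$ with $w'$. The contraction rules imply that after the first step the merged vertex has a black edge to $u$ and red edges to both $w$ and $w'$---the edge to $w'$ being red either because $v'w'$ was already red, or because only $v'$ (and not $v$) has a black edge to $w'$, which triggers the creation of a red edge. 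The second contraction then merges the two red-degree-$1$ vertices $w$ and $w'$ into a single vertex whose only edge is red to $\{v,v'\}$. In either case the resulting trigraph is isomorphic to $G'$.

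The main point to verify is that the maximum red degree throughout the initial prefix stays within $w(C')$. Aside from the merged vertex $\{v,v'\}$, which transiently acquires red degree $2$ after the first contraction in the two-vertex stump case, no vertex has its red degree changed relative to $G'$; the hypothesis $\tww(G') \ge 2$ is used exactly here, to accommodate this transient red degree. Concatenating the prefix with $C'$ then yields a contraction sequence of $G$ of width at most $w(C')$, and since the prefix has at most two contractions and can clearly be produced in polynomial time from $C'$, this establishes $\tww(G) \le_e \tww(G')$ and hence $\tww(G) =_e \tww(G')$. I do not anticipate any serious obstacle beyond the careful bookkeeping of the contraction rules in the two-vertex stump case.
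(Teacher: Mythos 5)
Your proposal is correct and follows essentially the same approach as the paper: one direction via the induced-subtrigraph observation, the other by prepending a short contraction prefix that merges the added stump into the existing red stump $S$, then following $C'$. The only (immaterial) difference is the order of the two contractions in the two-vertex-stump case: the paper contracts the outer vertices ($w_0$ with $w$) first and then the inner ones ($v_0$ with $v$), placing the transient red degree $2$ on the merged outer vertex, whereas you contract inner-first, placing it on $\{v,v'\}$; both stay within the bound since $\tww(G')\ge 2$.
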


\iflong \begin{proof}
The fact that $\tww(G')\le_e \tww(G)$ follows immediately from the fact that $G'$ is an induced subtrigraph of $G$. For the converse, let us consider an arbitrary contraction sequence $C'$ of $G'$, and let the red stump $S$ consist of a vertex $v$ adjacent to $u$ and a vertex $w$ adjacent to $v$. We construct a contraction sequence $C$ of $G$ such that $w(C')=w(C)$ as follows. If $G$ was created by adding a half stump to $u$, we contract it into $v$. Otherwise, $G$ was created from $G'$ by adding a pendant vertex $v_0$ to $u$ and then another pendant vertex $w_0$ to $v_0$. In this case, we contract $w_0$ into $w$ and then immediately contract $v_0$ into $v$. None of these contractions result in a red degree greater than $2$. At this point, we obtain a trigraph isomorphic to $G'$ and hence simply follow $C'$.
  \end{proof} \fi

\iflong \begin{lemma} \fi \ifshort \begin{lemma}[$\clubsuit$]\fi\label{lem:two-black-one-red}
Let $G'$ be a trigraph, let $u \in V(G')$ be a vertex with a red stump $S$ in $G'$, and let $G$ be a trigraph obtained from $G'$ by removing $S$ and adding two black stumps to $u$. If $\tww(G) \ge 2$ and $\tww(G') \ge 2$, then $G'$ and $G$ have effectively the same twin-width.
\end{lemma}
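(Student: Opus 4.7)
The proof will establish the two directions of effective equality separately.

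\textbf{Direction $\tww(G) \le_e \tww(G')$.} Given a contraction sequence $C'$ of $G'$, the plan is to construct $C$ for $G$ by prepending two contractions: first contract $v_1$ with $v_2$, producing a new vertex with a black edge to $u$ and red edges to both $w_1$ and $w_2$ (red degree $2$), then contract $w_1$ with $w_2$, producing a new vertex with a single red edge to the merged $v$-vertex (red degree $1$). The resulting trigraph is isomorphic to $G'$ via the natural bijection, so $C$ then follows $C'$. Because $w(C') \ge \tww(G') \ge 2$, the total width is $\max(2, w(C')) = w(C')$.

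\textbf{Direction $\tww(G') \le_e \tww(G)$.} Here, given $C$ for $G$, the plan is to adapt the restriction-and-simulation strategy used in Lemma~\ref{lem:onebigtree}. Let $G^- := G - \{v_2, w_2\}$, which is $G$ with one of the two black stumps removed, and let $C^-$ be the restriction of $C$ to $G^-$; note $w(C^-) \le w(C)$. Observe that $G^-$ and $G'$ have the same vertex set under the bijection $v_1 \leftrightarrow v$, $w_1 \leftrightarrow w$, and they differ only in the color of the $v_1 w_1$ edge (black in $G^-$, red in $G'$). We define $C'$ by simulating $C^-$ step by step on $G'$, so that $G'_i$ and $G^-_i$ have identical bag structures.

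The main obstacle is bounding $w(C') \le w(C)$. At each step $i$, the bag structures coincide, and the only edges whose color can differ are those between the bag $V^\dagger$ containing $v$ and the bag $W^\dagger$ containing $w$ (when distinct), because the pair $(v, w)$ is red in $G'$ but black in $G^-$. The verification proceeds by a six-case analysis on the first contraction of $C^-$ involving $\{u, v_1, w_1\}$, mirroring that of Lemma~\ref{lem:onebigtree}: in the four cases of contracting $(v_1, w_1)$, $(u, v_1)$, $(v_1, y)$, or $(w_1, y)$ for $y \notin \{u, v_1, w_1\}$, we have $G'_i = G^-_i$ and hence $w(C') \le w(C^-) \le w(C)$. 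In the case of contracting $u$ with some $y \notin \{v_1, w_1\}$, the vertices $v, w$ each gain at most $+1$ in red degree, reaching at most $2 \le w(C)$, while all other bags have identical red degrees in $G'_i$ and $G^-_i$.

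The most involved case, expected to be the main obstacle, is when $C^-$ contracts $u$ with $w_1$: then $\{u,w\}$'s red degree in $G'_i$ is one larger than $\{u,w_1\}$'s in $G^-_i$, which may exceed $w(C^-)$. The plan here parallels Case 6 of Lemma~\ref{lem:onebigtree}, using $v_2$ (the $v$-vertex of the second black stump in $G$) in the role of the ``extra'' vertex $x$. If the descendant of $v_2$ in the corresponding trigraph of $C$ has not been merged with any $H$-vertex by step $i$, then the contraction of $u$ with $w_1$ in $C$ itself creates a red edge from $\{u,w_1\}$ to $v_2$'s bag, so the red degree of $\{u,w_1\}$ in $G_i$ equals that of $\{u,w\}$ in $G'_i$, giving $w(G'_i) \le w(C)$. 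Otherwise, we follow the template of Lemma~\ref{lem:onebigtree} and construct a modified sequence $C''$ for $G'$ in which $w$ takes on the role played by $v_2$'s merging partner and $v$ takes on the role played by $w$, thereby achieving width at most $w(C)$.
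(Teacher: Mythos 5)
Your first direction and the first five cases of the second direction track the paper's proof almost verbatim (the paper contracts the two $w$-endpoints of the black stumps first and then the two $v$-endpoints, but the reverse order you chose also works), so the only substantive divergence is your sixth case, and there the two proofs part ways.

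The paper handles the sixth case without splitting into sub-cases. It first records a harmless relabeling assumption: the first contraction in $C$ touching any of the four stump vertices involves one of the two stump vertices that survive in $G^-$. It then builds $C''$ by replaying $C^-$ for $i-1$ steps, at step $i$ contracting $u$ with $v$ \emph{instead of} $u$ with $w$, and then arguing that $G''_i$ is (isomorphic to) a pseudoinduced subtrigraph of $G_i$ under the map that sends $w$ to $v_1$ (the other stump's $v$-vertex); after $u$ merges with $w_0$ in $C$, the edge $u^*v_1$ is red while $u^*v_0$ is black and all other $u^*$-edges are red, which is exactly what contracting $u$ with $v$ produces in $C''$. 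The relabeling assumption is what guarantees $v_1$ is still a clean, untouched stump vertex at that moment, so the map works and $C''$ can finish as the restriction of $C$ to $V(H)\cup\{v_1\}$. No analogue of your sub-case B ever arises.

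Your proposal instead transplants the two-sub-case analysis from Lemma~\ref{lem:onebigtree} (distinguishing whether $v_2$'s descendant has merged with $H$ yet), treating $v_2$ as playing the role of $x$. Sub-case A is fine and matches the reasoning there. Sub-case B, however, does not transplant cleanly: in Lemma~\ref{lem:onebigtree} the serial chain $u{-}v{-}w{-}x$ lets the red stump $u{-}v{-}w$ in $G'$ ``shift down'' to $u{-}w{-}x$ in $G$ (which is what inserting the $w$-$y$ contraction and later contracting $v$ into $u$ exploits), but here $v_2$ is adjacent to $u$, not to $w_1$, so the stumps sit in parallel rather than in series and there is no chain $u{-}?{-}v_2$ for $u{-}v{-}w$ to shift onto. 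Concretely, after the inserted contraction, $wy$'s only stump-side red neighbour is $v$, whereas $v_2y$ in $C$ has its stump-side red neighbour equal to the descendant of $u$; the identification you need would force $v$ and $u$ to map to the same vertex, so the pseudoinduced-subtrigraph argument breaks down as stated. The cleaner fix is exactly the paper's: invoke the relabeling assumption up front (after which sub-case B cannot occur) and, in the remaining case, swap the step-$i$ contraction to $u$-$v$ and let $w$ impersonate $v_1$.

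Two smaller remarks: in the fifth case you stop after bounding the red degree at step $i$ itself; like the paper, you still need the short follow-up argument that between step $i$ and the next contraction involving $v$ or $w$ the two sequences agree, since that is what ultimately forces $G'_j = G^-_j$ from some point onward. And your sub-case A should say the red degree of the $\{u,w_1\}$-bag in $G_i$ is at \emph{least} $d+1$ (it may be larger if further stump vertices cling to it), which is what you actually need.
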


\iflong \begin{proof}
We begin by establishing $\tww(G)\le_e \tww(G')$, which follows by the same argument as Observation~\ref{obs:redstumps}. Indeed, let us consider an arbitrary contraction sequence $C'$ of $G'$, and let the red stump $S$ consist of a vertex $v$ adjacent to $u$ and a vertex $w$ adjacent to $v$. Let the black stumps of $u$ in $G$ consist of vertices $v_0, w_0, v_1, w_1$ and edges $uv_0,v_0w_0,uv_1,v_1w_1$. We construct a contraction sequence $C$ of $G$ such that $w(C')=w(C)$ by first contracting $w_0$ with $w_1$, then $v_0$ with $v_1$. Neither of these contractions creates a vertex of red degree exceeding $2$. At this point, we notice that the trigraph is isomorphic to $G'$ and hence complete the construction of $C$ by following the contraction sequence $C'$.

We now turn to establishing $\tww(G') \le_e \tww(G)$.
Let $C = (G, G_1,\ldots)$ be a contraction sequence of $G$.
We need to construct a contraction sequence of $G'$ of width at most $w(C)$; note that there can be vertices with red degree up to 2 in this desired sequence since $w(C) \ge \tww(G)\ge 2$.
Our approach now will roughly follow the one we used in the proof of Lemma~\ref{lem:onebigtree}, albeit the details will differ. 
Without loss of generality, we assume that the first contraction in $C$ of a vertex from $\{v_0,w_0,v_1,w_0\}$ contracts at least one vertex from $\{v_0,w_0\}$; indeed, if this is not the case then we can simply swap the names of the two black stumps to ensure the property holds. 
Let $G^- := G[V(H)\cup\{v_0, w_0\}]$, and let us hereinafter identify the vertex $v_0$ (and $w_0$) in $G^-$ with the vertex $v$ (and $w$) in $G'$. Observe that now the only difference between $G'$ and $G^-$ is the color of the edge $vw$ (it is red in $G'$, but black in $G^-$). Let $C^- = (G^- = G^-_0, G_1^-, \ldots, G_{m}^-)$ be the restriction of $C$ to $G^-$.
We now show that either the contraction sequence $C' = (G' = G'_0, G_1', \ldots, G_m')$ of $G'$ which follows $C^-$ in each step (i.e., $V(G_i^-) = V(G_i')$ for all $i \in [m]$) satisfies $w(C') \le w(C^-)\le w(C)$, or we can use $C'$ to construct a new contraction sequence $C''$ of $G'$ such that $w(C'') \le w(C)$.

Let $i \in [m]$ be the smallest index s.t.\ $G_i'$ (or, equivalently, $G_i^-$) is obtained by a contraction involving at least one element of $\{u,v,w\}$.
Observe that for all $j \in [i-1]$, all descendants of $H$ have the same red degree in $G_j'$ as in $G_j^-$, and $v$ and $w$ have red degree exactly 1 in $G_j'$. We now perform a case distinction analogous to the one used in the proof of Lemma~\ref{lem:onebigtree}, whereas the first five cases can be handled in exactly the same way as in that proof (we provide the arguments below only for completeness).

Let $y \in V(G_{i-1}')\setminus\{u,v,w\}$.
If $G_i'$ is obtained by contracting $v$ and $w$, $u$ and $v$, $v$ and $y$, or $w$ and $y$, then $G_i' = G_i^-$ because the edge $vw$ either disappears (in the first case) or becomes red (in the latter three cases). By induction, $G_j' = G_j^-$ for all $j \in [i, m]$. Thus, in these four cases, we have $w(C') \le w(C^-) \le w(C)$. For the fifth case, suppose $G_i'$ is obtained by contracting $u$ and $y$. Here, all descendants of $H$ have the same red degree in $G_i'$ as in $G_i^-$, $v$ has red degree $2$, and $w$ has red degree 1 in $G_i'$ (which, as argued above, does not affect $w(C')$).
Now let $j \in [i+1, m]$ be the smallest index such that $G_j'$ is obtained by a contraction involving $v$ or $w$, and let $z \in V(G_{j-1}')\setminus\{v,w\}$. Observe that $G_j'$ is obtained by contracting $v$ and $w$, $v$ and $z$, or $w$ and $z$.
In all these case, $G_j' = G_j^-$; indeed, similarly as in the cases described in the previous paragraph, the edge $vw$ either becomes red in $G_j^-$ or disappears in both trigraphs (notice that in the case of contracting $w$ with the descendant of $u$, this holds only thanks to the previous contraction of $u$ and $y$). Thus, in this case we also conclude that $w(C') \le w(C^-)$, as desired.

For the final sixth case, suppose $G_i'$ is obtained by contracting $u$ and $w$ into a new vertex which we call $u^*$. This case again requires us to change $C'$, since $u^*v$ is a red edge in $G_i'$ but black in $G_i^-$ and the red degree of $u^*$ in $G_i^-$ could be as large as $\tww(G_i^-)$. To this end, we define a new contraction sequence $C''=(G'=G''_0,G''_1,\dots,G''_m)$ of $G'$ as follows:

\begin{enumerate}
\item for the first $i-1$ contractions, $C''$ follows $C$ (or, equivalently, $C^-$). 
\item Then, instead of contracting $u$ and $w$, the trigraph $G''_{i}$ is obtained by contracting $u$ and $v$ (let us call the obtained vertex $u^*$). Crucially, we now observe that $G''_{i}$ is isomorphic to a pseudoinduced subtrigraph of $G_i$. Indeed, since $G''_{i-1}=G_{i-1}$, we can map descendants of vertices in $V(H)$ in a one-to-one fashion between $G''_i$ and $G_i$, while the vertex $w$ in $G''_i$ is mapped to the vertex $v_1$ in $G_i$. Then the edges of $G''_{i}$ not incident to $u^*$ or $w$ must also exist in $G_{i}[V(H)\cup \{v_1\}]$ due to the fact that $G''_{i-1}=G_{i-1}$, the red edge $u^*w$ in $G''_i$ corresponds to the red edge $u^*v_1$ in $G_{i}[V(H)\cup \{v_1\}]$, and every other edge incident to $u^*$ in $G''_i$ must be red due to the contraction of $u$ with $v$ in $C''$ but is also red in $G_{i}[V(H)\cup \{v_1\}]$ due to the contraction of $u$ with $w$ in $C$.
\item Since $G''_{i}$ is isomorphic to the pseudoinduced subtrigraph of $G_i$ described above, we can complete $C''$ by following a restriction of $C$, specifically to $V(H)\cup \{v_1\}$ by treating $w\in V(G''_i)$ as $v_1\in V(G_i)$.
\end{enumerate}

The above construction ensures that $w(C'')\leq w(C)$. Hence, in all cases we have $\tww(G') \le_e \tww(G)$ and the lemma follows.
  \end{proof} \fi

We conclude this section by formalizing the trigraph that can be obtained through the exhaustive application of the reduction rules arising from Observations~\ref{obs:onestar},~\ref{obs:redstumps} and Lemmas~\ref{lem:onebigtree},~\ref{lem:two-black-one-red}. We say that an induced subtrigraph in $G$ is a \emph{dangling pseudo-path} if it can be obtained from a dangling path $P$ in $G$ by adding, to each of the vertices in $P$, either (a) one red stump or (b) at most one black stump and at most one half-stump.

\begin{definition}\label{def:more-paths}
A connected trigraph $G$ with $\tww(G) \ge 2$ is an $(H,\ca P)$-graph if:

\begin{itemize}
\item the vertices of $G$ are partitioned into the induced subtrigraphs $H$ and $\sump\ca P$, and
\item $\ca P$ is a set of dangling pseudo-paths $\{P_1, \ldots, P_n\}$.
\end{itemize}
\end{definition}

We proceed with some related terminology that will be used extensively in the subsequent sections.
Let $\sump \ca P$ denote the disjoint union of the paths in $\ca P$. A vertex 
$u \in V(H)$ is a \emph{connector in $G$} if $u$ is adjacent to a vertex of $\sump\ca P$ in $G$.
We say that $P \in \ca P$ is \emph{original} if all edges in $P$ that do not belong to a stump are black and the edges connecting the endpoints of $P$ to $H$ are also black.
Later, we will also deal with \emph{tidy} paths in $\ca P$, where $P \in \ca P$ is tidy if $P$ is a dangling red path (i.e., contains no stumps) which additionally satisfies the following three technical conditions for each connector $u \in V(H)$ adjacent to an endpoint $v$ of $P$:\begin{enumerate}
\item $u$ has black degree 0;
\item $v$ is the only neighbor of $u$ in $\sump\ca P$; and
\item $u$ has a unique neighbor $u'$ in $V(H)$ and $u'$ has positive black degree.
\end{enumerate}

We say that an $(H, \ca P)$-graph is \emph{original} (\emph{tidy}) if all paths $P \in \ca P$ are original (tidy, respectively). An illustration of these notions is provided in Figure~\ref{fig:preprocessing}, which also showcases the outcome of applying the culmination of this section---Theorem~\ref{thm:pruning}---on an input graph.

\begin{figure}
\begin{subfigure}{0.3\linewidth}
    \centering
    \rotatebox{90}{
\begin{tikzpicture}[line cap=round,line join=round,>=triangle 45,x=1.0cm,y=1.0cm,scale=0.9]
\clip(2.4,0.5) rectangle (10,4);
\draw [line width=1pt,color=ffxfqq] (4.,1.)-- (5.5,1.);
\draw [line width=1pt,color=ffxfqq] (6.5,1.)-- (8.,1.);
\draw  (4.,1.)-- (3.5,3.);
\draw  (8.,1.)-- (8.5,3.);

\draw  (3.5,3.)-- (8.5,3.);

\draw  (3.5,3.)-- (5.5,1.);
\draw  (6.5,1.)-- (8.5,3.);
\draw  (4.5,3.)-- (5.,3.);
\draw  (7.5,3.)-- (7.,3.);
\draw  (4.,1.)-- (3.,1.5);
\draw  (4.,3.)-- (4.,3.5);
\draw  (8.,1.)-- (9.,1.5);
\draw  (8.,3.)-- (8.,3.5);
\draw  (3.,1.5)-- (3.,2.);
\draw  (3.,1.5)-- (2.5,2.);
\draw  (3.,1.5)-- (2.5,1.5);
\draw  (3.,1.5)-- (2.5,1.);
\draw  (3.,1.5)-- (3.,1.);
\draw  (4.,1.)-- (4.5,1.5);
\draw  (9.,1.5)-- (9.,2.);
\draw  (9.,2.)-- (9.5,2.);
\draw  (9.,2.)-- (9.,2.5);
\draw  (9.5,2.)-- (9.5,1.5);
\draw  (9.5,1.5)-- (9.5,1.);
\draw  (9.,1.5)-- (9.,1.);
\draw  (4.,3.5)-- (3.5,3.5);
\draw  (4.,3.)-- (4.5,2.5);
\draw  (5.,3.)-- (5.,2.5);
\draw  (5.,2.5)-- (5.5,2.5);
\draw  (5.5,2.5)-- (5.5,2.);
\draw  (5.5,2.)-- (6.,2.);

\draw  (4.,3.5)-- (4.5,3.5);
\draw  (5.,3.)-- (5.,3.5);
\draw  (8.,3.5)-- (7.5,3.5);
\draw  (7.5,3.5)-- (7.,3.5);
\draw  (6.5,3.)-- (6.5,2.5);
\draw  (7.,3.)-- (7.,2.5);
\draw  (6.5,2.5)-- (6.5,2.);

\draw  (6.5,3.)-- (6.5,3.5);
\draw  (6,3.5)-- (6.5,3.5);

\foreach \i in {0.5,0.75,...,4.5}
{
    \draw [fill=black] (3.5+\i ,3.) circle (1.75pt);
        
}

\begin{scriptsize}
\draw [fill=black] (4.,1.) circle (2.5pt);
\draw [fill=black] (5.5,1.) circle (2.5pt);
\draw [fill=black] (6.5,1.) circle (2.5pt);
\draw [fill=black] (8.,1.) circle (2.5pt);
\draw [fill=black] (3.5,3.) circle (2.5pt);
\draw [fill=black] (8.5,3.) circle (2.5pt);

\draw [fill=ududff] (3.,1.5) circle (2pt);
\draw [fill=ududff] (4.,3.5) circle (2pt);
\draw [fill=ududff] (9.,1.5) circle (2pt);
\draw [fill=ududff] (8.,3.5) circle (2pt);
\draw [fill=ududff] (3.,2.) circle (2pt);
\draw [fill=ududff] (2.5,2.) circle (2pt);
\draw [fill=ududff] (2.5,1.5) circle (2pt);
\draw [fill=ududff] (2.5,1.) circle (2pt);
\draw [fill=ududff] (3.,1.) circle (2pt);
\draw [fill=ududff] (4.5,1.5) circle (2pt);
\draw [fill=ududff] (9.,2.) circle (2pt);
\draw [fill=ududff] (9.5,2.) circle (2pt);
\draw [fill=ududff] (9.,2.5) circle (2pt);
\draw [fill=ududff] (9.5,1.5) circle (2pt);
\draw [fill=ududff] (9.5,1.) circle (2pt);
\draw [fill=ududff] (9.,1.) circle (2pt);
\draw [fill=ududff] (3.5,3.5) circle (2pt);
\draw [fill=ududff] (4.5,2.5) circle (2pt);
\draw [fill=ududff] (5.,2.5) circle (2pt);
\draw [fill=ududff] (5.5,2.5) circle (2pt);
\draw [fill=ududff] (5.5,2.) circle (2pt);
\draw [fill=ududff] (6.,2.) circle (2pt);

\draw [fill=ududff] (4.5,3.5) circle (2pt);
\draw [fill=ududff] (5.,3.5) circle (2pt);
\draw [fill=ududff] (7.5,3.5) circle (2pt);
\draw [fill=ududff] (7.,3.5) circle (2pt);
\draw [fill=ududff] (6.5,2.5) circle (2pt);
\draw [fill=ududff] (7.,2.5) circle (2pt);
\draw [fill=ududff] (6.5,2.) circle (2pt);

\draw [fill=ududff] (6.5,3.5) circle (2pt);
\draw [fill=ududff] (6,3.5) circle (2pt);
\end{scriptsize}

\end{tikzpicture}
}
    \label{fig:subfig1}
  \end{subfigure}
  \hfill
  \begin{subfigure}{0.3\linewidth}
    \centering
\rotatebox{90}{
\begin{tikzpicture}[line cap=round,line join=round,>=triangle 45,x=1.0cm,y=1.0cm,scale=0.9]
\clip(2.4,0.5) rectangle (10.,4.);

\draw (4.,1.)-- (5.5,1.);
\draw  (6.5,1.)-- (8.,1.);
\draw  (4.,1.)-- (3.5,3.);
\draw  (8.,1.)-- (8.5,3.);
\draw  (3.5,3.)-- (5.5,1.);
\draw  (6.5,1.)-- (8.5,3.);
\draw  (4.,1.)-- (3.5,1.5);
\draw  (4.,3.)-- (4.,3.5);
\draw  (8.,1.)-- (8.5,1.5);
\draw  (8.,3.)-- (8.,3.5);
\draw  (3.5,1.5)-- (3.,2.);
\draw  (4.,1.)-- (4.5,1.5);
\draw [line width=1.pt,color=red] (8.5,1.5)-- (9.,2.);
\draw  (4.,3.)-- (4.5,2.5);
\draw  (5.,3.)-- (5.,2.5);
\draw [line width=1.pt,color=red] (5.,2.5)-- (5.5,2.5);
\draw  (4.,3.5)-- (4.5,3.5);
\draw [line width=1.pt,color=red] (8.,3.5)-- (7.5,3.5);
\draw  (6.5,3.)-- (6.5,2.5);
\draw  (7.,3.)-- (7.,2.5);
\draw[line width=1.pt,color=red]  (6.5,2.5)-- (6.5,2.);

\draw  (3.5,3.)-- (8.5,3.);

\foreach \i in {0.5,0.75,...,4.5}
{
    \draw [fill=gray] (3.5+\i ,3.) circle (1.75pt);
        
}

\begin{scriptsize}
\draw [fill=black] (4.,1.) circle (2.5pt);
\draw [fill=black] (6.5,1.) circle (2.5pt);
\draw [fill=black] (5.5,1.) circle (2.5pt);
\draw [fill=black] (8.,1.) circle (2.5pt);
\draw [fill=black] (3.5,3.) circle (2.5pt);
\draw [fill=black] (8.5,3.) circle (2.5pt);

\draw [fill=ududff] (3.5,1.5) circle (2pt);
\draw [fill=ududff] (4.,3.5) circle (2pt);
\draw [fill=ududff] (8.5,1.5) circle (2pt);
\draw [fill=ududff] (8.,3.5) circle (2pt);
\draw [fill=ududff] (3.,2.) circle (2pt);
\draw [fill=ududff] (4.5,1.5) circle (2pt);
\draw [fill=ududff] (9.,2.) circle (2pt);
\draw [fill=ududff] (4.5,2.5) circle (2pt);
\draw [fill=ududff] (5.,2.5) circle (2pt);
\draw [fill=ududff] (5.5,2.5) circle (2pt);
\draw [fill=ududff] (4.5,3.5) circle (2pt);
\draw [fill=ududff] (7.5,3.5) circle (2pt);
\draw [fill=ududff] (6.5,2.5) circle (2pt);
\draw [fill=ududff] (7.,2.5) circle (2pt);
\draw [fill=ududff] (6.5,2.) circle (2pt);
\end{scriptsize}
\end{tikzpicture}
    }
    \label{fig:subfig2}
  \end{subfigure}
  \hfill
  \begin{subfigure}{0.3\linewidth}
    \centering\rotatebox{90}{
    \begin{tikzpicture}[line cap=round,line join=round,>=triangle 45,x=1.0cm,y=1.0cm,scale=0.9]
\clip(2.4,0.5) rectangle (10.,4.);

\draw (4.,1.)-- (5.5,1.);
\draw  (6.5,1.)-- (8.,1.);
\draw  (4.,1.)-- (3.5,3.);
\draw  (8.,1.)-- (8.5,3.);
\draw  (3.5,3.)-- (5.5,1.);
\draw  (6.5,1.)-- (8.5,3.);
\draw  (4.,1.)-- (3.5,1.5);
\draw  (4.,3.)-- (4.,3.5);
\draw  (8.,1.)-- (8.5,1.5);
\draw  (8.,3.)-- (8.,3.5);
\draw  (3.5,1.5)-- (3.,2.);
\draw  (4.,1.)-- (4.5,1.5);
\draw [line width=1.pt,color=red] (8.5,1.5)-- (9.,2.);
\draw  (4.,3.5)-- (4.5,3.5);
\draw [line width=1.pt,color=red] (8.,3.5)-- (7.5,3.5);
\draw  (4,3.)-- (4.5,2.5);

\draw (3.5,3)-- (4.25,3);
\draw (8.5,3)-- (7.75,3);
\draw [line width=1.pt,color=red] (4.25,3)-- (7.75,3);

\foreach \i in {1.25,1.5,...,3.75}
{
    \draw [fill=green] (3.5+\i ,3.) circle (1.75pt);   
}

\foreach \i in {0,0.25,...,0.5}
{
    \draw [fill=black] (4+\i,3.) circle (1.75pt);
      \draw [fill=black](8-\i,3.) circle (1.75pt); 
}

\begin{scriptsize}
\draw [fill=black] (4.,1.) circle (2.5pt);
\draw [fill=black] (6.5,1.) circle (2.5pt);
\draw [fill=black] (5.5,1.) circle (2.5pt);
\draw [fill=black] (8.,1.) circle (2.5pt);
\draw [fill=black] (3.5,3.) circle (2.5pt);
\draw [fill=black] (8.5,3.) circle (2.5pt);

\draw [fill=black] (4.5,3.5) circle (2pt);
\draw [fill=black] (7.5,3.5) circle (2pt);
\draw [fill=black] (3.5,1.5) circle (2pt);
\draw [fill=black] (4.,3.5) circle (2pt);
\draw [fill=black] (8.5,1.5) circle (2pt);
\draw [fill=black] (8.,3.5) circle (2pt);
\draw [fill=black] (3.,2.) circle (2pt);
\draw [fill=black] (4.5,1.5) circle (2pt);
\draw [fill=black] (9.,2.) circle (2pt);
\draw [fill=black] (4.5,2.5) circle (2pt);

\end{scriptsize}
\end{tikzpicture}
    }
    \label{fig:subfig3}
  \end{subfigure}
\caption{\textbf{Left:} A graph $G$ with feedback edge number two. Feedback edges are orange, vertices in dangling trees are blue. \textbf{Middle:} The original $(H,\ca P)$-graph obtained from $G$ after all dangling trees have been cut down (i.e., the outcome of Theorem~\ref{thm:pruning}). A dangling pseudo-path is depicted via the grey vertices. \textbf{Right:} The tidy $(H,\ca P)$-graph that will later be obtained from the original $(H,\ca P)$-graph by applying Corollary~\ref{cor:no-stumps} at the end of Section~\ref{sec:preprocessing}. Here, $\ca P$ contains a single tidy dangling path which is colored green, and all other vertices lie in $H$.}
\label{fig:preprocessing}
\end{figure}
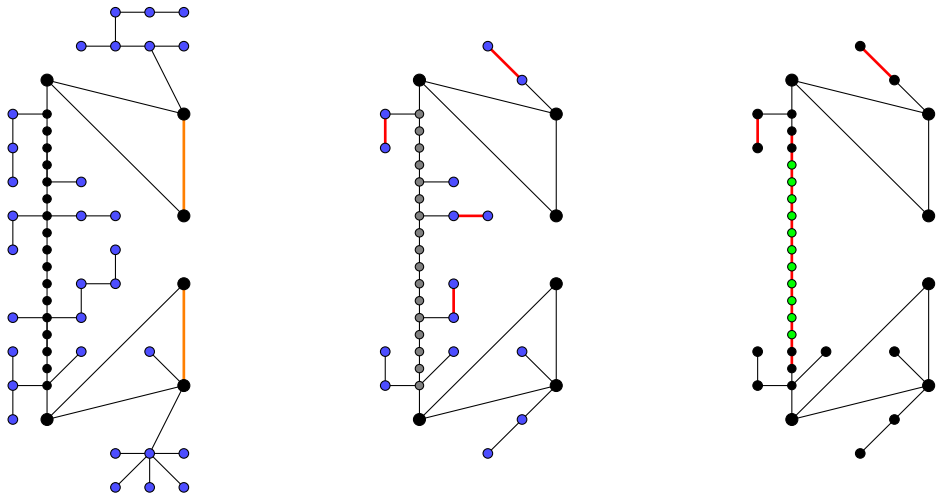

\iflong
\begin{theorem}
\fi
\ifshort
\begin{theorem}[$\clubsuit$]
\fi
\label{thm:pruning}
There is a polynomial-time procedure which takes as input a graph $G$ with feedback edge number $k$ and either outputs an optimal contraction sequence of $G$ of width at most $2$, or an original $(H,\ca P)$-graph $G'$ with effectively the same twin-width as $G$ such that $|V(H)|\leq 16k$ and $|\ca P|\leq 4k$.
\end{theorem}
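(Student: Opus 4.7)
The plan is to combine the reduction rules developed earlier in this section into a single polynomial-time preprocessing routine, whose safety will be ensured by carefully preserving the precondition $\tww \ge 2$ throughout. To begin, I will compute a minimum feedback edge set $F$ in $\bigoh(n)$ time via a spanning-tree construction and set $T := G - F$. After invoking Theorem~\ref{thm:tww1-deciding} on $G$---returning the corresponding $1$-contraction sequence if $\tww(G) \le 1$ is certified---I may assume $\tww(G) \ge 2$.

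In the main phase I will traverse $T$ in a bottom-up manner, processing each maximal dangling tree $T_0$ connected via a bridge $uv$ (with $v \in V(T_0)$) as follows: if $T_0 = \{v\}$, I leave $v$ alone as a half stump of $u$; if $T_0$ is a nontrivial star centered at $v$, I invoke Observation~\ref{obs:onestar} to collapse it into a single black stump; otherwise $T_0$ contains a vertex at distance at least $2$ from $v$ and I apply Lemma~\ref{lem:onebigtree} to replace it by a red stump. The subtle point is verifying the precondition $\tww(G'') \ge 2$ needed by Lemma~\ref{lem:onebigtree}. Since every red edge in an intermediate trigraph lies inside a red stump (such edges are only introduced by Lemma~\ref{lem:onebigtree} itself), the precondition will be verified by counting red stumps: if the current trigraph has at most one red stump I will invoke Theorem~\ref{thm:tww1-deciding}, and in the presence of two or more red stumps the condition follows directly from Lemma~\ref{lem:twostumps}. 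If Theorem~\ref{thm:tww1-deciding} instead returns a $1$-contraction sequence $C''$ of $G''$, I will assemble a $2$-contraction sequence of $G$ by first contracting $T_0$ according to Observation~\ref{obs:contract-trees} (which has width $2$) and then concatenating $C''$, and output the result.

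After the dangling-tree pass, I will consolidate multiple stumps at the same vertex: multiple half stumps will be merged via contracting their twin endpoints, while the remaining configurations will be handled by exhaustive application of Observation~\ref{obs:redstumps} and Lemma~\ref{lem:two-black-one-red}; the $\tww \ge 2$ precondition will again be verified by the same red-stump-counting strategy. At the end of this phase, every vertex of the resulting trigraph $G'$ will carry either at most one red stump, or at most one black stump together with at most one half stump.

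The final task is to define the $(H, \ca P)$-decomposition and bound its size. I will let $A$ be the set of essential vertices of $G'$, namely those of degree at least $3$ in $G'$ after removing all stumps together with endpoints of edges in $F$. A standard Steiner-tree-style argument combined with $|F|=k$ bounds $|A|$ by $4k-2$, and since each vertex of $A$ is associated with at most three stump vertices we obtain $|V(H)| \le 16k$, where $H$ consists of $A$ together with these attached stump vertices. The family $\ca P$ will be the collection of maximal dangling paths of $G'$ between distinct vertices of $A$ (each possibly equipped with stumps on its internal vertices), and counting the edges in the multigraph obtained by contracting each dangling path into a single edge will yield $|\ca P| \le 4k$. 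Originality of the output is immediate because none of the applied rules introduces a red edge outside of a stump. The main obstacle in this construction is orchestrating the rule applications so that the precondition $\tww \ge 2$ remains efficiently verifiable throughout; this will be resolved by the synergy between Theorem~\ref{thm:tww1-deciding} (handling the base case of at most one red edge) and Lemma~\ref{lem:twostumps} (handling every subsequent step via the presence of multiple red stumps).
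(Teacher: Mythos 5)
Your proposal follows essentially the same approach as the paper's proof: both invoke Theorem~\ref{thm:tww1-deciding} up front to dispose of $\tww(G)\le 1$, both exhaustively reduce dangling trees via Observation~\ref{obs:onestar} and Lemma~\ref{lem:onebigtree}, both verify the $\tww\ge 2$ precondition by falling back on Theorem~\ref{thm:tww1-deciding} while at most one red stump is present and switching to Lemma~\ref{lem:twostumps} once two red stumps exist (constructing a $2$-contraction sequence explicitly when the check fails), both consolidate multiple stumps at a vertex via Observation~\ref{obs:redstumps} and Lemma~\ref{lem:two-black-one-red}, and both define $H$ as the feedback-edge endpoints together with the degree-$\ge 3$ vertices of the residual tree plus their attached stump vertices. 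The only differences are cosmetic (a bottom-up traversal instead of exhaustive scanning, and the marginally tighter $4k-2$ bound on the essential vertex set), so the argument is correct and matches the paper.
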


\iflong \begin{proof}
As an initial step, we apply Theorem~\ref{thm:tww1-deciding} to check whether $\tww(G)\ge 2$; if not, we output an optimal contraction sequence for $G$ and terminate. We proceed by computing a feedback edge set $F$ in $G$ in polynomial time, and denoting by $Q$ the set of at most $2k$ many endpoints of the computed feedback edges. Next, we exhaustively check for dangling stars in the graph, and for each such star we apply Observation~\ref{obs:onestar} to replace it with a black stump; the graph $G^\alpha$ obtained in this way has effectively the same twin-width as $G$.

Next, we check whether $G^\alpha$ contains a dangling tree which is not a stump; if we detect such a tree, we expand it into a maximal dangling tree $T$ in $G^\alpha$. We would like replace $T$ with a red stump using Lemma~\ref{lem:onebigtree} to obtain a trigraph $G'$ but this can be done only if $\tww(G') \ge 2$. However, since $G'$ is a trigraph with a single red edge, we can apply Theorem~\ref{thm:tww1-deciding} again to decide whether $\tww(G') \ge 2$.
If $\tww(G') < 2$, we obtain a $2$-contraction sequence $C$ of $G^\alpha$ as follows.
Let $r \in V(T)$ be the endpoint of the bridge connecting $T$ to the rest of $G^\alpha$.
Similarly as in the proof of Lemma~\ref{lem:onebigtree}, $C$ starts with contracting $T$, following the contraction sequence $C_T$ given by Observation~\ref{obs:contract-trees} (with $r$ as the root), but stops before the first contraction involving $r$ (which is the very last contraction in $C_T$). Observe that the obtained trigraph is $G'$. Thus, $C$ can from now on follow the $1$-contraction sequence of $G'$ provided by Theorem~\ref{thm:tww1-deciding}. Since $w(C_T) \le 2$, we indeed have $w(C) \le 2$.

So, we can now safely assume that $\tww(G')\ge 2$ which in turn allows us to invoke Lemma~\ref{lem:onebigtree} and go from $G^\alpha$ to $G'$.
At this point, we exhaustively check for the presence of maximal dangling black trees, and for each tree we replace it with a red stump and proceed; the safeness of this step is guaranteed by Lemma~\ref{lem:twostumps} (which guarantees a lower-bound on the twin-width of the obtained trigraph since it contains at least two red stumps) in combination with Lemma~\ref{lem:onebigtree}. 

Next, we check if there is any vertex $u$ in the resulting trigraph $G^\beta$ which is adjacent to at least $2$ stumps.
If $u$ has a red stump and another stump $S$, we first use Theorem~\ref{thm:tww1-deciding} for our due-diligence check that $G' := G^\beta - S$ still has twin-width at least $2$.
If yes, we use Observation~\ref{obs:redstumps} to remove $S$. 
If not, we construct a $2$-contraction sequence of $G^\beta$ by starting with the $1$-contraction sequence of $G'$ (the descendant of $u$ may now have red degree 2 because of the extra edge to $S$) and then arbitrarily contracting the remaining three vertices.
If none of the stumps belonging to $u$ is red, we contract all half stumps into each other to reduce their number to at most $1$. Finally, if there are at least two black stumps $v_1w_1$ and $v_2w_2$, we use Theorem~\ref{thm:tww1-deciding} for the due-diligence check to ensure that the trigraph $G'$ with these two stumps replaced by a red stump $S$ still has twin-width at least $2$. If yes, we apply Lemma~\ref{lem:two-black-one-red} to replace the two stumps with $S$.
If not, we construct a $2$-sequence of $G$ by first contracting $v_1$ with $v_2$ and $w_1$ with $w_2$; the obtained trigraph is $G'$ and we finish by following a $1$-contraction sequence of $G'$.

After the above exhaustive procedures terminate, we arrive at a trigraph $G^\gamma$ whose vertices can be partitioned into a set of stumps and the remaining vertices, say $V'$. Observe that the trigraph $G^\gamma[V']$ induced on $V'$ satisfies that $G^\gamma[V']-F$ is a tree; in particular, none of the operations carried out above resulted in any contraction involving $V'$, and every leaf in $G^\gamma[V']-F$ must be an element of $Q$. Since $G^\gamma[V']-F$ contains at most $2k$ leaves, it must also contain at most $2k$ many vertices of degree greater than $2$; let $Q'$ denote the union of $Q$ and all of these vertices of degree greater than $2$ in $G^\gamma[V']-F$. Clearly, $|Q'|\leq 4k$. Finally, let $H$ be the union of $Q'$ and the vertices of all stumps that are attached to any vertex in $Q'$. Since a vertex could have at most three stump-vertices (one half stump and one black stump in the worst case) attached to it, we obtain $|H|\leq 16k$. Moreover, all of the at most $4k$ paths between vertices in $H$ are black dangling pseudo-paths and hence satisfy the condition imposed on $\ca P$, concluding the proof.
\end{proof} \fi

\section{Cleanup}
\label{sec:preprocessing}
In the second phase of our proof, our aim is to simplify the instance even further after Theorem~\ref{thm:pruning} in order to reach a trigraph which is ``clean'' enough to support the path reduction rules developed in the next sections. In particular, we will show that all the dangling pseudo-paths arising from Theorem~\ref{thm:pruning} can be safely transformed into red dangling paths, resulting in an $(H,\ca P)$-graph that is tidy as per the definition in the previous section.

We first show how to deal with stumps belonging to a single vertex.

\iflong
\begin{observation}
\fi
\ifshort
\begin{observation}[$\clubsuit$]
\fi
\label{obs:kill-stumps}
Let $G$ be a trigraph, let $u \in V(G)$ be a vertex with a single stump or a black and a half stump, and let $G'$ be the trigraph obtained from $G$ by deleting the stumps belonging to $u$ and making all edges incident to $u$ red. There is a partial contraction sequence from $G$ to $G'$ of width $\max\{d_1 + 1, d_2\}$, where $d_1$ is the red degree of $u$ in $G$ and $d_2$ is the maximum red degree of any vertex in $G'$.
\end{observation}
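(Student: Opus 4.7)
The plan is to exhibit a short explicit partial contraction sequence tailored to each of the four possible configurations of stumps at $u$ (a single half stump; a single black stump; a single red stump; a black stump paired with a half stump), and to verify the red-degree bound by inspection in each case.

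As a preliminary, I would record the simple observation that for any vertex $y$ outside the stumps of $u$, the red degree of $y$ in $G'$ is at least its red degree in $G$, since going from $G$ to $G'$ only adds red edges (by converting black edges at $u$ into red ones) and removes stump vertices. Consequently, any intermediate trigraph in which no new red edge has yet appeared at such a $y$ already respects the bound $d_2$ at $y$; the real bookkeeping is concentrated at $u$ and at the stump vertices being absorbed.

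For a single half stump $v$, a single contraction of $v$ into $u$ produces $G'$ directly: every former neighbor of $u$ becomes red-adjacent to the new vertex (because $v$ has no other neighbor), and its red degree is $\deg_{G'}(u)\le d_2$. For a single black or red stump $vw$ attached to $u$, I would first contract $w$ into $v$, producing a pendant $v'$ that is red-adjacent to $u$ (the red-edge rule fires because only $v$ carries the edge $uv$); this raises the red degree of $u$ to exactly $d_1+1$ and leaves everything else unchanged. A second contraction then absorbs $v'$ into $u$, converting all remaining black incidences of $u$ to red and ending in $G'$ with red degree $\deg_{G'}(u)\le d_2$ at the new vertex. In the mixed case with black stump $v_1w_1$ and half stump $v_2$, I would perform the same first two contractions on the black stump, then contract $v_2$ into the intermediate pendant $v_1'$ (harmless since both are pendants of $u$, so $u$'s red degree stays at $d_1+1$), and finally merge the remaining pendant into $u$.

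The main obstacle and the only subtle point is the mixed case: the naive idea of contracting $v_2$ directly into $u$ at the start would cause $u$'s red degree to jump by $\deg_G(u)-1$ in one shot, which can exceed $\max\{d_1+1,d_2\}$ whenever $u$ has at least three black incidences. Routing $v_2$ through the intermediate pendant $v_1'$ instead confines the spike at $u$ to exactly $d_1+1$, and the terminal absorption into $u$ matches the red degrees in $G'$ exactly. Verifying that the bound $\max\{d_1+1,d_2\}$ holds in each of the four cases then reduces to a direct red-degree calculation on the one to three contractions described.
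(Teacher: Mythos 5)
Your proposal is correct and follows essentially the same strategy as the paper: absorb the stump vertices into a single pendant adjacent to $u$, and then fold that pendant into $u$. The only (cosmetic) difference lies in the mixed half-plus-black case, where the paper first contracts the two stump vertices adjacent to $u$ (turning the pair into a red stump, keeping $u$'s red degree at $d_1$ for one extra step), whereas you first collapse the black stump to a pendant and then absorb the half stump into it; both sequences use three contractions, stay within $\max\{d_1+1,d_2\}$, and terminate in $G'$, so the two are interchangeable.
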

\iflong \begin{proof}
If $u$ has a half stump $v$ and a black stump $wx$, start with contracting $v$ and $w$, creating a red stump.
If $u$ has a black or red stump $vw$, contract $v$ and $w$ into $x$: this adds a new red neighbor to $u$ (that is why the width may be as high as $d_1 + 1$).
Now $u$ has an extra neighbor $x$ not present in $G'$ ($x$ is either the vertex obtained in the previous step or a half stump), and so we conclude by contracting $u$ and $x$, obtaining $G'$. 
\end{proof} \fi

Next, we establish that if a dangling pseudo-path has two consecutive vertices without stumps, the edge between them can be turned red. We remark that this lemma will be applied to subtrigraphs of the considered trigraph, and hence we cannot assume that the trigraph has twin-width at least $2$.

\iflong \begin{lemma} \fi \ifshort \begin{lemma}[$\clubsuit$]\fi\label{lem:change-edge-to-red}
Let $G$ be a trigraph,
let $(u_1, u_2, u_3, u_4)$ be an induced path in $G$ in this order, and assume that the degree of $u_2$ and $u_3$ is $2$ in $G$ and that the degree of $u \in \{u_1,u_4\}$ in $G$ is $3$ if $u$ has a single stump, $4$ if $u$ has a half stump and a black stump, and $2$ otherwise.
Let $G'$ be the trigraph obtained from $G$ by changing the color of the edge $u_2u_3$ to red. 
Given a contraction sequence $C$ of $G$, a contraction sequence $C'$ of $G'$ s.t.\ $w(C') = \max\{2, w(C)\}$ can be constructed in polynomial time.
\end{lemma}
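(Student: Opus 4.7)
If the edge $u_2 u_3$ is already red in $G$, then $G = G'$ and we can simply set $C' := C$, obtaining $w(C') = w(C) \le \max\{2, w(C)\}$. So assume throughout the rest of the plan that $u_2 u_3$ is black in $G$.

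My strategy is to apply the contractions of $C$ to $G'$ to obtain a candidate sequence $C' = (G' = G'_0, G'_1, \ldots, G'_m)$, and to argue that either this already yields width at most $\max\{2, w(C)\}$ or it can be repaired by a single local modification. Let $a_2(i), a_3(i)$ denote the ancestors of $u_2, u_3$ in $G_i$ (equivalently, in $G'_i$). By induction on $i$, I would establish the invariant that $G_i$ and $G'_i$ have the same vertex set and agree on every edge and its color, \emph{except possibly} for the edge $a_2(i) a_3(i)$ (when $a_2(i) \neq a_3(i)$), which is red in $G'_i$ and may still be black in $G_i$. This is because the only initial discrepancy is the color of $u_2 u_3$, and the contraction rule propagates a color discrepancy only along the edge between the ancestors of $u_2$ and $u_3$; once $a_2(i) = a_3(i)$, the trigraphs coincide.

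From the invariant, only $a_2(i)$ and $a_3(i)$ can have a different red degree in $G'_i$ than in $G_i$, and the excess is at most $1$. Consequently, the naive sequence satisfies $w(C') \le w(C) + 1$ unconditionally, which already delivers the desired bound whenever $w(C) \le 1$. For $w(C) \ge 2$, the bound can be violated only at a step $i$ where the edge $a_2(i) a_3(i)$ is black in $G_i$ and some $a_r(i) \in \{a_2(i), a_3(i)\}$ has red degree exactly $w(C)$ in $G_i$. To handle this, I plan a case analysis on the first step $k$ of $C$ that involves a descendant of $u_2$ or $u_3$, analogous to the proofs of Lemmas~\ref{lem:onebigtree} and~\ref{lem:two-black-one-red}. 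Before step $k$, the ancestors coincide with $u_2$ and $u_3$ themselves, each of degree $2$ in $G$ and hence of red degree at most $2$ in $G'_i$; no violation is possible. If step $k$ directly contracts $u_2$ with $u_3$, then $G_\ell = G'_\ell$ from step $k$ onward. If step $k$ contracts $u_2$ (symmetrically $u_3$) with a vertex $v$ whose bag in $G_{k-1}$ does \emph{not} contain $u_4$, then the new edge between the enlarged bag and $u_3$ becomes red in both $G_k$ and $G'_k$, permanently erasing the discrepancy.

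The genuinely delicate sub-case is when $v$'s bag in $G_{k-1}$ contains $u_4$ (so the contraction merges the ancestors of $u_2$ and $u_4$), since then the edge $a_2(k) u_3$ may remain black and the discrepancy may persist. I expect this to be the main obstacle: to handle it, I would use the degree condition on $u_4$ from the hypothesis (at most two non-stump neighbors) to bound the neighborhood that $v$ can contribute, and then perform the same kind of local modification as in the proof of Lemma~\ref{lem:onebigtree}: insert a contraction of $a_2$ and $a_3$ into $C'$ at the first problematic step $i^\star$, and thereafter follow a restriction of $C$ under the identification of the two ancestors, skipping the original contraction of $C$ that merged them. The red degree of the newly merged vertex $a_{23}$ is then bounded using the restricted structure at $u_4$ (and symmetrically at $u_1$), keeping it within $\max\{2, w(C)\}$. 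Since every step of the construction is determined by a single comparison of red degrees along $C$, the whole procedure runs in polynomial time.
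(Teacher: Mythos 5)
Your setup is correct and mirrors the paper's: you correctly observe the ``single color discrepancy'' invariant, correctly dispose of the cases where the first contraction touching $u_2$ or $u_3$ merges them with each other or with a vertex not an ancestor of $u_1$ or $u_4$, and correctly identify the sole remaining obstacle as the case where $u_2$ is contracted with an ancestor of $u_4$ (and its mirror). You also have the right high-level fix: insert a contraction of $u_2$ with $u_3$ before that step and thereafter follow a restriction of $C$.

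However, the plan leaves the genuinely hard part unresolved, and it mislocates the difficulty. You write that you will ``bound the red degree of the newly merged vertex $a_{23}$,'' but $a_{23}$ has at most two red neighbors (the ancestors of $u_1$ and $u_4$) and is never the problem. The problem is the red degree of the \emph{ancestors of $u_1$ and $u_4$}: after contracting $u_2$ with $u_3$, the vertex $a_{23}$ becomes a new red neighbor of both of these ancestors at once. If in $G_{i-1}$ those two ancestors were \emph{already} red neighbors of each other, then in the original sequence the contraction of $u_2$ and $u_4$ creates no new red edge for the ancestor of $u_1$, so that ancestor may well be at red degree exactly $w(C)$; your inserted contraction of $u_2$ with $u_3$ would then push it to $w(C)+1$. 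Nothing in the degree hypothesis on $u_1,u_4$ directly rules this out. Moreover, $u_1$ and $u_4$ may carry stumps, and contractions involving those stump vertices can independently raise the red degrees of the ancestors of $u_1$ and $u_4$ before step $i$; any repaired sequence has to interleave or postpone those stump contractions carefully so that the extra red edge to $a_{23}$ can be afforded. The paper's proof spends most of its length on exactly this: it classifies each $u\in\{u_1,u_4\}$ as safe, half-safe, or unsafe based on the state of the edge $u^*u'$ and the red degree of $u^*$ in $G_{i-1}$, selectively \emph{skips} stump contractions for unsafe and half-safe vertices, contracts the skipped stumps with $u$ (or with a descendant of a stump vertex) at a controlled moment, and only then contracts $u_2$ with $u_3$ and the result with $u_4$. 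Without an argument of this kind, the claim that the red degrees stay within $\max\{2,w(C)\}$ is not established.
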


\iflong \begin{proof}
Let $C = (G = G_0, G_1, \ldots, G_n)$ be a contraction sequence of $G$, and let $U := \{u_i \sep i\in [4]\}$.
We construct a contraction sequence $C' = (G' = G'_0, G'_1, \ldots, G'_n)$ of $G'$, and show that $w(C') = \max\{2, w(C)\}$.
Let $G_i$ be the first trigraph in $C$ which is obtained by a contraction involving $u_2$ or $u_3$.
Without loss of generality, we may assume that $G_i$ is obtained by a contraction involving $u_2$ and some $v \in V(G_{i-1})$.
Observe that only $u_2$ and $u_4$ may be black neighbors of $u_3$ in $G$. %
Thus, after the contraction of $u_2$ and $v$, the edge $u_2u_3$ either disappears (if $v = u_3$) or turns red unless $v = u_4$.

First, suppose $v \ne u_4$.
In this case, we construct $C'$ by following contractions in $C$ (recall $V(G) = V(G')$).
Observe that for all $j \in [0, i-1]$ all vertices of $G'_j$ have red degree at most $\max\{2, w(C)\}$
because $u_2$ and $u_3$ have red degree 1 or 2 in $G_j'$, and all other vertices in $G_j'$ have the same red degree as in $G_j$.
Moreover, $G_i = G_i'$ because the color of $u_2u_3$ is now the same, as argued above.
By induction, $G_j = G_j'$ for all $j \in [i, n]$, which shows $w(C') = \max\{2, w(C)\}$.

Second suppose $G_i$ is obtained by a contraction of $u_2$ and $u_4$.
For $u \in\{u_1,u_4\}$, let:
\begin{itemize}
\item $u'$ be the unique neighbor of $u$ in $U$ (in the original trigraph $G$ as well as in $G_{i-1}$), 
\item $u^*$ be the descendant of $u$ in $G_{i-1}$, and 
\item $\hat u$ be the descendant (in $G_{i-1}$) of the unique neighbor of $u$ which is neither in $U$ nor in a stump of $u$ (in $G$).
\end{itemize}
Note that $u'$ is also a neighbor of $u^*$ in $G_{i-1}$ since $u' \in \{u_2, u_3\}$ and no contraction up to $G_i$ involved $u'$ by choice of $i$.
If $u^*u' \in R(G_{i-1})$ or the red degree of $u^*$ is at most 1 in $G_{i-1}$, then we say that $u$ is \emph{safe}.
Suppose $u$ is not safe. Observe that $u = u^*$ because $u$ could not have been contracted with $u_2$ nor with $u_3$ by choice of $i$, and any other contraction involving $u$ would make the edge $u^*u'$ red in $G_{i-1}$, which would make $u$ safe.
If $\hat u$ is a black neighbor of $u$ in $G_{i-1}$, then we say that $u$ is \emph{half-safe}. Otherwise, $u$ is \emph{unsafe}.

Let $S_u$ be the set of stumps belonging to $u$ in $G$.
Observe that if $u$ is half-safe, then $|S_u| = 2$: indeed, $u$ has two black neighbors ($u'$ and $\hat u$) and two red neighbors (otherwise it would be safe) in $G_{i-1}$, and it can have degree 4 only if $|S_u| = 2$ (here we use $u = u^*$).

We define $C'$ as follows:
\begin{enumerate}
\item Perform the first $i-1$ contractions of $C$ but skip all contractions involving stumps of unsafe vertices and half-stumps of half-safe vertices. Let $G_{i_1}'$ be the obtained trigraph.
\item For all (at most 2) unsafe vertices $u$, contract $u$ with its stumps as per Observation~\ref{obs:kill-stumps}.
\item For all (at most 2) half-safe vertices $u$, if $S_u = \{v, wx\}$, then contract $v$ and the descendant of $w$.
\item Let $G_{i_2}'$ be the current trigraph. Contract $u_2$ and $u_3$ into a new vertex $u_{23}$. Let $G_{i_3}'$ be the obtained trigraph.
\item Finally, contract the new vertex $u_{23}$ and $u_4$. Let $G_{i_4}'$ be the resulting trigraph.
Finish as a restriction of $C$ because $G_{i_4}'$ is a pseudoinduced subtrigraph of $G_i$ (as we argue below).
\end{enumerate}

Naturally, an observant reader might by now be wondering what would go wrong if one were to ignore $u_1$ and $u_4$ and simply contract $u_2$ and $u_3$ immediately after the first $i-1$ contractions (circumventing the whole discussion about safe, half-safe and unsafe vertices). At first glance, it seems that the presence of a new red neighbor for the descendants of $u_1$ and $u_4$ would be offset by the fact that in $G_i$, the descendant of $u_1$ is also new red neighbor of the descendant of $u_4$ (and vice-versa for $u_4$). However, that is not necessarily the case:  the descendants of $u_1$ and $u_4$ may in fact already have been red neighbors in $G_{i-1}$---necessitating the case analysis performed above.

Now we show that no trigraph $G_j'$ in $C'$ contains a vertex with high red degree.
Let $j \in [i_1]$ and let $G_j$ be the trigraph corresponding to $G_j'$ in $C$ (this is a slight abuse of notation, since $C$ and $C'$ are not ``synchronized'' because of the skipped contractions). In particular $G_{i-1}$ corresponds to $G_{i_1}'$.
By definition of $i$, $u_2$ and $u_3$ have red degree at most 2 in $G_j'$ (no contraction involved them).
Let $u$ be an unsafe or half-safe vertex.
As observed above, $u$ has not been contracted with any vertex in $G_{i-1}$, and so the same thing holds also for $G_j'$.
If $u$ is unsafe, then its only possible red neighbor is $\hat u$, and if $u$ is half-safe, then $S_u = \{v, wx\}$ and the only possible red neighbor of $u$ is the descendant of $w$.
Moreover, the red degree of any vertex contained in a stump of $u$ is at most 1 in $G_j'$.
Any other vertex of $G_j'$ has at most as many red neighbors in $G_j'$ as in $G_j$ (some vertices may have smaller red degree in $G_j'$ thanks to the skipping of some contractions). %

Let $u$ be an unsafe vertex. We need to show that contracting the stumps of $u$ with $u$ is safe. Crucially, the edge $u\hat{u}$ is red in $G_{i-1}$, and so it can cause no problem if contracting the stumps with $u$ makes it red (it can be black in $G_{i_1}'$ only thanks to the skipping of some contractions). Thus, after the stumps of $u$ are gone, the descendant of $u$ has two red neighbors, namely $u'$ and $\hat u$, and $u' = u_k$ has also two red neighbors, namely the descendant of $u$ and $u_{5-k}$.

Let $u$ be a half-safe vertex such that $S_u = \{v, wx\}$. Observe that the descendant $w^*$ of $w$ is a red neighbor of $u$ in $G_{i_1}'$ (otherwise $u$ would be safe since $u\hat{u}$ is a black edge in $G_{i-1}$). Since no contraction so far involved $u$, a contraction must have involved $w$, which means that $w^*$ has black degree 0 (either $x$ is in the bag of $w^*$ or $w^*x$ is a red edge). Thus the contraction of $v$ and $w^*$ does not create any red edges. We have now argued safeness for $j \in [i_2]$.

Let us now consider the trigraph $G'_{i_3} = G'_{i_2+1}$, obtained by contracting $u_2$ and $u_3$ into a new vertex $u_{23}$, which has red degree 2 in $G'_{i_3}$. Clearly, the red degree of vertices not adjacent to $u_{23}$ is the same in $G'_{i_3}$ as in $G'_{i_2}$. Let $u$ be a neighbor of $u_{23}$. The red degree of $u$ may increase by the contraction of $u_2$ and $u_3$ only if $uu'$ is a black edge in $G'_{i_2}$ (recall $u' \in\{u_2, u_3\}$), which means that $u$ is safe or half-safe (for unsafe vertices, the edge has turned red).
If $u$ is safe, then, by definition, the red degree of $u$ is at most 1 in $G_{i-1}$, and thus also in $G'_{i_2}$.
If $u$ is half-safe, then its red degree in $G'_{i_2}$ is 1 (its unique red neighbor is the new vertex obtained from $w^*$ and $v$).
In all cases, the red degree of $u$ is at most $2$ in $G'_{i_3}$.

Finally, $u_{23}$ and $u_4$ are contracted. Now we show that the resulting trigraph, $G_{i_4}'$, is a pseudoinduced subtrigraph of $G_i$. Indeed, $G_{i_4}'$ can be obtained from $G_i$ by first removing $u_3$,
second removing the vertices of stumps of unsafe and half-safe vertices which are not present in $G_{i_4}'$ because of contractions in steps 2 and 3,
and third replacing each red edge that is present in $G_i$ because of contractions skipped by $C'$ (in step 1) by a black edge or a non-edge.
Thus, we may safely continue $C'$ as a restriction of $C$, which concludes the proof.
\end{proof} \fi

With Lemma~\ref{lem:change-edge-to-red}, we show that a dangling pseudo-path can either be safely transformed into a ``real'' dangling path, or (if the part is too short) absorbed into $H$\ifshort, which in turn allows us to prove:\fi\iflong:

\iflong \begin{lemma} \fi \ifshort \begin{lemma}[$\clubsuit$]\fi\label{lem:one-path-preprocess}
Let $G$ be an $(H, \ca P)$-graph such that each path in $\ca P$ is original or tidy, and let $P\in \ca P$ be an original path. There is a polynomial time procedure which constructs an $(H', \ca P')$-graph $G'$ such that:
\begin{itemize}
\item $G'$ has effectively the same twin-width as $G$;
\item all paths in $\ca P'$ are either original or tidy;
\item $|V(H')| \le |V(H)| + 24$;
\item $|\ca P'|\leq |\ca P|$ and $G'$ contains one fewer original paths than $G$.
\end{itemize}
\end{lemma}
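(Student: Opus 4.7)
The plan is to branch on the spine length of $P$. Fix a small constant threshold $T$ (roughly $T=6$). In the \emph{short case}, when the spine of $P$ has at most $T$ vertices, I would simply re-partition, setting $\ca P' := \ca P \setminus \{P\}$ and taking $H'$ to be the subtrigraph of $G$ induced on $V(H)$ together with all vertices of $P$ and of its attached stumps. The underlying trigraph does not change, so twin-width is effectively preserved by the identity translation on contraction sequences. Since each spine vertex of an original path carries at most three stump-vertices (either a red stump, or at most one black stump and one half-stump), the total number of absorbed vertices is at most $4T \le 24$, giving $|V(H')| \le |V(H)| + 24$. One original path is removed and no new path is created, so the bookkeeping claims hold trivially.

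In the \emph{long case} (spine length greater than $T$), I would use the earlier reduction rules to turn $P$ into a red dangling path and then re-partition its boundary into $H'$. First, I apply Observation~\ref{obs:kill-stumps} to every spine vertex that carries stumps, deleting its stumps and turning its incident edges red. Next, I apply Lemma~\ref{lem:change-edge-to-red} iteratively along the spine to redden any remaining black spine edges; internal spine vertices now have degree two, so the lemma's precondition is satisfied, and since $\tww(G)\ge 2$ the widths promised by the two tools are bounded by $\max\{2, w(C)\} = w(C)$, preserving effective twin-width. A further application at each end handles any edge from an endpoint of $P$ to its connector in $H$ that is still black. Finally, I re-partition: at each end, I move the original connector and a constant number of its $V(H)$-neighbors into $H'$, together with a short prefix of the now-reddened $P$ when needed, so that the new connector (a later spine vertex) satisfies the three tidy conditions. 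Re-partitioning does not touch the trigraph, so it is free of twin-width cost, and the two endpoints together contribute at most $24$ absorbed vertices to $H'$.

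The main obstacle is this re-partitioning step, and in particular the third tidy condition: the new connector must have a unique $V(H')$-neighbor with \emph{positive} black degree. The reddening performed to kill stumps tends to eliminate all black edges near the boundary, so the natural candidate for this unique neighbor often ends up with black degree zero. Overcoming this requires a careful local case analysis around each original connector, splitting on how many $V(H)$-neighbors it has, how many path-neighbors it accumulates, and which of its neighbors still carry a black edge after reddening; in each branch I would absorb a different but always constant number of vertices into $H'$ to expose a witness with positive black degree, while making sure that no other path in $\ca P$ is pushed out of the ``original or tidy'' class. Summing the worst cases over the two endpoints yields the constant $24$ in the bound.
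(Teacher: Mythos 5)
Your short case ($n\le6$) is exactly the paper's argument. The long case, however, contains a genuine gap that you flag but do not resolve. You propose to redden the entire path including the boundary edges to the $H$-connectors, and to delete the stumps of \emph{all} spine vertices (including $u_1$ and $u_n$) via Observation~\ref{obs:kill-stumps}. You then correctly observe that the resulting trigraph offers no black edge anywhere near the boundary, so tidiness condition~3 (the new connector's unique $H'$-neighbor must have positive black degree) fails, and you defer to an unspecified ``careful local case analysis.'' But re-partitioning does not recolor edges: once you have reddened $u_1 u_2$ and destroyed $u_1$'s stumps, no vertex on the path side of the boundary retains a black edge, so no choice of cut point in the absorbed prefix can make condition~3 hold. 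The case analysis you are hoping for does not exist.

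The paper's construction resolves this by being less aggressive at the boundary. It places $U = \{u_1, u_2, u_3, u_{n-2}, u_{n-1}, u_n\}$ together with the stumps of $u_1$ and $u_n$ into $H'$, removes stumps only from the interior vertices $u_2, \ldots, u_{n-1}$, and reddens only the interior edges $u_i u_{i+1}$ for $i \in [2, n-2]$, deliberately leaving $u_1 u_2$ and $u_{n-1} u_n$ black. Then $u_3$ and $u_{n-2}$ are the new connectors, their unique $H'$-neighbors are $u_2$ and $u_{n-1}$, and these have positive black degree precisely because of the preserved black edges $u_1 u_2$ and $u_{n-1} u_n$. Tidiness holds by construction, not by post-hoc repair, and this also yields the tighter $|V(H)| + 12$ bound in the long case. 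It further lets both effective-twin-width directions go through cleanly: $\tww(G') \le_e \tww(G)$ via restriction to $G^-$ and iterated application of Lemma~\ref{lem:change-edge-to-red}, and $\tww(G) \le_e \tww(G')$ via Observation~\ref{obs:kill-stumps} applied to the interior vertices in a carefully chosen inside-out order so that red degrees never exceed $2$ --- a sequencing detail your proposal leaves unaddressed.
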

\begin{proof}
Let $(u_1, \ldots, u_n)$ be the sequence of vertices obtained by traversing $P$.
If $n \le 6$, then let $G' := G$, $\ca P' := \ca P \setminus \{P\}$, and let $H'$ be the induced subtrigraph of $G$ containing $H$, $P$, and all stumps belonging to $P$.
Each vertex of $P$ has either a single stump or a black and a half stump (by Definition~\ref{def:more-paths}), which means up to three stump-vertices per vertex of $P$, and so indeed $|V(H')| \le |V(H)| + 24$.
From now on, suppose $n > 6$.

Let $P' = (u_4, \ldots, u_{n-3})$ be a red path, let $U = \{u_1, u_2, u_3, u_{n-2}, u_{n-1}, u_n\}$, %
let $H'$ be the subtrigraph of $G$ induced by $V(H) \cup U \cup \{s \sep s$ is a part of a stump belonging to $u_1 \text{ or }u_n\}$,
and let $\ca P' := \ca P \setminus \{P\} \cup \{P'\}$.
Let $G^-$ be the trigraph obtained from $G$ by deleting all stumps belonging to each $u_i$ for $i \in [2, n-1]$,
and let $G'$ be the trigraph obtained from $G^-$ by changing the color of all edges $u_iu_{i+1}$ for $i \in [2, n-2]$ to red.
Observe that:
\begin{itemize}
\item $G'$ is an $(H', \ca P')$-graph;
\item $|V(H')| \le |V(H)| + 12$ because $|U| = 6$ and stumps belonging to $u_1$ and $u_n$ consist of at most six vertices in total;
\item $P'$ is tidy because each $u \in \{u_3, u_{n-2}\}$, i.e., each connector adjacent to an endpoint of $P'$, has black degree 0, is adjacent to a single vertex of $\sump\ca P'$ (namely $u_4$ or $u_{n-3}$), and has a unique neighbor in $H'$ (namely $u_2$ or $u_{n-1}$), which has positive black degree;
\item all paths in $G'$ are tidy or original (since no path $Q \in \ca P \cap \ca P'$ has been affected by the change).
\end{itemize}
Thus it only remains to be shown that $G'$ has effectively the same twin-width as $G$.
First we show $\tww(G') \le_e \tww(G)$.
Let $C$ be a contraction sequence of $G$, and let $C^-$ be its restriction to $G^-$.
By Observation~\ref{obs:induced}, $w(C^-) \le w(C)$ and $C^-$ can be constructed in polytime, given $C$.
Notice that the degree of all vertices of $P$ is 2 in $G^-$, except for $u \in \{u_1, u_n\}$, which may have degree 3 if it has a single stump or 4 if it has a black and a half stump.
This allows us to repeatedly apply Lemma~\ref{lem:change-edge-to-red}
to change the color of all edges $u_iu_{i+1}$, for $i\in[2, n-2]$, to red, without increasing the twin-width, i.e.,
we obtain a contraction sequence $C'$ of $G'$ s.t. $w(C') = \max\{2, w(C^-)\}$, and $C'$ can be constructed in polynomial time given $C^-$. 
If $w(C^-) \ge 2$, we have $w(C') = w(C^-) \le w(C)$, and if $w(C^-) < 2$, then $w(C') = 2 \le \tww(G) \le w(C)$ by Definition~\ref{def:more-paths}.
Hence we have $w(C') \le w(C)$.

Second we show $\tww(G) \le_e \tww(G')$.
Let $C'$ be a contraction sequence of $G'$ (recall $w(C') \ge 2$ since $G'$ contains a red path).
We construct a contraction sequence $C$ of $G$ of width at most $w(C')$.
Let $i \in [3, n-2]$ be arbitrary such that $u_i$ has a stump.
We use Observation~\ref{obs:kill-stumps} to safely contract the stumps of $u_i$ with $u_i$: now the descendant of $u_i$ has red degree 2, and its two neighbors have red degree 1 each.
Now $C$ repeats this process for all $j \in [3, n-2]$ s.t. $u_j$ has a stump in $G$, from the $u_j$ closest to $u_i$ to the farthest (i.e., if $|i - j| > |i - k|$, then the stumps of $u_k$ must be contracted before the stumps of $u_j$).
Throughout this process, no vertex has red degree higher than 2 because each $u_j$ has at most one red neighbor (namely, its neighbor in $P$ closer to $u_i$) before $C$ starts contracting its stumps (recall the $d_1 + 1$ bound in the statement of Observation~\ref{obs:kill-stumps}). Let $G^*$ be the obtained trigraph.

Now the only vertices of $P$ which may have stumps in $G^*$ but not in $G'$ are $u_2$ and $u_{n-1}$.
Suppose $u \in \{u_2, u_{n-1}\}$ has a stump.
Now we again start contracting the stumps of $u$ as in the proof of Observation~\ref{obs:kill-stumps} but we stop before the contraction which would contract the last remaining stump-vertex $x$ and $u$ and instead we contract $x$ with the neighbor of $u$ in $\{u_3, u_{n-2}\}$.
Now the obtained trigraph is a pseudoinduced subtrigraph of $G'$ (the only difference between them is that some edges $u_iu_{i+1}$ for $i \in [2, n-2]$, red in $G'$, may be black instead), and so $C$ concludes by following $C'$. We have shown that $w(C) \le w(C')$. %
  \end{proof}

By exhaustively applying Lemma~\ref{lem:one-path-preprocess}, we obtain:
\fi

\begin{corollary}\label{cor:no-stumps}
There is a polynomial-time algorihm which transforms an original $(H, \ca P)$-graph into a tidy $(H', \ca P')$-graph with effectively the same twin-width 
such that $|V(H')| \le |V(H)| + 24\cdot |\ca P|$ and $|\ca P'|\leq |\ca P|$.
\end{corollary}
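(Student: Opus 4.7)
The plan is to iterate Lemma~\ref{lem:one-path-preprocess} exhaustively, one original path at a time, until the resulting $(H,\ca P)$-graph contains no original path. Since the lemma guarantees that the post-application graph again consists only of original or tidy paths, once no original path remains, every path must be tidy. To make this work I need to verify three things: that the hypothesis of the lemma remains valid across iterations, that the quantitative bounds telescope correctly in $|\ca P|$, and that ``effectively the same twin-width'' composes along the chain.

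First I would initialize $(H_0, \ca P_0) := (H, \ca P)$. Since the input is original, every path in $\ca P_0$ is original, and in particular all paths are ``original or tidy'', so the hypothesis of Lemma~\ref{lem:one-path-preprocess} is satisfied. While some original path $P \in \ca P_i$ still exists, I invoke the lemma on $(H_i, \ca P_i)$ and $P$ to produce $(H_{i+1}, \ca P_{i+1})$, which by the lemma still has only original or tidy paths, has $|\ca P_{i+1}| \le |\ca P_i|$, has $|V(H_{i+1})| \le |V(H_i)| + 24$, and contains strictly fewer original paths than its predecessor. Hence after at most $|\ca P|$ iterations the loop terminates, and the final $(H', \ca P')$ is tidy by construction.

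For the quantitative bounds, the per-step guarantee $|V(H_{i+1})| \le |V(H_i)| + 24$ telescopes to $|V(H')| \le |V(H)| + 24\cdot|\ca P|$, and monotonicity of $|\ca P_i|$ immediately gives $|\ca P'| \le |\ca P|$. Effective twin-width equivalence composes along the chain: by Definition~\ref{def:effective}, each step supplies polynomial-time procedures translating contraction sequences in both directions without increasing width, so their composition yields the required polynomial-time translations between contraction sequences of $G$ and $G'$. Since there are polynomially many iterations and each invocation of Lemma~\ref{lem:one-path-preprocess} runs in polynomial time, the whole procedure is polynomial-time. The main conceptual work has already been absorbed into Lemma~\ref{lem:one-path-preprocess}; what remains here is essentially a termination argument plus the observation that the error terms add linearly.
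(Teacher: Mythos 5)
Your proof is correct and takes exactly the approach the paper intends, namely exhaustively applying Lemma~\ref{lem:one-path-preprocess} until no original path remains, with the bounds telescoping over at most $|\ca P|$ iterations. You make explicit the (correct) observations that the lemma's hypothesis is preserved across iterations and that effective twin-width equivalence composes, which the paper leaves implicit.
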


Before we proceed towards establishing our main algorithmic theorems, we remark that Theorem~\ref{thm:pruning} and Corollary~\ref{cor:no-stumps} allow us to bound the twin-width of graphs with feedback edge number $1$, generalizing the earlier result of Bonnet et al.~\cite[Section~3]{BonnetKTW22} for trees.

\iflong
\begin{theorem}
\fi
\ifshort
\begin{theorem}[$\clubsuit$]
\fi
\label{thm:fenone}
Every graph with feedback edge number $1$ has twin-width at most $2$.
\end{theorem}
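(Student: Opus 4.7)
The plan is to build a $2$-contraction sequence directly, using Observation~\ref{obs:contract-trees} as the main tool. If $G$ is itself a tree, that observation (applied to $G$ rooted arbitrarily) immediately gives what we want. Otherwise, since the feedback edge number is exactly $1$, $G$ contains a unique cycle $C = v_1 v_2 \dots v_n v_1$, and every vertex outside $C$ lies in a (possibly empty) dangling subtree $T_i$ hanging off some cycle vertex $v_i$.

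I would proceed in two phases. In \textbf{Phase 1}, for each $v_i$ with $T_i$ non-empty, I apply Observation~\ref{obs:contract-trees} to the rooted tree $T_i \cup \{v_i\}$ with root $v_i$, obtaining a sub-sequence of width at most $2$ whose only contraction involving $v_i$ is the very last one. Because every contraction in this sub-sequence lies strictly inside $T_i \cup \{v_i\}$, the remaining cycle vertices and the two cycle edges $v_i v_{i-1}$, $v_i v_{i+1}$ are untouched; after the final merge of $v_i$ with the already-contracted tree blob, the resulting vertex $v_i^*$ inherits exactly $v_i$'s two black cycle edges and carries no red edges out of the ``tree--cycle interface'' (since the blob has no external neighbors). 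Thus the cycle structure is preserved after each sub-sequence.

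In \textbf{Phase 2}, after every dangling tree has been processed, the trigraph is exactly the cycle on $v_1^*, \dots, v_n^*$ with all edges black. A standard sweeping contraction---merging adjacent pairs one by one---reduces this cycle to a single vertex while the growing super-vertex always has red edges only to the two endpoints of the shrinking black arc, so the red degree stays at $2$ throughout.

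The main obstacle will be the verification in Phase~1 that the red degree of $v_i$ never exceeds $2$, despite $v_i$ simultaneously living on the cycle. This follows because Observation~\ref{obs:contract-trees} bounds $v_i$'s tree-internal red degree by $2$, while the two cycle edges incident to $v_i$ remain black throughout the sub-sequence (their other endpoints are never touched), so they contribute $0$ to the red degree of $v_i$. As an alternative route, one could instead invoke the machinery of Theorem~\ref{thm:pruning} followed by Corollary~\ref{cor:no-stumps}: for $k=1$ these reduce $G$ (up to effective twin-width) to a tidy $(H, \ca{P})$-graph whose $\ca{P}$ consists of at most two tidy red arcs of the unique cycle, which can then be contracted down explicitly with width $2$ by shrinking each red arc from its interior before merging what remains of the bounded-size core.
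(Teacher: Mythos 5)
Your Phase~1 has a genuine gap: the claim that after the final merge of $v_i$ with its contracted tree blob, ``the resulting vertex $v_i^*$ inherits exactly $v_i$'s two black cycle edges'' is the opposite of what the contraction rule gives. Since the blob has \emph{no} edge to $v_{i-1}$ or $v_{i+1}$ while $v_i$ has a black edge to each, both $v_i^* v_{i-1}$ and $v_i^* v_{i+1}$ become \emph{red} after the merge (they are symmetric-difference edges, so a single black edge to one side turns red). Consequently the cycle is not preserved black, and when you next process a neighboring tree $T_{i+1}$, the root $v_{i+1}$ already carries a red edge to $v_i^*$; Observation~\ref{obs:contract-trees} only guarantees that the tree-internal red degree of the root stays at most $2$, so $v_{i+1}$ can reach red degree $3$ during that sub-sequence. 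No reordering of Phase~1 fixes this, because fully collapsing any tree makes both incident cycle edges red. (Phase~2 is also described for a black cycle when the actual starting trigraph would have red edges, but the sweep itself happens to be safe for a cycle of mixed colors, so that is a minor point.)

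The paper's proof sidesteps this precisely by \emph{not} performing the final root-into-blob merge during tree processing: trees are cut to \emph{stumps} (pendant paths of length $1$ or $2$ attached by a black edge), so every cycle vertex keeps red degree $0$ and all cycle edges stay black until all trees are gone. Only then are the stumps folded into the cycle, and this is done in a careful pairwise order---contract the stumps of $u_{2j-1}$ and $u_{2j}$ together, then into $u_{2j}$---exactly to prevent some cycle vertex from being squeezed by red cycle edges on both sides plus its own red stump at the same time. Your ``alternative route'' via Theorem~\ref{thm:pruning} and Corollary~\ref{cor:no-stumps} is indeed the paper's route, but the claim that $\ca P$ consists of only two arcs is not accurate ($|\ca P| \le 4k$), and the stump-pairing step that avoids the red-degree-$3$ squeeze is not spelled out; that pairing is where the real work of the short proof lies.
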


\iflong \begin{proof}
We apply Theorem~\ref{thm:pruning} and Corollary~\ref{cor:no-stumps} to obtain either a $2$-contraction sequence of $G$ or a tidy $(H,\ca P)$-graph $G'$ with effectively the same twin-width as $G$. Moreover, by following the construction we observe that $G'$ will consist of a single cycle $C=\{u_1,\dots,u_m\}$, each of whose vertices may be attached to a set of stumps. We complete the proof by providing a $2$-contraction sequence for $G'$:

\begin{enumerate}
\item for each vertex $u_i\in V(C)$, we contract all of its stumps into a single vertex $u_i'$ which has red degree $1$ and whose only neighbor is $u_i$;
\item we contract $u_1'$ with $u_2'$, and then contract the resulting vertex into $u_2$. At this point, $u_2$ is the only vertex in $C$ with red degree $2$;
\item repeat the previous step for each $u_{2j-1}'$ and $u_{2j}'$ for $2\leq j\leq \lfloor\frac{m}{2}\rfloor$. If $m$ is odd, observe that $u_m$ has red degree $2$ at this stage, and finish by contracting $u_m'$ into $u_m$;
\item at this point, the trigraph simply consists of a red cycle which is well-known (and easily observed) to have twin-width $2$. \qedhere
\end{enumerate}
\end{proof} \fi

As an immediate corollary, we can obtain an even more general statement:
\iflong
\begin{corollary}
\fi
\ifshort
\begin{corollary}[$\clubsuit$]
\fi
\label{cor:fenmore}
Every graph with feedback edge number $\ell\ge 1$ has twin-width at most $1+\ell$.
\end{corollary}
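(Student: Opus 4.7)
My plan is to argue by induction on $\ell$, using Theorem~\ref{thm:fenone} as the base case $\ell = 1$. For the inductive step, fix a graph $G$ with feedback edge number $\ell \geq 2$. Since $\ell \geq 1$, the graph $G$ contains a cycle, so I would pick an arbitrary edge $e = uv$ lying on one. Removing $e$ then yields a graph $G - e$ with feedback edge number exactly $\ell - 1$, so the inductive hypothesis gives $\tww(G - e) \leq 1 + (\ell - 1) = \ell$. It will therefore suffice to verify the generic statement that $\tww(G) \leq \tww(G - e) + 1$ for any edge $e \in E(G)$.

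For this last step, I would invoke the natural observation that inserting a single edge into a graph can raise its twin-width by at most one, and justify it along the following lines. Let $C^- = (G - e = G^-_0, \ldots, G^-_n)$ be an optimal contraction sequence of $G - e$, and let $C = (G = G_0, \ldots, G_n)$ be obtained by performing the same sequence of bag-contractions on $G$. For each $i$, write $u_i, v_i$ for the ancestors of $u, v$ in $G_i$ (with $u_i = v_i$ once they have been merged). The only way in which $G_i$ can differ from $G^-_i$ is on the pair $\{u_i, v_i\}$, since the additional edge $e$ contributes exactly one extra black pair between the bags $\beta(u_i)$ and $\beta(v_i)$. A short case check then yields that either (i) this extra pair introduces an edge $u_iv_i$ (red or black) absent in $G^-_i$, (ii) it turns a red edge $u_iv_i$ of $G^-_i$ into a black edge of $G_i$, or (iii) the two trigraphs are equal. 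In all three cases, the red degree of every vertex of $G_i$ exceeds its counterpart in $G^-_i$ by at most one, and only $u_i$ and $v_i$ can be affected at all.

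Consequently $w(C) \leq w(C^-) + 1 \leq \ell + 1$, closing the induction. The only point that requires a bit of care is the local case analysis of how the absent/present edge $e$ interacts with the current bags of $u$ and $v$; beyond that, the corollary is genuinely immediate from Theorem~\ref{thm:fenone} combined with the edge-addition observation, and I would not expect any serious obstacle.
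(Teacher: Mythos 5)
Your proposal is correct and takes essentially the same route as the paper: both reduce to Theorem~\ref{thm:fenone} via the observation that inserting an edge increases twin-width by at most one. The paper phrases this by adding all $\ell-1$ extra edges at once and noting that any fixed contraction sequence gains at most $\ell-1$ in width, while you peel off one edge at a time by induction; these are the same underlying argument, and your more detailed bag-level case check (modulo a small slip with the paper's ancestor/descendant terminology) supports it.
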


\iflong
\begin{proof}
The statement holds by directly combining Theorem~\ref{thm:fenone} with the fact that adding $\ell-1$ edges to a graph $G$ can only increase the twin-width of $G$ by at most $\ell-1$. Indeed, an arbitrary contraction sequence of $G$ of width $\tww(G)$ will have width at most $\tww(G)+\ell-1$ after adding an arbitrary set of $\ell-1$ edges.
\end{proof}
\fi

\section{Establishing Theorem~\ref{thm:tww2}}\label{sec:tww2}

Our aim now is to make the step from Corollary~\ref{cor:no-stumps} towards a proof of Theorem~\ref{thm:tww2}. Towards this, let us fix a tidy $n$-vertex $(H,\ca P)$-graph $G$. %
When dealing with a contraction sequence, we will use $G_i$ to denote the $i$-th trigraph obtained from $G$,
and let $H_i$ be the subtrigraph of $G_i$ induced by the descendants of $H$.
We say that $u \in V(G_i)$ is \emph{an outer vertex in $G_i$} if $u \notin V(H_i)$, and we lift the previous definition of connectors by saying that $u$ is a \emph{connector in $G_i$} if $u \in V(H_i)$ and $u$ is adjacent to an outer vertex in $G_i$.

We begin with a simple observation which will be useful throughout the rest of the section.

\iflong
\begin{observation}
\fi
\ifshort
\begin{observation}[$\clubsuit$]
\fi
\label{obs:bd0}
If $G_i$ is a trigraph obtained from a tidy $(H,\ca P)$-graph $G$ by a sequence of contractions, then all the neighbors of outer vertices in $G_i$ have black degree 0 in $G_i$.
\end{observation}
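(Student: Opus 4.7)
The plan is to prove the statement by induction on $i$, using as strengthened inductive hypothesis the conjunction of the stated claim together with the auxiliary assertion that every outer vertex in $G_i$ itself has black degree $0$. This strengthening is actually automatic: if $x$ is outer and $y$ is any neighbor of $x$ in $G_i$, then the statement (applied to $x$) makes $y$ of black degree $0$, so the edge $xy$ is red; since this holds for every neighbor of $x$, the vertex $x$ itself has black degree $0$. Thus I only need to establish the observation in each step, and the auxiliary claim about outer vertices comes along for free.

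For the base case ($i=0$), I would unpack the definition of a tidy $(H,\ca P)$-graph. Every outer vertex lies in some tidy dangling \emph{red} path $P$, so every edge it has within $P$ is red; the edge from an endpoint of $P$ to the adjacent connector $u \in V(H)$ is also red, because the first condition of tidiness forces $u$ to have black degree $0$. Consequently every connector has black degree $0$ by definition, and every outer vertex has black degree $0$ because all of its incident edges are red. Since the neighbors of outer vertices are either other outer vertices or connectors, the statement holds in $G_0$.

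For the inductive step, suppose $G_{i+1}$ is obtained from $G_i$ by contracting $u,v$ into $w$. I would case-split on whether each of $u,v$ is outer or lies in $V(H_i)$ and, in each case, verify both that (a) every preexisting outer vertex $x \ne w$ still has all its neighbors of black degree $0$, and (b) if $w$ itself is a neighbor of an outer vertex, then $w$ has black degree $0$. The core calculation is identical across all cases and relies on the contraction rule: for any neighbor $y$ of $w$, either $y$ was adjacent to some $z \in \{u,v\}$ via a red edge (yielding a red edge $yw$ directly), or $y$ was adjacent to only one of $\{u,v\}$ via a black edge (invoking the ``single black edge'' clause to make $yw$ red). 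In either situation $yw$ is red, so no new black edge is ever introduced at $w$, and the non-$w$ neighbors of an outer vertex have their only potentially changed edge (the one absorbed into $w$) turned red as well.

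The subtle point, which I expect to be the main obstacle, is the case where both $u$ and $v$ lie in $H_i$ and $w$ becomes a neighbor of an outer vertex $x$ in $G_{i+1}$. There, only one of $\{u,v\}$, say $u$, is a priori a connector in $G_i$ (the one $x$ was adjacent to), so only $u$ has black degree $0$ by IH; the other, $v$, might have incident black edges. The resolution is that, for any neighbor $y$ of $w$, if $y$ was adjacent to $u$ then $yu$ is red, while if $y$ was adjacent only to $v$ by a black edge the ``single black edge'' contraction clause still makes $yw$ red. Thus $w$ still has black degree $0$, closing the induction.
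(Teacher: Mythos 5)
Your proof is correct, but it takes a genuinely different route from the paper's. The paper's argument is a one-shot, non-inductive deduction from the bag structure: if $u_1\in V(H_i)$ is a connector adjacent to an outer vertex $u_2$, then $\beta(u_2)\subseteq V(\sump\ca P)$ and there exist adjacent $u_1'\in\beta(u_1)$, $u_2'\in\beta(u_2)$ in $G$; tidiness makes every neighbor of $u_2'$---including $u_1'$---have black degree $0$ in $G$, and the standard fact that a black edge in $G_i$ between two contracted vertices requires every cross-bag pair to be a black edge in $G$ then forces $u_1$ to have black degree $0$. Your proposal replaces this with an induction on the length of the contraction sequence and a case split on which of the two contracted vertices are outer. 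Both arguments are sound: the paper's is more compact because it leans on the bag-level characterization of black edges and avoids any case analysis, while yours is more elementary, invoking only the one-step contraction rule, at the cost of more bookkeeping. One presentational point worth fixing if you write this up in full: the sentence claiming ``the core calculation is identical across all cases and\dots no new black edge is ever introduced at $w$'' glosses over the only dangerous scenario, namely a vertex $y$ having black edges to \emph{both} contracted vertices, in which case $yw$ \emph{would} be black. You rule this out explicitly only in the case you flag as subtle, but in fact the same justification does silent work throughout: whenever $w$ ends up adjacent to an outer vertex, at least one of $u,v$ is either outer (hence of black degree $0$ by your derived auxiliary claim) or was already a connector neighboring that outer vertex (hence of black degree $0$ by IH); and for a preexisting neighbor $y\neq w$ of an outer vertex, the IH already gives $y$ itself black degree $0$, so $y$ has no black edge to $u$ or $v$ at all. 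Stating this uniformly, rather than letting it appear as a special case, would make the inductive step cleaner.
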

\iflong \begin{proof}
The fact that outer vertices have black degree 0 is immediate ($\ca P$ is a collection of red paths).
Let $u_1 \in V(H_i)$ be a connector adjacent to an outer vertex $u_2 \in V(G_i)$.
Clearly, there must exist vertices $u_1', u_2' \in V(G)$ which are adjacent in $G$ and $u_1'$ ($u_2'$) is contained in the bag of $u_1$ ($u_2$, respectively).
Since $u_2$ is outer, we have $u_2' \in V(\sump\ca P)$.
By the definition of tidy $(H,\ca P)$-graphs, all neighbors of $u_2'$ have black degree 0 in $G$.
In particular, $u_1'$ has black degree 0 in $G$, and so clearly $u_1$ has black degree 0 in $G_i$.
\end{proof} \fi

Our proof of Theorem~\ref{thm:tww2} relies on establishing that if a tidy $(H,\ca P)$-graph $G$ has twin-width $2$, then it also admits a contraction sequence which is, in a sense, ``well-behaved''. The proof of this fact is based on induction, and hence being ``well-behaved'' (formalized under the notion of \emph{regularity} below) is defined not only for entire sequences but also for prefixes.

\begin{definition}\label{def:regular-prefix}
Let $C = (G_1 = G, G_2,\ldots, G_n)$ be a contraction sequence.
For $P \in \ca P$, let us denote by $P_i$ the subtrigraph of $G_i$ induced by the descendants of $P$ which are not in $H_i$.
We say that a prefix $(G_1, \ldots, G_i)$ of $C$ is \emph{regular} if: 
\begin{compactitem}
\item for all $j \in [i]$, $\{P_j \sep P \in \ca P\}$ is a set of disjoint red paths, and the endpoints of these paths are adjacent to connectors; and
\item for all $j \in [i-1]$:
\begin{compactenum}
\item $G_{j+1}$ is obtained by a contraction inside $H_j$, or \label{loc:inH}
\item there is $P \in \ca P$ s.t. if you shorten $P_j$ by one vertex in $G_j$, you obtain $G_{j+1}$, or \label{loc:shorten}
\item there is $P \in \ca P$ s.t. $|V(P_j)| = 1$ and $|V(P_{j+1})| = 0$.\label{loc:delete}
\end{compactenum}
\end{compactitem}
\end{definition}

Let $reg(C)$ denote the length of the longest regular prefix of $C$.
Below, we show that unless we reach a degenerate trigraph (a case which is handled in the proof of Proposition~\ref{prop:tww2} later), every contraction sequence of width $2$ can be made ``more regular'' until it is entirely regular.

\iflong \begin{lemma} \fi \ifshort \begin{lemma}[$\clubsuit$]\fi\label{lem:regularize}
Let $C = (G_1 = G, G_2,\ldots, G_n)$ be an optimal contraction sequence of $G$ of width $2$, and let $i := reg(C)$.
If $i < n$ and $G_i$ is not a red cycle of length 4, then there is an optimal contraction sequence $C'$ of $G$ s.t. $reg(C') > i$.
\end{lemma}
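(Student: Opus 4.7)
The plan is to perform a case analysis on the $(i+1)$-th contraction $\gamma$ of $C$, which by maximality of the regular prefix must violate regularity. Two kinds of violations are possible: either $\gamma$ is not of a regular step type (1)--(3), or $\gamma$ is of such a type but the resulting $G_{i+1}$ fails the structural invariant on $\{P_{i+1}\sep P\in\ca P\}$. The second possibility I would rule out directly: type-(1) contractions inside $H_i$ leave the outer paths untouched, while types (2) and (3) act on a single path in a way that by design preserves the red-path structure and the adjacency of each new endpoint to an appropriate connector.

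The core of the argument handles the first possibility. The key observation is that in $G_i$, every outer vertex has red degree exactly $2$ (a consequence of tidyness: one red edge along the path, and one to the adjacent connector), and every connector has all of its incident edges red (Observation~\ref{obs:bd0}). Together with the width-$2$ bound on $C$, these facts force most bad $\gamma$ to be outright impossible, since the newly produced vertex would have red degree at least $3$. In particular, this rules out contracting an outer vertex with any vertex other than its two red neighbours, contracting two outer vertices belonging to distinct paths, and contracting non-adjacent outer vertices within a single path.

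After this elimination, only a few residual illegal $\gamma$ remain; each can be handled by rearranging $C$. Concretely, I would locate the first regular contraction $\gamma^{\star}$ appearing later in $C$ that is already applicable at $G_i$ (for instance, a contraction of two $H$-descendants that $C$ merges later, or a later shortening of some $P_i$), and commute $\gamma^{\star}$ ahead of $\gamma$ and the intermediate contractions. Because the support of $\gamma^{\star}$ is disjoint from that of $\gamma$, the commutation preserves both the final outcome and the width-$2$ bound, and inserting $\gamma^{\star}$ at step $i+1$ yields a sequence $C'$ with $reg(C')\ge i+1$.

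The main obstacle is guaranteeing the existence of such a $\gamma^{\star}$. This is precisely where the hypothesis that $G_i$ is not a red $4$-cycle is used: in every other configuration, either $H_i$ still has enough structure to admit a pair of descendants merged later in $C$, or some $P_i$ has length at least $2$ and hence permits a regular shortening occurring somewhere later in $C$. The $4$-cycle is the sole configuration in which this argument collapses, and it is explicitly excluded by hypothesis so that it can be handled separately within Proposition~\ref{prop:tww2}.
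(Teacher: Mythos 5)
Your high-level plan differs from the paper's proof in an essential way, and the differences introduce gaps that I do not see how to close.

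First, the preliminary observation ``every outer vertex has red degree exactly $2$'' is not correct. A regular prefix is allowed to shorten a path $P_i$ down to a single vertex $u$, and that $u$ can have red degree $1$ (if the two connectors that originally flanked $P$ have by now been merged, or if only one remained adjacent). This degree-$1$ situation is precisely the first case the paper handles separately (by contracting $u$ into its unique neighbour, a type-\ref{loc:delete} step), and it cannot be subsumed into a degree-$2$ analysis. Second, the width-$2$ elimination is not as strong as you claim. Contracting an outer vertex $u$ (with red neighbourhood $N(u)=\{w,x\}$) with a non-neighbour $v\in V(H_i)$ is \emph{not} ruled out by the red-degree bound alone: since $u$ has no black edges, the contracted vertex's red neighbourhood is $\{w,x\}\cup (N(v)\setminus\{u\})$, which has size at most $2$ precisely when $N(v)\subseteq\{w,x\}$. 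So such contractions survive your elimination and are in fact the hardest case; the paper handles it by a further split on whether $v$ has degree $1$ or $2$, and it is the degree-$2$ subcase (together with Observation~\ref{obs:bd0}) that forces $G_i$ to be a red $4$-cycle. Similarly, contracting two degree-$2$ outer vertices from distinct paths is only impossible when $G_i$ is \emph{not} the red $4$-cycle, which you cannot invoke before the case analysis identifies when the $4$-cycle arises.

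The deeper problem is the commutation mechanism. You want to locate a later regular contraction $\gamma^{\star}$ with support disjoint from $\gamma$ and slide it to position $i+1$. Disjoint support does not preserve the width bound: a contraction $\gamma'$ occurring between positions $i+1$ and the original position of $\gamma^{\star}$ can merge two red neighbours of the vertex produced by $\gamma^{\star}$, so in $C$ the red degree of that vertex is $\le 2$ \emph{only because} $\gamma'$ ran first. After the commutation the red degree at step $i+1$ can jump to $3$ or more, breaking optimality. Moreover, the existence of a $\gamma^{\star}$ that is already regular with respect to $G_i$ and applicable at $G_i$ is only asserted; $C$ is an arbitrary optimal sequence and there is no reason it must contain such a step at all. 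The paper sidesteps both issues: it does not search $C$ for a suitable contraction but \emph{invents} one, a contraction from $G_i$ to a trigraph $K$ which (because a degree-$1$ vertex is absorbed into a black-degree-$0$ neighbour, by Observation~\ref{obs:bd0}) is isomorphic to an induced subtrigraph of $G_i$ on one fewer vertex. The tail of $C'$ is then the restriction of $C$ to $V(K)$ via Observation~\ref{obs:induced}, so the width bound is inherited automatically and no commutation is needed. If you want to salvage your outline, you would need to replace the commutation step by this ``contract to an induced subtrigraph and then restrict'' argument; as written, the proposal does not establish the lemma.
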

\iflong \begin{proof}
Suppose that $i < n$ and that $G_i$ is not a red cycle of length 4.
We will divide the proof into several cases but in all of them,
we will construct $C'$ by following $C$ until $G_i$, then $C'$ performs a contraction different from $C$, creating a trigraph $K$, and then we observe that $K$ is an induced subtrigraph of $G_i$, which allows us to finish $C'$ as a restriction of $C$.
Crucially, the prefix $(G_1, \ldots, G_i, K)$ of $C'$ will be regular, and so we will indeed have $reg(C') > reg(C)$.

Suppose that there is an outer vertex $u \in V(G_i)$ with a unique neighbor $u' \in V(G_i)$.
By Definition~\ref{def:regular-prefix},
$\{u\} = V(P_i)$ for some $P \in \ca P$ since otherwise $u$ would be an endpoint of $P_i$ not adjacent to a connector.
In this case, we let $K$ be the trigraph obtained from $G_i$ by contracting $u$ and $u'$ (the prefix of $C'$ ending with $K$ is now regular by item~\ref{loc:delete} of Definition~\ref{def:regular-prefix}).
This contraction does not increase the red degree of any vertex because $u'$, which is a connector, has black degree 0 by Observation~\ref{obs:bd0}.

Otherwise,
let $u,v \in V(G_{i})$ be the two vertices contracted in $G_{i+1}$, into a new vertex $uv$.
If $u,v \in V(H_i)$, then $G_{i+1}$ would satisfy item~\ref{loc:inH}, contradiction with $reg(C) = i$.
Thus suppose $u \in V(P_i)$ for some $P \in \ca P$.
First suppose $v \in V(Q_i)$ for some $Q \in \ca P$.
We may assume that both $u$ and $v$ have (red) degree 2 in $G_i$ (see the previous case).
However, $uv$ has red degree at most 2 in $G_{i+1}$ by optimality of $C$, and so either $u$ and $v$ are neighbors or they have the same neighborhood in $G_i$.
If they were neighbors, then $Q$ would be equal to $P$ and $(G_1, \ldots, G_{i+1})$ would be a regular prefix by item~\ref{loc:shorten}, contradiction.
On the other hand, suppose $u$ and $v$ have the same neighborhood in $G_i$, say $\{w, x\}$.
By Observation~\ref{obs:bd0}, $w$ and $x$ have black degree 0, and so $G_i$ is just a red cycle $uwvx$ (since it is connected), which is a contradiction with our initial assumption.

Second suppose $v \in V(H_i)$.
If $uv \in R(G_i)$, then $(G_1, \ldots, G_{i+1})$ would be a regular prefix by item~\ref{loc:shorten} (or~\ref{loc:delete} if $V(P_i) = \{u\}$), contradiction with $i = reg(C)$.
Thus let $N := \{w, x\}$ be the neighborhood of $u$; we may assume $|\{v, w, x\}| = 3$.
Observe that all neighbors of $v$ must be in $N$ since $uv$ has red degree at most 2 in $G_{i+1}$.
If $v$ has degree 1 in $G_i$, then its unique neighbor $v' \in N$ has black degree 0 by Observation~\ref{obs:bd0}, %
and we let $K$ be the trigraph obtained from $G_i$ by contracting $v$ and $v'$ (the prefix $(G_1, \ldots, G_i, K)$ is regular by item~\ref{loc:inH} or~\ref{loc:shorten}, depending on whether $v' \in V(H_i)$ or $v' \in V(P_i)$).
Thus suppose $v$ has degree 2 in $G_i$, i.e., its neighbors are $w$ and $x$.
Same way as before, $G_i$ is just a red cycle $uwvx$, again a contradiction.
  \end{proof} \fi

We can now use Lemma~\ref{lem:regularize} to show that contracting all the tidy dangling paths in $G$ into singletons cannot increase the twin-width of $G$.

\iflong
\begin{proposition}
\fi
\ifshort
\begin{proposition}[$\clubsuit$]
\fi
\label{prop:tww2}
Let $G'$ be the trigraph obtained from a tidy $(H,\ca P)$-graph $G$ of twin-width $2$ by shortening each path in $\ca P$ to a single vertex. Then $\tww(G') = 2$.
\end{proposition}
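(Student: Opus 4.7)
The plan is to construct a $2$-contraction sequence for $G'$ by simulating an optimal one for $G$; the reverse inequality $\tww(G') \ge 2$ follows directly from the structure of $G'$. I would first take any optimal contraction sequence $C = (G_1, \ldots, G_n)$ of $G$ of width $2$ and iteratively apply Lemma~\ref{lem:regularize}, so that we may assume without loss of generality that $C$ is fully regular, except possibly if its regular prefix terminates at a red $4$-cycle $G_i$ (handled separately at the end).

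Given such a regular $C$, I would construct $C'$ step by step, maintaining the correspondence that $V(H'_j)$ is in bijection with $V(H_j)$ via the mirrored type~\ref{loc:inH} contractions, and that for each $P \in \ca P$ with $V(P_j) \ne \emptyset$, a single vertex $v_P$ persists in $G'_j$. Each contraction of $C$ is translated into $C'$ as follows: type~\ref{loc:inH} steps (contractions inside $H_j$) are mirrored directly; type~\ref{loc:shorten} steps (shortening $P_j$ by one vertex) are skipped in $C'$, since $v_P$ already represents the entire path in $G'$; and type~\ref{loc:delete} steps (contracting the last vertex $v$ of $P_j$ with its unique neighbor $u' \in H_j$) become contractions of $v_P$ with the descendant of $u'$ in $G'_j$.

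The key verification is that $C'$ has width at most $2$. I would establish this via the invariant that every vertex $u \in V(H'_j)$ has red degree in $G'_j$ at most that of its counterpart in $G_j$: red edges within $H$ are preserved by the bijection, while each non-empty $P_j$ may contribute up to $2$ red edges to $u$ in $G_j$ (when both connectors of $P$ have merged into $u$), whereas the single $v_P$ in $G'_j$ contributes only $1$. Meanwhile, each $v_P$ has red degree at most $2$, bounded by the number of distinct $H'_j$-descendants of $P$'s (at most two) connectors. For the critical type~\ref{loc:delete} translation, the hypothesis that $v$ has a unique neighbor in $G_j$ forces both connectors of $P$ to share a descendant in $H_j$, so $v_P$ also has a unique neighbor $u''$ in $G'_j$; the red degree after contracting $v_P$ with $u''$ equals the total degree of $u''$ minus $1$, which by the invariant is at most the total degree of $u'$ minus $1$, and at most $2$ by the optimality of $C$.

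The main obstacle will be formalizing the correspondence rigorously --- in particular, tracking how type~\ref{loc:shorten} absorptions modify $V(H_j)$ while leaving $V(H'_j)$ unchanged, and verifying that the two connectors of $P$ must share a descendant in $H_j$ precisely when $v$ has a unique neighbor. The terminal red $4$-cycle case admits a short separate argument: at that point $G'_i$ has at most $4$ vertices with maximum red degree at most $2$ (by the invariant), so it admits a $2$-contraction sequence by direct inspection.
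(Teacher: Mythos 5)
Your proposal follows essentially the same approach as the paper's proof: regularize $C$ via Lemma~\ref{lem:regularize}, mirror type~\ref{loc:inH} and type~\ref{loc:delete} contractions while skipping type~\ref{loc:shorten} ones, bound red degrees in $G'_j$ by comparison to $G_j$ (the paper phrases this as the observation that a vertex $u\in V(H_i)$ adjacent to $m$ outer vertices in $G_i$ is adjacent to $m'\le m$ outer vertices in $G'_i$), and handle the terminal red cycle of length at most $4$ separately. One small imprecision worth flagging: a type~\ref{loc:delete} step does not force the single remaining vertex $v$ of $P_j$ to have a unique neighbor in $G_j$ (it may still be adjacent to two distinct connector descendants), so the uniqueness hypothesis you lean on in that paragraph is not always available; however, the degree-comparison invariant carries the argument just as well when $v$ has two neighbors, so this is a cosmetic gap rather than a substantive one.
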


\iflong \begin{proof}
Let $C_0$ be an optimal contraction sequence of $G$.
We repeatedly apply Lemma~\ref{lem:regularize} to $C_0$ until we obtain an optimal contraction sequence $C_1 = (G_1 = G,\ldots, G_n)$ of $G$ such that either $G_{n-3}$ is a red cycle or $reg(C_1) =n$.
Let $C_2$ be the longest regular prefix of $C_1$, i.e., either $C_2 = C_1$ or $C_2 =(G_1,\ldots, G_{n-3})$.
Thus, each contraction in $C_2$ is one of the three types listed in Definition~\ref{def:regular-prefix}.
We define a (possibly partial) contraction sequence $C'_2$ of $G'$ by removing all contractions of type~\ref{loc:shorten} from $C_2$, i.e., those contractions shortening a path in $\ca P$.

Let $G_i'$ be any trigraph in $C'_2$ and let $G_i$ be the trigraph in $C_2$ which was obtained by the same contraction which produced $G_i'$
(to avoid any confusion, we are not claiming that $G_i'$ is the trigraph on position $i$ in $C'_2$).
First observe that any outer vertex in $G_i'$ has red degree at most 2 because it has not been contracted with any other vertex.
Second observe that the subtrigraph of $G_i'$ induced by the descendants of $H$ is isomorphic to $H_i$.
Moreover, if $u \in V(H_i)$ is adjacent to $m$ outer vertices
in $G_i$, then $u$ is adjacent to $m' \le m$ outer vertices in $G_i'$ (if $m = 2$ and both outer vertices adjacent to $u$ in $G_i$ are in $P_i$ for some $P \in \ca P$, then $m' = 1$; otherwise, $m' = m$).
Thus, no vertex in $G_i'$ has red degree higher than 2.

Now we are ready to define an optimal contraction sequence $C'$ of $G'$ of width 2. If $C_2 = C_1$, then $C'_2$ ends with a single-vertex graph and we simply set $C' := C'_2$. Otherwise $C'_2$ ends with a trigraph $G_i'$ that is isomorphic to a red cycle of length $\ell \le 4$ ($\ell < 4$ only if $|V(P_{n-3})| \ge 2$ for some $P \in \ca P$).
Observe that all contraction sequences of $G_i'$ have width at most 2, and so $C'$ may be any non-partial contraction sequence extending $C'_2$.
  \end{proof} \fi
  
With Proposition~\ref{prop:tww2} in hand, we can complete the proof of Theorem~\ref{thm:tww2}.

\thmone*

\begin{proof}
First we use Theorem~\ref{thm:pruning}: if it returns an optimal contraction sequence of the input graph $G_0$, we immediately know its twin-width. Otherwise, we obtain an original $(H, \ca P)$-graph $G$ with effectively the same twin-width as $G_0$ such that $|V(H)| \le 16k$ and $|\ca P| \le 4k$. Now we use Corollary~\ref{cor:no-stumps} to transform $G$ into a tidy $(H', \ca P')$-graph $G'$ that has effectively the same twin-width as $G$ and that satisfies $|V(H')| \le 112k$ and $|\ca P'| \le 4k$. By transitivity of $=_e$, we obtain $\tww(G_0) =_e \tww(G')$. Finally, let $G''$ be the trigraph obtained from $G'$ by shortening each path in $\ca P'$ to a single vertex.

By Proposition~\ref{prop:tww2}, $\tww(G') = 2$ implies $\tww(G'') = 2$. Conversely, given a contraction sequence $C''$ of $G''$, we can construct a contraction sequence $C'$ of $G'$ of width at most $w(C'')$ by first shortening each path in $\ca P'$ to a single vertex via progressive contractions of consecutive vertices, and then following $C''$; thus, $\tww(G') \le_e \tww(G'')$. Since $\tww(G') \ge 2$, we obtain that $\tww(G'') = 2$ implies $\tww(G') = 2$.
By combining these two implications with $\tww(G_0) = \tww(G')$, we obtain that $\tww(G_0) = 2$ if and only if $\tww(G'') = 2$. Since all operations required to construct $G''$ from $G_0$ can be performed in polynomial time and $|V(G'')| \le 116k$, $G''$ is indeed a linear bikernel for the considered problem. 

Finally, if $\tww(G_0) \le 2$, an optimal contraction sequence of $G_0$ can be computed in the desired time: either it is given by Theorem~\ref{thm:pruning}, or we construct $G''$ in polynomial time and compute an optimal contraction sequence of $G''$ in time $2^{\bigoh(k\cdot\log k)}$ as per Observation~\ref{obs:brute-force}, and then the result follows by the effectiveness in $\tww(G_0) =_e \tww(G') \le_e \tww(G'')$.
\end{proof}

\section{Establishing Theorem~\ref{thm:tww-3+}}\label{sec:tww-3+}

We now move on to the most involved part of the paper: the final step towards proving Theorem~\ref{thm:tww-3+}, which we will outline in the next few paragraphs. Recall that after applying Corollary~\ref{cor:no-stumps}, we obtain a tidy $(H,\ca P)$-graph $G$ with effectively the same twin-width as the input graph, and we ``only'' need to show that the dangling paths in $\ca P$ can be shortened to length bounded by the input parameter without increasing the twin-width too much. As we noted earlier, there is no ``local'' way of shortening a dangling path (see Proposition~\ref{prop:dangling}).

Instead, our approach is based on establishing the existence of a $(\tww(G)+1)$-contraction sequence $C^*$ for the trigraph $G^*$ obtained from $G$ by shortening its long paths; $C^*$ is obtained by  non-trivially repurposing a hypothetical optimal contraction sequence $C = (G_1 = G, G_2,\ldots, G_n)$ of $G$. Once we do that, we will have proven that our algorithm can produce a near-optimal contraction sequence for $G$ by first shortening the paths (by contracting neighbors) and then, when the paths are as short as in $G^*$, by applying Observation~\ref{obs:brute-force}. In other words, the complex machinery devised in this section is required for the correctness proof, while the algorithm itself is fairly simple.

Intuitively, the reason why it is difficult to go from $C$ to $C^*$ is that $C$ has too much ``freedom'': it can perform arbitrary contractions between vertices of the dangling paths, whereas the only limitation is that the red degrees cannot grow too high (this was not an issue in Section~\ref{sec:tww2}, since having twin-width 2 places strong restrictions on how the dangling paths may interact). 
We circumvent this issue by not following $C$ too closely when constructing $C^*$. Instead, we only look at a bounded number of special trigraphs in $C$, called \emph{checkpoints}---forming the \emph{big-step} contraction sequence defined later---and show that we can completely ignore what happened in $C$ between two checkpoints when constructing $C^*$.
Moreover, while checkpoints may be large and complicated, we identify for each checkpoint a small set of characteristics which will be sufficient to carry out our construction; we call this set the \emph{blueprint} of the checkpoint, and it also includes the induced subtrigraph $H_i$ of all descendants of $H$ in a checkpoint $G_i$ (the so-called \emph{core}). 

Our aim is to simulate the transition from one checkpoint to another via a ``controlled'' contraction sequence. For this purpose, we define \emph{representatives}---trigraphs in $C^*$ which match the blueprints of checkpoints in $C$%
---and show how to construct a partial contraction sequence from one representative to another in Subsection~\ref{sub:repr}. In the end, these partial contraction sequences will be concatenated to create $C^*$. 

A crucial gadget needed to define these representatives are \emph{centipedes}; these are well-defined and precisely structured objects which simulate the (possibly highly opaque) connections between red components in the core, and are illustrated in Figure~\ref{fig:sec6} later.
Subsection~\ref{sub:centipedes} is dedicated to establishing the operations required to alter the structure and placement of centipedes between individual representatives.
These operations rely on the fact that a vertex is allowed to have red degree 4; this is one reason why Theorem~\ref{thm:tww-3+} produces sequences whose width may be one larger than the optimum (the other reason is that the possibility of supporting an additional red edge provides more flexibility when moving and altering the centipedes between checkpoints).

One final issue we need to deal with is that the paths in $\ca P$ must be sufficiently long in order to support the creation of the centipedes at the beginning of $C^*$. Fortunately, there is a simple way of resolving this: paths in $\ca P$ which are not long enough can be moved into $H$. However, the cost of this is that each time we add such a path into $H$, the size of $H$---and hence also the bound on the length of the paths in $G^*$---can increase by an exponential factor. For this reason, unlike in the previous section, the bikernel we obtain is not polynomial and not even elementary; its size will be bounded by a tower of exponents whose height is linear in the parameter.

\subsection{Initial Setup}\label{sub:init}
Recall that at this point, we are dealing with a tidy $(H,\ca P)$-graph $G$. Let $C = (G_1 = G, G_2,\ldots, G_n)$ be a contraction sequence of $G$, and recall that $H_i$ denotes the subtrigraph of $G_i$ induced on the descendants of $H$ and that we call vertices not in $H_i$ \emph{outer}.

We say that $G_i$ is \emph{decisive} if $H_i \ne H_{i-1}$ or $i = 1$.
We define \emph{the big-step contraction sequence $C_{BS}$} as a subsequence of $C$ which contains $G_i$ if and only if $G_i$ or $G_{i+1}$ is decisive. We call the trigraphs in $C_{BS}$ \emph{checkpoints}.
Furthermore, we define $f_H:\mathbb{N}\rightarrow \mathbb{N}$ as follows: $f_H(\ell)=(3^{|V(H)|+4}\cdot|V(H)|^2)^{\ell}$. 
Informally, this function describes how big the centipedes will need to be in the $\ell$-th trigraph in $C_{BS}$, counted from the end (the centipedes need to be the largest at the beginning, as they shrink during each transition between checkpoints).

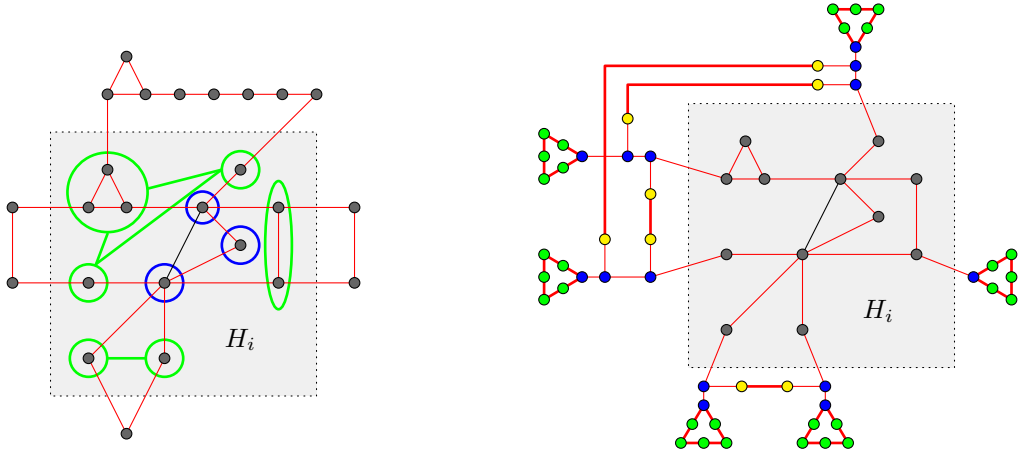
\begin{figure}
\begin{tikzpicture}[line cap=round,line join=round,>=triangle 45,x=1.0cm,y=1.0cm]
\clip(2.3,1.8) rectangle (9.3,7.2);
\fill[fill=black!60,fill opacity=0.1] (6.5,2.5) -- (6.5,6.) -- (3.,6.) -- (3.,2.5) -- cycle;
\draw[dotted] (6.5,2.5) -- (6.5,6.) -- (3.,6.) -- (3.,2.5) -- cycle;

\draw [line width = 1pt, color=blue] (5.,5.) circle (6pt);
\draw [line width = 1pt, color=blue] (4.5,4.) circle (7pt);
\draw [line width = 1pt, color=blue] (5.5,4.5) circle (7pt);
\draw [line width = 1pt, color=green] (3.5,3) circle (7pt);
\draw [line width = 1pt, color=green] (3.5,4.) circle (7pt);
\draw [line width = 1pt, color=green] (4.5,3.) circle (7pt);
\draw [line width = 1pt, color=green] (5.5,5.5) circle (7pt);

\draw [line width = 1pt, color=green] (3.75,5.2) circle (15pt);

\draw[line width = 1pt, color=green] (6,4.5) ellipse (5pt and 0.85cm);

\draw[line width = 1pt, color=green] (3.75,3)-- (4.25,3);
\draw[line width = 1pt, color=green] (3.6,4.25)-- (3.75,4.65);
\draw[line width = 1pt, color=green] (4.28,5.25)-- (5.25,5.5);
\draw[line width = 1pt, color=green] (3.6,4.25)-- (5.25,5.5);

\draw (5.,5.)-- (4.5,4.);
\draw [color=ffqqqq] (5.,5.)-- (4.,5.);
\draw [color=ffqqqq] (5.,5.)-- (6.,5.);
\draw [color=ffqqqq] (6.,5.)-- (6.,4.);
\draw [color=ffqqqq] (6.,4.)-- (4.5,4.);
\draw [color=ffqqqq] (5.,5.)-- (5.5,4.5);
\draw [color=ffqqqq] (5.5,4.5)-- (4.5,4.);
\draw [color=ffqqqq] (4.,5.)-- (3.75,5.5);
\draw [color=ffqqqq] (3.75,5.5)-- (3.5,5.);
\draw [color=ffqqqq] (3.5,5.)-- (4.,5.);
\draw [color=ffqqqq] (4.5,4.)-- (3.5,4.);
\draw [color=ffqqqq] (4.5,4.)-- (3.5,3.);
\draw [color=ffqqqq] (4.5,4.)-- (4.5,3.);
\draw [color=ffqqqq] (5.,5.)-- (5.5,5.5);

\draw [color=ffqqqq] (3.5,3.)-- (4.,2);
\draw [color=ffqqqq] (4.,2)-- (4.5,3.);
\draw [color=ffqqqq] (6.,4.)-- (7,4.);
\draw [color=ffqqqq] (6.,5.)-- (7,5.);
\draw [color=ffqqqq] (7,5.)-- (7,4.);
\draw [color=ffqqqq] (5.5,5.5)-- (6.5,6.5);
\draw [color=ffqqqq] (6.5,6.5)-- (4.25,6.5);
\draw [color=ffqqqq] (4.25,6.5)-- (3.75,6.5);
\draw [color=ffqqqq] (4.,7.)-- (4.25,6.5);
\draw [color=ffqqqq] (4.,7.)-- (3.75,6.5);
\draw [color=ffqqqq] (3.75,6.5)-- (3.75,5.5);
\draw [color=ffqqqq] (3.5,5)-- (2.5,5.);
\draw [color=ffqqqq] (2.5,5.)-- (2.5,4.);
\draw [color=ffqqqq] (2.5,4.)-- (3.5,4.);
\begin{scriptsize}
\draw [fill=black!60] (5.,5.) circle (2pt);
\draw [fill=black!60] (4.5,4.) circle (2pt);
\draw [fill=black!60] (4.,5.) circle (2pt);
\draw [fill=black!60] (6.,5.) circle (2pt);
\draw [fill=black!60] (6.,4.) circle (2pt);

\draw [fill=black!60] (5.5,4.5) circle (2pt);
\draw [fill=black!60] (3.75,5.5) circle (2pt);
\draw [fill=black!60] (3.5,5.) circle (2pt);

\draw [fill=black!60] (3.5,4.) circle (2pt);

\draw [fill=black!60] (3.5,3.) circle (2pt);
\draw [fill=black!60] (4.5,3.) circle (2pt);
\draw [fill=black!60] (5.5,5.5) circle (2pt);

\draw [fill=black!60] (4.,2) circle (2pt);
\draw [fill=black!60] (7,4.) circle (2pt);
\draw [fill=black!60] (7,5.) circle (2pt);
\draw [fill=black!60] (6.5,6.5) circle (2pt);
\draw [fill=black!60] (4.25,6.5) circle (2pt);
\draw [fill=black!60] (4.,7.) circle (2pt);
\draw [fill=black!60] (3.75,6.5) circle (2pt);
\draw [fill=black!60] (2.5,5.) circle (2pt);
\draw [fill=black!60] (2.5,4.) circle (2pt);

\foreach \i in {0.45,0.9,...,2}
{\draw [fill=black!60] (4.25+\i,6.5) circle (2pt);}
\end{scriptsize}

\draw (5.5,3.25) node {$H_i$};
\end{tikzpicture}
\begin{tikzpicture}[line cap=round,line join=round,>=triangle 45,x=1.0cm,y=1.0cm]
\fill[fill=black!60,fill opacity=0.1] (6.5,2.5) -- (6.5,6.) -- (3.,6.) -- (3.,2.5) -- cycle;
\draw[dotted] (6.5,2.5) -- (6.5,6.) -- (3.,6.) -- (3.,2.5) -- cycle;

\draw [color=ffqqqq] (6,4)-- (6.75,3.7);
\draw [color=ffqqqq, line width=1pt] (6.75,3.7)-- (7.25,4)-- (7.25, 3.4)-- cycle;
\draw [fill=blue] (6.75,3.7) circle (2pt);
\draw [fill=green] (7.25, 3.4) circle (2pt);
\draw [fill=green] (7.25,3.7) circle (2pt);
\draw [fill=green] (7.25,4) circle (2pt);
\draw [fill=green] (7.,3.85) circle (2pt);
\draw [fill=green] (7.,3.55) circle (2pt);

\draw (5.,5.)-- (4.5,4.);
\draw [color=ffqqqq] (5.,5.)-- (4.,5.);
\draw [color=ffqqqq] (5.,5.)-- (6.,5.);
\draw [color=ffqqqq] (6.,5.)-- (6.,4.);
\draw [color=ffqqqq] (6.,4.)-- (4.5,4.);
\draw [color=ffqqqq] (5.,5.)-- (5.5,4.5);
\draw [color=ffqqqq] (5.5,4.5)-- (4.5,4.);
\draw [color=ffqqqq] (4.,5.)-- (3.75,5.5);
\draw [color=ffqqqq] (3.75,5.5)-- (3.5,5.);
\draw [color=ffqqqq] (3.5,5.)-- (4.,5.);
\draw [color=ffqqqq] (4.5,4.)-- (3.5,4.);
\draw [color=ffqqqq] (4.5,4.)-- (3.5,3.);
\draw [color=ffqqqq] (4.5,4.)-- (4.5,3.);
\draw [color=ffqqqq] (5.,5.)-- (5.5,5.5);

\draw [color=ffqqqq] (3.5,5)-- (2.5,5.3);
\draw [color=ffqqqq] (2.5,3.7)-- (3.5,4.);

\draw [color=ffqqqq] (3.2,2.25)-- (3.5,3.);
\draw [color=ffqqqq] (3.2,2.25)-- (3.2,2);
\draw [color=ffqqqq] (4.8,2.25)-- (4.5,3.);
\draw [color=ffqqqq] (4.8,2.25)-- (4.8,2);
\draw [color=ffqqqq] (4.8,2.25)-- (4.3,2.25);
\draw [color=ffqqqq] (3.2,2.25)-- (3.7,2.25);
\draw [color=ffqqqq, line width=1pt] (3.7,2.25)-- (4.3,2.25);
\draw [color=ffqqqq, line width=1pt] (3.2,2)-- (2.9,1.5)-- (3.5, 1.5)-- cycle;
\draw [color=ffqqqq, line width=1pt] (4.8,2)-- (4.5,1.5)-- (5.1, 1.5)-- cycle;

\draw [color=ffqqqq] (1.6,5.3)-- (2.5,5.3);
\draw [color=ffqqqq] (1.6,3.7)-- (2.5,3.7);

\draw [color=ffqqqq] (5.5,5.5)-- (5.2,6.25);
\draw [color=ffqqqq] (5.2,6.25)-- (5.2,6.75);
\draw [color=ffqqqq] (5.2,6.25)-- (4.7,6.25);
\draw [color=ffqqqq] (5.2,6.5)-- (4.7,6.5);

\draw [color=ffqqqq] (2.5,4.8)-- (2.5, 5.3);
\draw [color=ffqqqq] (2.5,3.7)-- (2.5, 4.2);
\draw [color=ffqqqq](2.2, 5.3)-- (2.2, 5.8);
\draw [color=ffqqqq](1.9, 3.7)-- (1.9, 4.2);
\draw [color=ffqqqq, line width=1pt] (2.5, 4.8)-- (2.5, 4.2);
\draw [color=ffqqqq, line width=1pt] (2.2, 5.8)-- (2.2, 6.25)-- (4.7, 6.25);
\draw [color=ffqqqq, line width=1pt] (1.9, 4.2)-- (1.9, 6.5)-- (4.7, 6.5);

\draw [color=ffqqqq, line width=1pt] (5.2,6.75)-- (4.9, 7.25)-- (5.5, 7.25)-- cycle;

\draw [color=ffqqqq, line width=1pt] (1.1,5)--(1.1,5.6)--(1.6,5.3)-- cycle;
\draw [color=ffqqqq, line width=1pt] (1.1,3.4)--(1.1,4)--(1.6,3.7)-- cycle;

\begin{scriptsize}
\draw [fill=black!60] (5.,5.) circle (2pt);
\draw [fill=black!60] (4.5,4.) circle (2pt);
\draw [fill=black!60] (4.,5.) circle (2pt);
\draw [fill=black!60] (6.,5.) circle (2pt);
\draw [fill=black!60] (6.,4.) circle (2pt);

\draw [fill=black!60] (5.5,4.5) circle (2pt);
\draw [fill=black!60] (3.75,5.5) circle (2pt);
\draw [fill=black!60] (3.5,5.) circle (2pt);
\draw [fill=black!60] (3.5,4.) circle (2pt);

\draw [fill=black!60] (3.5,3.) circle (2pt);
\draw [fill=black!60] (4.5,3.) circle (2pt);
\draw [fill=black!60] (5.5,5.5) circle (2pt);

\draw [fill=blue] (5.2,6.25) circle (2pt);
\draw [fill=blue] (5.2,6.5) circle (2pt);
\draw [fill=blue] (5.2,6.75) circle (2pt);

\draw [fill=yellow] (4.7,6.25) circle (2pt);
\draw [fill=yellow] (4.7,6.5) circle (2pt);

\draw [fill=blue] (2.5,5.3) circle (2pt);
\draw [fill=blue] (2.5,3.7) circle (2pt);
\draw [fill=blue] (2.2,5.3) circle (2pt);
\draw [fill=blue] (1.9,3.7) circle (2pt);
\draw [fill=blue] (1.6,5.3) circle (2pt);
\draw [fill=blue] (1.6,3.7) circle (2pt);

\draw [fill=green] (1.1, 3.4) circle (2pt);
\draw [fill=green] (1.1, 4) circle (2pt);
\draw [fill=green] (1.1, 3.7) circle (2pt);
\draw [fill=green]  (1.35, 3.85) circle (2pt);
\draw [fill=green] (1.35, 3.55) circle (2pt);

\draw [fill=green] (1.1, 5) circle (2pt);
\draw [fill=green] (1.1, 5.6) circle (2pt);
\draw [fill=green] (1.1, 5.3) circle (2pt);
\draw [fill=green]  (1.35, 5.45) circle (2pt);
\draw [fill=green] (1.35, 5.15) circle (2pt);

\draw [fill=yellow] (2.5,4.8) circle (2pt);
\draw [fill=yellow] (1.9,4.2) circle (2pt);
\draw [fill=yellow] (2.5,4.2) circle (2pt);
\draw [fill=yellow] (2.2,5.8) circle (2pt);

\draw [fill=blue] (3.2,2.25) circle (2pt);
\draw [fill=blue] (4.8,2.25) circle (2pt);
\draw [fill=blue] (3.2,2) circle (2pt);
\draw [fill=blue] (4.8,2) circle (2pt);

\draw [fill=yellow] (3.7,2.25) circle (2pt);
\draw [fill=yellow] (4.3,2.25) circle (2pt);

\draw [fill=green] (2.9, 1.5) circle (2pt);
\draw [fill=green] (3.5, 1.5) circle (2pt);
\draw [fill=green] (3.2, 1.5) circle (2pt);
\draw [fill=green] (3.05, 1.75) circle (2pt);
\draw [fill=green] (3.35, 1.75) circle (2pt);

\draw [fill=green] (4.5, 1.5) circle (2pt);
\draw [fill=green] (5.1, 1.5) circle (2pt);
\draw [fill=green] (4.8, 1.5) circle (2pt);
\draw [fill=green] (4.65, 1.75) circle (2pt);
\draw [fill=green] (4.95, 1.75) circle (2pt);

\draw [fill=green] (4.9,7.25) circle (2pt);
\draw [fill=green] (5.5,7.25) circle (2pt);
\draw [fill=green] (5.2,7.25) circle (2pt);
\draw [fill=green] (5.35,7) circle (2pt);
\draw [fill=green] (5.05,7) circle (2pt);
\end{scriptsize}
\draw (5.5,3.25) node {$H_i$};
\end{tikzpicture}
\vspace{0.5cm}
\caption{\textbf{Left: }A possible trigraph $G_i$. The components of $H_i^R$ (i.e., vertices of $B_i$) are marked by green or blue circles: blue means isolated and green means non-isolated. The edges of $B_i$ are depicted as green lines. \textbf{Right:} a representative for $G_i$. Blue vertices lie in the body of a centipede, yellow vertices form legs of centipedes, and green vertices lie in the tail of a centipede. Note that the tails and the leg paths, which are drawn as thick red lines, contain many vertices.}\label{fig:sec6}
\end{figure}

Now we define the blueprints; these capture the information we need about the checkpoints.

\begin{definition}\label{def:bluep}
The \emph{blueprint} of a trigraph $G_i$ in $C$, denoted $\ca B_i$, is the tuple $(H_i, B_i)$ where $B_i$ is the vertex-labeled graph constructed in the following way:
\begin{itemize}
\item let $H^R_i$ be the subgraph obtained from $H_i$ by removing all black edges and the red edges incident to them;
\item $V(B_i)$ is the set of connected components of $H^R_i$ (the \emph{fully-red components});
\item $U\in V(B_i)$ is labeled \emph{isolated} if for all $u\in U$, we have $N_{G_i}(u)\seq V(H_i)$, and otherwise it is \emph{non-isolated};
\item $UU' \in E$ if there is a (red) path between some vertices $u\in U$ and $u'\in U'$ in $G_i$ which contains no vertex of $V(H_i)$ except for $u$ and $u'$. %
$E(B_i)$ is then the transitive closure of $E$. Observe that isolated vertices have degree 0 in $B_i$ but a non-isolated vertex may have degree 0, too; see Figure~\ref{fig:sec6} for an illustration.
\end{itemize}
\end{definition}

\iflong
A \emph{blueprint} is then a tuple which is the blueprint for some trigraph in $C$.
\fi
A reader may wonder why we define the edge relation of $B_i$ to be transitive. The reason is that with this definition, a connection can disappear only when $H_i$ changes, see Lemma~\ref{lem:non-dec_reachability-increase} below. We are now ready to define centipedes---the technical gadget underlying our entire construction.

\begin{definition}\label{def:centipede}
Let $d \ge 0$ and $\ell \ge 1$. The \emph{centipede} $\text{cen}(d, \ell)$ is the following graph:
\begin{itemize}
\item the vertex set consists of three disjoint sets: \emph{the body} $\{u_i \sep i \in [d+1]\}$, \emph{the legs} $\{v_i \sep i \in [d]\}$, and \emph{the tail} $\{w_i \sep i \in [\ell]\}$; %
\item the edge set is $\{u_iu_{i+1}, u_iv_i \sep i \in [d]\} \cup \{w_iw_{i+1} \sep i \in [\ell - 1] \} \cup \{u_{d+1}w_1, u_{d+1}w_\ell\}$.
\end{itemize}
We say that $u_1$ is the \emph{head of $\text{cen}(d, \ell)$}. %
Let $G'$ be a supergraph of $\text{cen}(d, \ell)$. We say that a leg $v$ of $\text{cen}(d, \ell)$ is \emph{free in $G'$} if its degree is 1 in $G'$.
For $x \in V(G')$, we say that $\text{cen}(d, \ell)$ is \emph{attached to $x$} if $x$ is adjacent to the head of $\text{cen}(d, \ell)$.
\end{definition}

Now we define the representatives; trigraphs representing the blueprints in $C'$. Note that $\ell$ specifies how big the centipedes in the representative need to be.

\begin{definition}\label{def:representative}
Let $\ca B_i = (H_i, B_i)$ be a blueprint and $\ell$ be an integer.
Now a \emph{representative for $\ca B_i$ of order $\ell$}, denoted $R_i^\ell$, is a trigraph that can be built as follows:

\begin{enumerate}
\item start with $H_i$; let $\ca U \seq V(B_i)$ be the set of non-isolated vertices of $B_i$;
\item for each $U \in \ca U$, add a red centipede $\cen_U \cong \text{cen}(\deg_{B_i}(U), f_H(\ell))$;
\item for all $U \in \ca U$, add a red edge between the head of $\cen_U$ and %
some vertex of $U$ which has at least one neighbor outside of $H_i$ in $G_i$ (there must be at least one such vertex because $U$ is non-isolated);%
\item for each edge $UU' \in E(B_i)$, do the following:
\begin{itemize}
\item let $w$ (resp. $w'$) be a leg of $\cen_U$ (resp. $\cen_{U'}$) free in the current trigraph. Add a red path of length $f_H(\ell)$ connecting $w$ and $w'$. We call this path, including the two legs, \emph{the leg path connecting $\cen_U$ and $\cen_{U'}$}.
\end{itemize}
\end{enumerate}
\end{definition}

Notice that the construction in Definition~\ref{def:representative} works by a simple inductive argument because the number of legs of $\cen_U$ is $\deg_{B_i}(U)$ by construction; an illustration is provided in Figure~\ref{fig:sec6}.
Moreover, observe that a representative is not uniquely determined by $i$ and $\ell$; the order of leg paths, as well as the vertices which the centipedes are attached to, can differ. \ifshort ($\clubsuit$)\fi
\iflong However, this will not be an issue in our proof since the legs can be reordered via Observation~\ref{obs:reorder-legs} and centipedes can be moved via Lemma~\ref{lem:move-cat}. \fi

\subsection{Moving Centipedes Around}\label{sub:centipedes}

In this subsection, we describe several operations, i.e., partial contraction sequences, which will later be used to obtain a partial contraction sequence which transitions from a representative of one checkpoint in $C_{BS}$ to a representative of the next checkpoint.
These operations will focus on the centipedes introduced in the previous subsection, and may result in trigraphs which are not necessarily a representative for any graph in $C$; we refer to these obtained trigraphs \emph{intermediate graphs}\iflong.~\fi\ifshort ~and note that their structure can be precisely formalized. ($\clubsuit$)\fi
\iflong
Note that by \emph{the centipedes}, we mean the centipedes added in step 2 of Definition~\ref{def:representative}, i.e., we ignore subtrigraphs of $H_i$ which happen to be centipedes by chance. \fi

\iflong
In the following definition, we state which properties these intermediate graphs have.

\begin{definition}\label{def:intermediate-graph}
Let $G'$ be a trigraph that has an induced subtrigraph $H'$ isomorphic to $H_i$ for some $i \in [n]$ and the maximum red degree in $G'$ is at most $\tww(G) + 1$. %
Moreover, the trigraph $G' - V(H')$ consists of disjoint red centipedes and some red paths connecting their legs (which we call \emph{leg paths}, as in Definition~\ref{def:representative}).
Each of these centipedes is attached to a vertex of $H'$ without black neighbors in $G'$ and none of them has a free leg in $G'$. %
If for any two centipedes in $G'$, there is at most one leg path connecting them and there is no leg path connecting a centipede to itself, then we say that $G'$ is \emph{an intermediate graph}; without this requirement, we say that $G'$ is \emph{a generalized intermediate graph}.

For a centipede $\cen$ in $G'$, let $N(\cen)$ be the set of all centipedes $\cen'$ s.t. there is a leg path connecting $\cen$ and $\cen'$ (possibly, $\cen \in N(\cen)$).
\end{definition}
\fi

We start with a few simple operations on generalized intermediate graphs:

\begin{observation}[Shortening a centipede]
\label{obs:shortening}
We can \emph{shorten the tail} of a centipede $\cen$, by contracting two neighboring vertices belonging to the tail, to any length.  %
Similarly, we can \emph{shorten a leg path} to any non-zero length.
\end{observation}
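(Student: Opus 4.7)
The plan is to verify that each elementary contraction used in the shortening preserves both the red-degree bound of $\tww(G)+1$ and the generalized intermediate graph structure of Definition~\ref{def:intermediate-graph}, after which iteration reaches any target length.

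For the tail, consider $\cen = \text{cen}(d,\ell)$ with tail $w_1,\dots,w_\ell$ and the contraction of two consecutive tail vertices $w_i, w_{i+1}$. Each tail vertex has red degree exactly $2$, with both incidences lying inside the tail together with $u_{d+1}$ for the endpoints $w_1, w_\ell$. Case splitting on whether the contraction is ``interior'' (strictly inside the tail) or ``boundary'' (involving $w_1$ or $w_\ell$) shows that the merged vertex has red degree at most $2$; the only external vertex whose neighborhood is touched is $u_{d+1}$, whose red degree is either preserved (boundary case with $\ell \ge 3$: one red neighbor is swapped for the merged vertex) or strictly decreased (when $\ell = 2$ and the two edges $u_{d+1}w_1, u_{d+1}w_2$ collapse into one). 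Iterating yields any tail length $\ell' \ge 1$, which is the minimum permitted by Definition~\ref{def:centipede}.

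The leg path argument is analogous: interior path vertices have red degree exactly $2$ with both neighbors on the path, so contracting two of them yields a vertex of red degree $2$ that affects no other vertex. When the length must finally drop to $1$, the last contraction may involve a leg and its unique remaining interior neighbor; the merged vertex simply inherits the leg role, its two red neighbors being the corresponding centipede body vertex and the opposite leg. The ``non-zero'' constraint is forced because a length-$0$ path would merge the legs of two distinct centipedes, violating the disjointness in Definition~\ref{def:intermediate-graph}. I do not anticipate a substantive obstacle here; the remaining check is purely bookkeeping---confirming that after each elementary contraction the trigraph still satisfies all conditions of a generalized intermediate graph---which is immediate because each contraction is local to one tail or one leg path and touches neither the core nor any other centipede.
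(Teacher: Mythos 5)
The paper states this observation without proof at all — it appears with a plain \verb|\begin{observation}| and no subsequent proof block, so the authors regard the claim as self-evident. Your write-up is a correct filling-in of the missing details: the degree-two structure of tail and interior leg-path vertices means each elementary contraction keeps red degrees at $2$ and only (weakly) decreases the red degree of $u_{d+1}$ or a body vertex, and your identification of the minimum lengths ($\ell' \ge 1$ for tails since $\text{cen}(d,\ell)$ requires $\ell \ge 1$, and non-zero for leg paths since a length-$0$ leg path would identify legs of two centipedes and break disjointness) matches how the observation is used later, e.g., in Observation~\ref{obs:destroy-cat} where the tail is shortened to a single vertex. Since there is no paper proof to compare against, the only remark is that you have spelled out what the authors considered immediate, and your verification is sound.
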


\iflong
\begin{observation}[Destroying a centipede]
\fi
\ifshort
\begin{observation}[Destroying a centipede, $\clubsuit$]
\fi
\label{obs:destroy-cat}
Let $G'$ be an intermediate graph, let $u \in V(H')$, and let $\cen = \text{cen}(0, \ell)$ be a centipede attached to $u$.
We can \emph{destroy $\cen$}, i.e., there is a  partial contraction sequence of width at most $\tww(G) + 1$ from $G'$ to $G' - V(\cen)$.
\end{observation}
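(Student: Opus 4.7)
The plan is to destroy $\cen$ via a two-phase partial contraction sequence: first collapse the tail of $\cen$ down to length $1$ by iterating Observation~\ref{obs:shortening}, then perform two finishing contractions that eliminate the surviving two centipede vertices by absorbing them into $u$. Throughout, I will track red degrees and verify they never exceed $\tww(G)+1$.

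Setting the scene: with $d=0$, the centipede $\cen = \text{cen}(0,\ell)$ has only a single body vertex $u_1$ and a tail $w_1,\ldots,w_\ell$, with edges forming a red cycle $u_1 w_1 w_2 \ldots w_\ell u_1$; additionally $u_1$ is red-adjacent to $u$, and by Definition~\ref{def:intermediate-graph} the vertex $u$ has black degree $0$. Hence initially $u_1$ has red degree $3$ if $\ell \ge 2$ and $2$ if $\ell = 1$, while $u$'s red degree is already bounded by $\tww(G)+1$. Since $\tww(G)\ge 2$ by Definition~\ref{def:more-paths}, the budget $\tww(G)+1 \ge 3$ comfortably accommodates $u_1$.

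For the first phase I iteratively contract neighboring tail vertices via Observation~\ref{obs:shortening} until the tail has length $1$. Each such contraction happens strictly inside the tail, producing a new red vertex of degree at most $2$, while $u_1$'s red degree stays at $3$ in intermediate steps and may even drop to $2$ in the very last shortening (when two tail vertices that are both red-adjacent to $u_1$ get merged, collapsing two parallel red edges into one). All vertices outside $\cen$ are untouched during this phase. After shortening, $\cen$ has become $\text{cen}(0,1)$: the body vertex $u_1$ and a single tail vertex $w$ joined by a red edge, with $u_1$ still red-adjacent to $u$.

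In the second phase I perform two finishing contractions. First, contract $w$ with $u_1$: since both vertices have only red incident edges (no black neighbors anywhere in the centipede), the new vertex $X$ inherits only the red edge to $u$, so $X$ has red degree $1$ and $u$'s red degree is unchanged. Second, contract $X$ with $u$: because $X$'s only neighbor is $u$ and $u$ has black degree $0$, no black edges get promoted to red and nothing new joins $u$'s neighborhood; the resulting vertex $Y$ inherits exactly $u$'s red neighborhood minus $\{X\}$, which matches the neighborhood of $u$ in $G' - V(\cen)$. Thus the final trigraph coincides with $G'-V(\cen)$ (with $Y$ playing the role of $u$) and no red-degree bound is violated. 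The main obstacle is ensuring that this final merge with $u$ does not blow up degrees elsewhere in $G'$; the key is precisely the black-degree-$0$ condition on $u$ that Definition~\ref{def:intermediate-graph} supplies, which rules out any black-to-red promotion when $X$ and $u$ are identified.
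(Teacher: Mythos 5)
Your proof is correct and follows essentially the same route as the paper's: first shorten the tail to a single vertex via Observation~\ref{obs:shortening}, then contract the surviving tail vertex into the body, and finally contract the result into $u$, relying on the black-degree-$0$ condition on $u$ to keep red degrees bounded. Your write-up is somewhat more explicit about tracking red degrees, but the decomposition and key observations (that $\cen$ has no leg paths, and that the last merge into $u$ is safe because $u$ has no black neighbors) coincide with the paper's argument.
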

\iflong \begin{proof}
First observe that $N(\cen) = \emptyset$. We start by shortening the tail of $\cen$ to a single vertex $v$, as per Observation~\ref{obs:shortening}. After that, the centipede is isomorphic to $\text{cen}(0, 1)$, i.e., it consists of only two vertices, $v$ and $w$, and $w$ is the only red neighbor of $v$. Thus we simply contract $v$ and $w$, and then the obtained vertex and $u$.
  \end{proof} \fi

\ifshort
The remaining operations allow us to 
\begin{enumerate}
\item reorder the legs of a centipede ($\clubsuit$),
\item move a centipede to a different vertex of the core ($\clubsuit$),
\item connect two centipedes by a new leg path ($\clubsuit$),
\item merge two centipedes into a single centipede ($\clubsuit$), and
\item split a centipede into two new centipedes ($\clubsuit$).
\end{enumerate}

Each of these operations can be formalized by carefully prescribing the initial and final intermediate graph, the impact on the length of the involved centipedes, and a proof ensuring that the contraction subsequence between the initial and final intermediate graph has width at most $\tww(G)+1$. ($\clubsuit$)
\fi

\iflong %
\iflong
\begin{observation}[Reordering legs]
\fi
\ifshort
\begin{observation}[Reordering legs, $\clubsuit$]
\fi

\label{obs:reorder-legs} 
Let $G'$ be a generalized intermediate graph with a centipede $\cen = \text{cen}(d, \ell)$.
Let $v_1$ and $v_2$ be two consecutive legs of $\cen$, i.e., $\dist_{\cen}(v_1, v_2) = 3$.
For $i \in [2]$, let $N(v_i) = \{u_i, w_i\}$, let $u_i$ be in the body of $\cen$, and suppose that $w_i$ is not a leg of a centipede.
We can \emph{swap} $v_1$ and $v_2$, i.e., there is a partial contraction sequence of width at most $\tww(G) + 1$ from $G'$ to a generalized intermediate graph $G''$, where $G''$ is obtained by deleting the vertices $v_1$ and $v_2$ and adding the (red) edges $u_1w_2$ and $u_2w_1$.
\end{observation}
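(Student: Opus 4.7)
The plan is to perform the swap by two sequential contractions that effectively cross the two legs with the two body vertices. Specifically, I would first contract the non-adjacent pair $v_1, u_2$ into a new vertex $x$, and then contract the non-adjacent pair $v_2, u_1$ into a new vertex $y$. Under the identifications $y \leftrightarrow u_1$ and $x \leftrightarrow u_2$, the resulting trigraph is exactly $G''$: the body edge $u_1u_2$ is preserved as the red edge $yx$; all remaining body, leg, tail, and (if $u_1$ is the head) core-attachment red edges of $\cen$ survive unchanged up to endpoint relabeling; and the two required new red edges $u_1 w_2 = y w_2$ and $u_2 w_1 = x w_1$ appear precisely as prescribed.

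The main work is to verify the width bound by tracking red degrees. After the first contraction, $x$ inherits the red neighbors of $v_1$ (namely $u_1$ and $w_1$) together with those of $u_2$ (namely $u_1$, the next body vertex $u_3$, and $v_2$); collapsing the duplicate $u_1$ leaves $x$ with red neighbors exactly $\{u_1, u_3, v_2, w_1\}$, giving red degree $4$. No other vertex's red degree increases in this step: $u_1$'s two red neighbors $v_1, u_2$ are merged into the single red neighbor $x$, so $u_1$'s red degree strictly drops; and each of $w_1, u_3, v_2$ simply has one of its red neighbors renamed. After the second contraction, $y$ inherits the current red neighbors of $v_2$ (namely $x$ and $w_2$) together with those of $u_1$ (namely $x$ and one further vertex, either the preceding body vertex $u_0$ or the core attachment $c$, depending on whether $u_1$ is an internal body vertex or the head of $\cen$), giving $y$ red degree $3$. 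Simultaneously, $x$ loses the red neighbors $u_1$ and $v_2$ but gains $y$, so its red degree drops from $4$ to $3$.

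Consequently, the maximum red degree across the two-step subsequence is $4$, which meets the bound $\tww(G)+1$ as soon as $\tww(G) \ge 3$; this is the only regime of interest for the machinery of this section, since the cases $\tww(G) \le 2$ are already settled by Theorem~\ref{thm:tww2}. The main bookkeeping obstacle is confirming the argument in a few local variants---when $u_1$ is the head of $\cen$ (so its third red neighbor is a core vertex $c$ rather than $u_0$), when $u_2$ is the penultimate body vertex so that $u_3 = u_{d+1}$ is additionally attached to the two tail endpoints, or when the leg paths emanating from $v_1$ and $v_2$ lead into the same other centipede (in which case the output is a generalized intermediate graph rather than an intermediate graph, which is all the statement demands)---but in each variant the two contractions remain local to $\{v_1, v_2, u_1, u_2\}$ and merely rename red edges incident to $w_1, w_2, u_3$, and $u_0$ (or $c$), so the red-degree analysis above carries over verbatim.
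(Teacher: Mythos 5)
Your proposal is correct and follows essentially the same approach as the paper: contract $v_1$ with $u_2$, then $v_2$ with $u_1$, observe that the peak red degree is $4 \le \tww(G)+1$, and identify the result with $G''$. You carry out the red-degree bookkeeping in more detail than the paper (which simply notes the intermediate vertex has red degree 4), and you correctly flag the boundary subcases and the role of the assumption $\tww(G)\ge 3$; the only minor omission is that you do not explicitly invoke the hypothesis that $w_1,w_2$ are not already legs to justify that the resulting centipedes remain vertex-disjoint, which the paper spells out at the end.
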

\iflong \begin{proof}
First contract $v_1$ and $u_2$; the obtained vertex has red degree 4. Then contract $v_2$ and $u_1$, which yields a trigraph isomorphic to $G''$. Observe that $w_1$ and $w_2$ are legs of (a centipede isomorphic to) $\cen$ in $G''$. Since we assumed that these two vertices are not legs in $G'$, the centipedes remain disjoint in $G''$, as required by Definition~\ref{def:intermediate-graph}.
  \end{proof} \fi

Now we continue with slightly more complicated operations. 
\iflong \begin{lemma}[Moving a centipede] \fi \ifshort \begin{lemma}[Moving a centipede, $\clubsuit$]\fi
\label{lem:move-cat}

Let $G'$ be an intermediate graph, let $u,v$ be two vertices in $H'$, and let $\cen = \text{cen}(d, \ell)$ be a centipede attached to $u$, where $d \ge 0$ and $\ell \ge 3$. If all leg paths going from $\cen$ have length at least 2, we can \emph{move $\cen$ from $u$ to $v$}, i.e., there is a partial contraction sequence of width at most $\tww(G) + 1$ from $G'$ to an intermediate graph $G''$, where $G''$ is obtained from $G'$ by:
\begin{compactitem}
\item deleting $\cen$ and all its leg paths, and adding a centipede $\cen' = \text{cen}(d, \ell -2)$, attached to $v$;
\item for each deleted leg path from $\cen$ to some $\cen''$ of length $m$, we add a leg path from $\cen'$ to $\cen''$ of length $m-1$;
\item turning red all edges incident to $v$, and adding a red edge between $u$ and $v$.
\end{compactitem}
\end{lemma}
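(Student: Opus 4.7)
The plan is to construct an explicit partial contraction sequence of length $d + 2$ transforming $G'$ into $G''$, with each intermediate trigraph having maximum red degree at most $\tww(G) + 1$. The $d + 2$ contractions split naturally into $d$ routine ``leg-shortening'' contractions and $2$ more delicate ``rotation'' contractions that perform the actual move of the centipede from $u$ to $v$ and simultaneously absorb two tail vertices.

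First, for each $i \in [d]$, I would contract the leg $v_i$ of $\cen$ with its unique neighbor on the associated leg path; this neighbor exists thanks to the hypothesis that every leg path has length at least $2$. The merged vertex then plays the role of the leg $v_i'$ of $\cen'$, the corresponding leg path shortens by one, and the merged vertex has red degree exactly $2$ regardless of the remaining state, so these contractions are safe and mutually independent. Second, I would perform the two rotation contractions. The first contracts $v$ with a carefully chosen tail vertex (for instance $w_1$), which turns all edges incident to $v$ red and creates new red edges from the merged vertex to $u_{d+1}$ and $w_2$. The second merges two vertices near the ``head end'' of the centipede in such a way that the old attachment edge $u - u_1$ is eliminated, the new red edge $uv$ appears via a shared-neighbor contraction, and the new head of $\cen'$ becomes attached to $v$. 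A small case distinction on $d$ (in particular separating $d = 0$, where $u_1 = u_{d+1}$, from the case $d \ge 1$) is needed to specify the second rotation contraction precisely.

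The main obstacle is controlling the red degrees throughout the rotation phase. Two key bounds make the sequence work: first, $u$ has black degree $0$ in $G'$ because it is the attachment vertex of $\cen$ in the intermediate graph, which bounds $u$'s total degree by $\tww(G) + 1$ and gives sufficient room for contractions involving $u$; second, $\deg_{G'}(v) \le \tww(G) - 1$, which is forced by the requirement that $G''$ be an intermediate graph, since the final red degree of $v$ equals $\deg_{G'}(v) + 2$ (its prior edges turned red, plus the new red edges to the new head and to $u$). Combined with the ``slack'' guaranteed by $\ell \ge 3$ in the tail, which provides a short red cycle at $u_{d+1}$ around which the rotation can pivot, these bounds allow one to verify that every intermediate trigraph in the proposed sequence satisfies the red degree bound of $\tww(G) + 1$.
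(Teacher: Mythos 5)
Your decomposition of the $d+2$ contractions does not match the paper's, and I believe your version has a genuine gap in the two ``rotation'' contractions. You contract each leg $v_i$ with its leg-path-neighbor and keep the body $u_1,\ldots,u_{d+1}$ in place, so after step one the centipede is still attached to $u$ via $u_1$. Your first rotation step then merges $v$ with the tail vertex $w_1$, but this puts $v$ on the tail cycle (the merged vertex is adjacent to $u_{d+1}$ and $w_2$, not to $u$ or to the head of the eventual $\cen'$). At that point the target configuration---$v$ adjacent to $u$ and to the head $u_1'$ of $\cen'$, with $\cen'$ intact as $\text{cen}(d,\ell-2)$ and detached from $u$---cannot be produced by a single further contraction: any contraction that removes the edge $u$--$u_1$ must involve $u_1$, and such a contraction either fails to create the edge $u$--$v$ (e.g., merging $u_1$ with $w_2$ leaves $u$ adjacent to the new head rather than to $v$) or blows up the red degree (e.g., merging $u_1$ with $vw_1$ gives red degree $\deg_{G'}(v)+5$ when $d\ge 1$). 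I checked the $d=0$ case as well; the result is still not the prescribed $G''$.

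The paper's proof takes a different and cleaner route that avoids this. It first contracts the two tail endpoints $y_1,y_2$ adjacent to $w_{d+1}$ into a single vertex $y$, then \emph{shifts the body}: it contracts each leg $x_i$ not with its leg-path-neighbor, but with the \emph{next body vertex} $w_{i+1}$, in decreasing order of $i$. After this shift, the new body is $(\{x_1,w_2\},\ldots,\{x_d,w_{d+1}\},y)$, the new legs are the former leg-path-neighbors (so leg paths shorten by one as a byproduct), and crucially the old head $w_1$ is now a pendant red vertex adjacent to both $u$ and the new head $\{x_1,w_2\}$. The final contraction $w_1$ with $v$ then automatically turns all of $v$'s edges red, produces the red edge $uv$ (since $w_1\sim u$), and attaches $\cen'$ to $v$, all in one step. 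The body shift is the idea your plan misses: it is what simultaneously frees up $w_1$, absorbs one leg-path vertex per leg, and prepares exactly the right neighborhoods so that the last contraction achieves the three required effects with controllable red degrees.

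One smaller remark: your bound $\deg_{G'}(v)\le\tww(G)-1$ is derived from the requirement that $G''$ be an intermediate graph, which is part of the lemma's \emph{conclusion}, so using it inside the proof is circular as written. The bound is indeed what must hold for the lemma to be applicable and the paper relies on it implicitly too, but you should state it as a standing assumption about how the lemma is invoked rather than as a fact deduced mid-proof.
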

\iflong \begin{proof}
Let $w_1$ be the head of $\cen$, let $(w_1, \ldots, w_{d+1})$ be the body of $\cen$, let $(x_1, \ldots, x_d)$ be the legs of $\cen$, and suppose $w_ix_i \in R(G')$ for all $i \in [d]$.
Let $y_1$ and $y_2$ be the only two vertices of the tail of $\cen$ adjacent to the body, namely to $w_{d+1}$.
First, we contract $y_1$ and $y_2$ into a new vertex called $y$.
Then, we gradually contract $x_d$ and $w_{d+1}$, followed by contracting $x_{d-1}$ and $w_d$, and so on until we contract $x_1$ and $w_2$ (if $d = 0$, then none of these contractions are carried out).
Finally, we contract $w_1$ and $v$. Since here $w_1$ has a red edge to $u$ and no incident black edges, this contraction turns all edges incident to $v$ red, and creates the red edge $uv$ if it did not exist already.
Observe that now there is a centipede with body $(w_2, \ldots, w_{d+1}, y)$ attached to $v$, as required by the statement.
  \end{proof} \fi

Next we show how to prolong the body and create new leg paths.

\iflong \begin{lemma}[Connecting centipedes] \fi \ifshort \begin{lemma}[Connecting centipedes, $\clubsuit$]\fi\label{lem:connect-cats}
Let $G'$ be an intermediate graph, let $\ell \ge 2$, and let $\cen_1 = \text{cen}(d_1, t_1)$ and $\cen_2 = \text{cen}(d_2, t_2)$ be two centipedes in $G'$,
where $d_1, d_2 \ge 0$, $t_1 \ge 2\ell +5$, and $t_2 \ge 3$.
We can \emph{connect $\cen_1$ and $\cen_2$ by a leg path of length $\ell-1$}, i.e., there is partial contraction sequence of width at most $\tww(G) + 1$ from $G'$ to an intermediate graph $G''$, where $G''$ is obtained from $G'$ as follows:
\begin{compactitem}
\item replace $\cen_1$ by $\cen_1' = \text{cen}(d_1+1, t_1 - 2\ell - 4)$ and $\cen_2$ by $\cen_2' = \text{cen}(d_2+1, t_2 - 2)$, without deleting the leg paths going from $\cen_1$ and $\cen_2$ (each of the two new centipedes has a new leg; we assume it is the one closest to the tail);
\item add a leg path of length $\ell-1$ from $\cen_1'$ to $\cen_2'$, using the new legs.
\end{compactitem}
\end{lemma}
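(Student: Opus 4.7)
The plan is to construct the partial contraction sequence in three phases: first a light reshaping of $\cen_2$, then an iterated folding of $\cen_1$'s tail, and finally a careful stitching step that welds the two centipedes into the desired structure.

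For $\cen_2$, I will first contract the two tail vertices adjacent to $b_{d_2+1}'$, namely $t_1'$ and $t_{t_2}'$, into a single new vertex that will serve as the new last body vertex $b_{d_2+2}'$ of $\cen_2'$. This immediately realizes the required tail shortening of $\cen_2$, though the new leg $l_{d_2+1}'$ of $\cen_2'$ will only appear after the stitching step. The merged vertex has red degree $3$, well within the allowed budget.

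For $\cen_1$, I exploit that $b_{d_1+1}$ together with the tail $t_1, \ldots, t_{t_1}$ forms a red cycle. I will iteratively \emph{fold} this cycle by contracting pairs of symmetrically-opposite tail vertices into single new body-extension vertices; each fold extends the body by one and shortens the tail cycle by two, similarly in spirit to the initial step in the Moving lemma. After $\ell + 2$ such folds, the tail has been shortened by the required $2\ell + 4$ and a linear red chain of $\ell + 2$ new vertices $z_1, \ldots, z_{\ell+2}$ hangs off $b_{d_1+1}$, with the shortened tail cycle attached at $z_{\ell+2}$. The hypothesis $t_1 \geq 2\ell + 5$ is used precisely to ensure that each intermediate tail cycle remains large enough to support the next fold without degenerating.

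The final stitching step reinterprets this linear chain as the required branching structure: $z_{\ell+2}$ (which retains the tail attachment) becomes the new body vertex $b_{d_1+2}$ of $\cen_1'$, while the remaining $\ell$ vertices of the chain form the new leg path of length $\ell - 1$, whose endpoints are the two new legs $l_{d_1+1}$ (of $\cen_1'$, attached to $b_{d_1+1}$) and $l_{d_2+1}'$ (of $\cen_2'$, attached to $b_{d_2+1}'$). To realize this reinterpretation, I will perform a carefully chosen final contraction merging an appropriate chain vertex with a vertex of the $\cen_2$ structure created in the first phase, which simultaneously produces $l_{d_2+1}'$ and frees the other end of the chain to serve as $l_{d_1+1}$. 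The principal obstacle will be bookkeeping: verifying that the chosen stitching contraction, together with any necessary auxiliary contractions for rerouting adjacencies, produces exactly the intermediate graph $G''$ promised by the lemma and that no vertex ever exceeds red degree $\tww(G) + 1$ along the way. In particular, bridging the structural mismatch between the linear chain produced by folding and the branching structure required by $\cen_1'$ is the delicate point; I expect this to require a contraction that momentarily reaches red degree $\tww(G)+1$, which is exactly the budget gained from allowing width one more than $\tww(G)$, and which also justifies why a preparatory Phase~1 on $\cen_2$ (ensuring $b_{d_2+1}'$ has red degree $2$ rather than $3$ at the moment of stitching) is required rather than, say, contracting the chain endpoint directly against an untouched $\cen_2$.
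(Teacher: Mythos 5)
The opening move (contracting the two tail vertices adjacent to $b_{d_2+1}'$) matches the paper's first step, and the folding idea is indeed the engine of the paper's proof. But the \emph{direction} of your fold is wrong, and the resulting structural mismatch is not a bookkeeping detail that a clever stitching contraction can patch up.

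A vertex count shows that $\ell + 4$ contractions are needed in total, so after Phase~1 (one contraction) and Phase~2 ($\ell+2$ contractions), Phase~3 has budget for exactly one contraction. After your $\ell+2$ folds emanating outward from $b_{d_1+1}$, the shortened tail of $\cen_1$ dangles off $z_{\ell+2}$, at the far end of the linear chain $b_{d_1+1} - z_1 - \cdots - z_{\ell+2}$, i.e., at distance $\ell+3$ from $b_{d_1+1}$. In the target $G''$, the tail of $\cen_1'$ must attach to the last body vertex $u_{d_1+2}$, which is at distance $2$ from the body vertex $u_{d_1+1}$ carrying the new leg: the tail must branch off \emph{before} the new leg path. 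No single contraction (respecting red degree at most $\tww(G)+1 = 4$) can move the tail back across the chain. In the variant you seem to intend --- merging $z_{\ell+1}$ with $b_{d_2+1}'$ into a new vertex $x_2$ --- the tail previously at $z_{\ell+2}$ now hangs off $x_2$, a body vertex of $\cen_2'$; the resulting trigraph then has one centipede with two tails and one centipede with none, so it is not an intermediate graph and in particular is not the $G''$ demanded by the statement. Merging a chain vertex with $b_{d_1+1}$ instead creates a vertex of red degree at least $5$ or a spurious cycle $x_1 - z_1 - \cdots - z_\ell - x_1$ disconnected from $\cen_2$, which is no better.

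The paper avoids this by \emph{seeding the fold inside the tail of $\cen_1$} rather than at $b_{d_1+1}$: it first contracts the last body vertex $v_2$ of $\cen_2$ with the tail vertex $u^1_{\ell+1}$ of $\cen_1$, creating a degree-$4$ vertex $x_2$ sitting in the middle of $\cen_1$'s tail cycle, and then folds symmetrically around $x_2$ \emph{towards} $v_1$ (contracting $u^1_\ell$ with $u^1_{\ell+2}$, then $u^1_{\ell-1}$ with $u^1_{\ell+3}$, \dots, finally $u^1_1$ with $u^1_{2\ell+1}$). With this orientation the folded chain $y_1, \ldots, y_\ell$ has $y_1$ adjacent to $x_2$ and $y_\ell$ adjacent to $v_1$, while the unfolded tail arc of $\cen_1$ remains incident to $v_1$ (via $u^1_0$) and to $y_\ell$ (via $u^1_{2\ell+2}$). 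Two last contractions --- $v_1$ with $u^1_{2\ell+2}$, then $u^1_0$ with $u^1_{2\ell+3}$ --- realize the required body/leg/tail branching at both ends simultaneously within the red degree budget. You should redo Phase~2 with this ``contract $v_2$ into the tail, then fold outward'' strategy; your current fold direction cannot be rescued.
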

\iflong \begin{proof}
Recall that $\tww(G) \ge 3$, and so it suffices to avoid vertices of red degree 5 or higher.
For $i \in [2]$, let $(u_1^i, u_2^i, \ldots, u^i_{t_i} = u^i_0 )$ be the tail of $\cen_i$ and let $v_i$ be the vertex of the body of $\cen_i$ adjacent to $u_1^i$ and $u_0^i$. Let $v_i'$ be the neighbor of $v_i$ in the body of $\cen_i$ (or in $H'$ if $d_i = 0$). 
We start by contracting $u_1^2$ and $u_0^2$, into $w_2$.
Then we contract $v_2$ and $u^1_{\ell+1}$, into $x_2$. Observe that $x_2$ has four red neighbors: $u^1_\ell$, $u^1_{\ell+2}$, $w_2$, and $v_2'$.
Then we contract $u^1_\ell$ and $u^1_{\ell+2}$, into $y_1$, then $u^1_{\ell-1}$ and $u^1_{\ell+3}$, into $y_2$, and so on until we contract $u^1_1$ and $u^1_{2\ell+1}$ into $y_\ell$.
Observe that for all $i \in [\ell]$, $y_i$ has red degree 3 at the moment of its creation (and 2 later if $i < \ell$).

Now we contract $v_1$ with $u^1_{2\ell+2}$, into $x_1$, which has four red neighbors: $y_\ell$, $u^1_0$, $u^1_{2\ell+3}$ and $v_1'$.
Finally, we contract $u^1_0$ and $u^1_{2\ell + 3}$, into $w_1$.
Observe that the requirements of the statement are now satisfied: for $i \in [2]$, $\cen_i'$ has the same body and legs as $\cen_i$ up to $v_i'$ (this is void if $d_i = 0$), then the body continues with $x_i$ (which is adjacent to a leg, namely $y_1$ or $y_\ell$), and then the body ends with $w_i$ (which does not have a leg and is adjacent to the tail of $\cen_i'$).
  \end{proof} \fi

\iflong \begin{lemma}[Merging centipedes] \fi \ifshort \begin{lemma}[Merging centipedes, $\clubsuit$]\fi\label{lem:merge-cats}
Let $G'$ be an intermediate graph and assume that all leg paths in $G'$ have length at least $4m+2$, where $m$ is the number of centipedes in $G'$.
Let $\cen_1 = \text{cen}(d_1, t_1)$ and $\cen_2 = \text{cen}(d_2, t_2)$ be two centipedes in $G'$, both attached to $u \in V(H')$,
where $t_2 \ge 2d_1 + 1$.
For $i \in [2]$, let $N_i := N(\cen_i)$, %
and let $N := N_1 \cup N_2 \setminus \{\cen_1, \cen_2\}$.
We can \emph{merge $\cen_1$ and $\cen_2$}, i.e., there is partial contraction sequence of width at most $\tww(G) + 1$ from $G'$ to an intermediate graph $G''$, where $G''$ is obtained from $G'$ as follows:
\begin{compactitem}
\item remove $\cen_1$ and $\cen_2$ and their leg paths, and add a centipede $\cen = \text{cen}(|N|, t_2-2d_1)$, attached to $u$;
\item for $\cen' \in N_1 \cap N_2$, remove one of its two free legs (which were freed in the previous step) and its neighbor $v$, and then add a red edge between the two remaining neighbors of $v$ (which are both in the body of $\cen'$);
\item let $\cen' \in N$ be a centipede connected to $\cen_i$ by a leg path of length $\ell_i$ in $G'$ for some $i \in [2]$; if $\cen'$ is not connected to $\cen_{3-i}$, then we set $\ell_{3-i} = \infty$; add a leg path from $\cen$ to $\cen'$ of length $\min(\ell_1, \ell_2)$;
\item shorten all leg paths (not only those going from $\cen$) by $4m$ vertices.
\end{compactitem}
\end{lemma}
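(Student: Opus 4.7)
The strategy is to absorb $\cen_1$ into $\cen_2$ leg by leg, using portions of $\cen_2$'s tail as workspace. Each absorbed leg consumes two tail vertices of $\cen_2$, matching the promised shrinkage of $2d_1$; the preparatory shortening of leg paths by $4m$ provides the slack needed for the cleanup of duplicate connections in $N_1 \cap N_2$.

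The plan is to begin with a preparation step: using Observation~\ref{obs:shortening}, shorten every leg path in $G'$ by $4m$ vertices. The assumption that all leg paths have length at least $4m+2$ ensures each still has length at least $2$ afterward --- the minimum required by Lemma~\ref{lem:move-cat} and Lemma~\ref{lem:connect-cats}. Next, we iteratively swallow the body of $\cen_1$ into $\cen_2$. For each body vertex of $\cen_1$ that carries a leg (processing from the tail-side of $\cen_1$ toward its head), we perform an operation mirroring the mechanism in the proof of Lemma~\ref{lem:connect-cats}: two consecutive vertices of $\cen_2$'s tail are contracted with appropriate vertices of $\cen_1$'s body, simultaneously extending $\cen_2$'s body by one vertex and re-anchoring $\cen_1$'s leg onto this new body vertex. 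After $d_1$ iterations, all of $\cen_1$'s legs and body vertices have been absorbed, and what remains of $\cen_1$ is a $\text{cen}(0,\cdot)$-like structure attached to $u$ that we eliminate via Observation~\ref{obs:destroy-cat}. At this point the merged centipede is isomorphic to $\text{cen}(d_1 + d_2, t_2 - 2d_1)$.

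After this absorption, each $\cen' \in N_1 \cap N_2$ is connected to the merged centipede by two distinct leg paths, which is forbidden by Definition~\ref{def:intermediate-graph}. To remove this redundancy, for each such $\cen'$ we pick one of the two leg paths and dissolve it: shrink it down, then contract the leg at the merged centipede's end with its body-neighbor, and symmetrically delete the leg at $\cen'$'s end together with its adjacent body vertex, inserting a red edge between the two surviving body-neighbors. The $4m$ vertices of slack generated in the preparation step suffice since each such $\cen'$ uses only a constant number of vertices of slack (taken from the at most $m$ centipedes' leg paths). A final call to Observation~\ref{obs:reorder-legs} brings the legs of the merged centipede into the order implicit in the statement.

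\textbf{Main obstacle.} The principal difficulty is keeping every vertex's red degree bounded by $\tww(G)+1 \ge 4$ throughout, particularly at the vertex $u$, since both $\cen_1$ and $\cen_2$ begin attached to it and the head-side contractions risk spiking $u$'s red degree. I expect the cleanest route is to first apply Lemma~\ref{lem:move-cat} to move $\cen_1$ away from $u$ to some auxiliary vertex before the main absorption sequence, then walk its body back toward $\cen_2$ along the core; this decouples the red-degree accounting at $u$ from the inner contractions, which can then be analyzed in the same way as the four-red-neighbor temporary configurations that arise in Lemma~\ref{lem:connect-cats}. A secondary subtlety is ensuring that the leg-path length accounting is tight: the promised final lengths $\min(\ell_1,\ell_2)$ for each $\cen' \in N_1 \cap N_2$ arise because the longer of the two paths is the one dissolved during the duplicate-cleanup step.
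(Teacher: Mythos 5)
Your proposal takes a genuinely different route from the paper's, but it has gaps that I do not think can be closed without essentially rediscovering the paper's construction.

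The paper merges $\cen_1$ and $\cen_2$ by a \emph{parallel zipper}: after prolonging each body by the number of legs of the other, it moves all of $\cen_1$'s legs up by $d_2$ positions (so that at each index $i$ exactly one of $v^1_i, v^2_i$ carries a leg) and then contracts $v^1_i$ with $v^2_i$ in increasing order. This works because, at the moment $v_i$ is created, its red neighbors are precisely $v_{i-1}$ (already merged), $v^1_{i+1}$, $v^2_{i+1}$, and the unique leg $w_i$ --- four in total. Your sequential absorption lacks this structure: a body vertex of $\cen_1$ already has three red neighbors (two body neighbors plus its leg), so contracting it with a tail vertex of $\cen_2$ (which itself has two) can reach red degree five, and you give no mechanism that avoids this. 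The claim that you can mirror Lemma~\ref{lem:connect-cats} does not transfer, because that lemma contracts tail vertices with tail vertices, never a degree-3 body vertex carrying a leg. Moreover, your suggestion to first move $\cen_1$ away from $u$ via Lemma~\ref{lem:move-cat} is both unnecessary --- the paper's first zipper contraction $v^1_1,v^2_1$ actually \emph{decreases} $u$'s red degree --- and harmful: that lemma turns edges incident to the destination vertex red and only moves to a vertex of $H'$, and you never specify where this auxiliary vertex is or how you undo the damage.

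There is also a budgeting error. The paper's $4m$ shortening is a budget that is \emph{spent during} the process (moving legs up costs $d_2$ per leg path of $\cen_1$, bubble-sort reordering costs at most $d+m$, etc.), and only at the end are all paths topped off to the exact $4m$ deficit. You spend the entire budget upfront, which leaves leg paths at length $2$ and thereby forecloses the later use of Lemma~\ref{lem:move-cat} (which shortens them further) and of any reordering step. Finally, you do not address the case where $\cen_1$ and $\cen_2$ were themselves connected by a leg path in $G'$: after merging, this becomes a leg path joining two legs of the \emph{same} centipede $\cen$, a configuration explicitly forbidden by Definition~\ref{def:intermediate-graph}, and the paper devotes a dedicated argument (triangle contraction after reordering) to eliminate it. Your duplicate-cleanup step only discusses $\cen' \in N_1 \cap N_2$, and even there, "deleting" a leg together with its body vertex leaves a gap in $\cen$'s body which you never show how to close.
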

\iflong \begin{proof}
For $i \in [2]$, let $(v^i_1, \ldots, v^i_{d_i+1})$ be the body of $\cen_i$, $(w^i_1, \ldots, w^i_{d_i})$ be the legs of $\cen_i$, and suppose that for all $j \in [d_i]$, $v^i_jw^i_j$ is an edge.
Let $(x^i_1, x^i_2, \ldots, x^i_{-2}, x^i_{-1})$ be the tail of $\cen_i$, i.e., each vertex of the tail has two indices, namely $j $ and $j - t_i - 1$ where $j > 0$.
Suppose $x^i_1$ and $x^i_{-1}$ are neighbors of $v^i_{d_i+1}$.
First, for both $i \in [2]$, we prolong the body of $\cen_i$ by the number of legs of the other centipede $\cen_{3-i}$, i.e., we contract $x^i_j$ and $x^i_{-j}$ into a new ``body vertex'', called $v^i_{d_i+1+j}$, for all $j \in [d_{3-i}]$ in increasing order.

Second, we move all legs of $\cen_1$ up by $d_2$ positions, i.e., for all $i \in [d_1]$ in decreasing order, we do the following: if $(y_1  = w^1_i, \ldots, y_\ell)$ is a leg path, then for all $j \in [d_2]$ in increasing order, contract $y_j$ and $v^1_{i+j}$.
Let $d = d_1 + d_2$.
Observe that now $v^1_i$ for $i \in [d_2]$ and $v^2_i$ for $i \in [d_2 + 1, d]$ are ``body vertices without legs'', and $v^1_i$ for $i \in [d_2+1, d]$ and $v^2_i$ for $i \in [d_2]$ are ``body vertices with legs''. %
Let us relabel the legs: for $i \in [2]$ and $j \in [d]$, let $w_j$ be the leg adjacent to $v^i_j$ (we do not need to specify the superscript since $v^1_j$ has a leg if and only if $v^2_j$ does not, as argued in the previous sentence).

Third, for $i \in [d]$ in increasing order, contract $v^1_i$ and $v^2_i$, into a new vertex called $v_i$. Observe that $v_i$ has four red neighbors when it is created: $v_{i-1}$ (or $u$ if $i = 1$), $v^1_{i+1}$, $v^2_{i+1}$, and $w_i$. %
Fourth, we destroy the tail of $\cen_1$ by applying Observation~\ref{obs:shortening} until it is reduced to a single vertex $x$,
then we contract $x$ with $v^1_{d+1}$, and finally we contract the obtained vertex with its only neighbor $v_d$.

Now $\cen_1$ and $\cen_2$ are merged into a single centipede $\cen$.
Observe that the obtained trigraph is a generalized intermediate graph, see Definition~\ref{def:intermediate-graph}.
However, $G''$ is required to be non-generalized, and so we must deal with pairs of leg paths going from $\cen$ to the same centipede and with a leg path connecting two legs of $\cen$.

We repeatedly ``swap legs'', i.e., we use Observation~\ref{obs:reorder-legs}, to ensure that: 1) if there is a leg path connecting two legs of $\cen$, then these legs are $w_i$ and $w_{i+1}$ for some $i \in [d-1]$; 2) if there are two leg paths going from $\cen$ to some $\cen'$, then a) their legs in $\cen$ are $w_i$ and $w_{i+1}$ for some $i \in [d-1]$, and b) their legs in $\cen'$ are consecutive as well (their distance is $3$ in $\cen'$).
Observe that 1) and 2a) can be performed together in such a way that each leg path going from $\cen$ is shortened by at most $d$ vertices (this can be achieved by using, e.g., bubble sort).
Similarly, 2b) can be performed in such a way that each leg path going from $\cen'$ (but not necessarily going to $\cen$) is shortened by at most $m$ vertices (recall that $m$ is the number of centipedes in $G'$) since the length of the body of $\cen'$ is at most $m$.
Also recall that leg paths originally going from $\cen_1$ were already shortened by $d_2$ vertices (when they were moved up).
Now simply observe that $d + d_2 + m \le 4m$, as follows from the last sentence of the statement.
We now shorten all leg paths (not only those going from $\cen$) so that each will have been shortened by exactly $4m$ vertices by the entire process.

Suppose there is a leg path $P = (w_i, \ldots, w_{i+1})$ for some $i \in [d-1]$.
We repeatedly use Observation~\ref{obs:shortening} to shorten $P$ until it consists of a single vertex $y$.
At this point there is a triangle $v_iv_{i+1}y$, which we contract into a single vertex.
This vertex is now a ``body vertex without legs'', and we contract it with one of its neighbors.
This way we have dealt with a leg path connecting two legs of $\cen$.
Now let $d := d-2$, and relabel the body and legs of $\cen$ accordingly.

Now suppose there is a centipede $\cen' \ne \cen$ and two leg paths $Y = (y_1 = w_i, \ldots, y_\ell)$ and $Z = (z_1 = w_{i+1}, \ldots, z_{\ell'})$ for some $i \in [d-1]$ s.t. $y_\ell$ and $z_{\ell'}$ are two distinct legs of $\cen'$.
Without loss of generality, suppose $\ell' \ge \ell$, and shorten $Z$ so that its length is $\ell - 1$.
Contract $v_i$ and $v_{i+1}$ and then for $j \in [\ell]$ in increasing order, contract $y_j$ and $z_j$ into $a_j$.
Finally, contract the two vertices in the body of $\cen'$ which are both adjacent to $a_\ell$ (recall that these are adjacent by point 2b) above).
To conclude the proof, we repeat the process described in this paragraph for all other $\cen'$ satisfying these properties.
  \end{proof} \fi

\iflong \begin{lemma}[Creating a new centipede] \fi \ifshort \begin{lemma}[Creating a new centipede, $\clubsuit$]\fi\label{lem:new-cat}
Let $G'$ be an intermediate graph, let $t' \ge 1, \ell \ge 2$,
let $\cen_0 = \text{cen}(d, t)$, where $t \ge 2\ell + 2t' + 15$, and let $u \in V(H')$ be a vertex with no black neighbors of degree at most $\tww(G) - 1$.
We can \emph{create a new centipede attached to $u$ using $\cen$}, i.e., there is partial contraction sequence of width at most $\tww(G) + 1$ from $G'$ to an intermediate graph $G''$, where $G''$ is obtained from $G'$ as follows:
\begin{compactitem}
\item replace $\cen_0$ with $\cen_1 = \text{cen}(d+1, t - 2\ell - 2t' - 14)$ while keeping all leg paths going from it;
\item add a centipede $\cen_2 = \text{cen}(1, t')$, attached to $u$;
\item add a leg path of length $\ell-1$ connecting the free legs of $\cen_1$ and $\cen_2$. 
\end{compactitem}
\end{lemma}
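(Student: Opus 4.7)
The construction will fold portions of $\cen_0$'s tail, analogously to the proof of Lemma~\ref{lem:connect-cats}, in order to simultaneously produce an extension of $\cen_0$'s body by one position with a new free leg $L_1$, the length-$(\ell-1)$ leg path with free leg $L_2$ at the far end, and the body $c_1, c_2$ and cyclic tail $s_1, \ldots, s_{t'}$ of the new centipede $\cen_2$ attached to $u$. The plan is to designate $2\ell + 2t' + 14$ consecutive tail vertices of $\cen_0$ for consumption, partitioned into contiguous regions that are reshaped by an ordered sequence of contractions. The body extension and the leg path are built by the same pair-fold contractions used in Lemma~\ref{lem:connect-cats}. The body and tail of $\cen_2$ are built from a fresh tail segment: in particular, the body vertex $c_2$ is created as a fold of two tail vertices flanking a length-$t'$ segment, so that $c_2$ acquires red edges to both endpoints of that segment and thereby closes the tail cycle into a valid centipede tail.

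The crucial new step, with no counterpart in the earlier lemmas, is attaching $c_1$ to $u$ by a red edge. Since no non-$V(H')$ vertex is adjacent to $u$ in $G'$, we cannot create the edge $uc_1$ without involving $u$ itself in some contraction. We therefore proceed as follows: at the appropriate stage we contract $u$ with a dedicated tail vertex $t_p$ of $\cen_0$, which causes $u$ to absorb $t_p$ and gain $t_p$'s two tail-neighbors $t_{p-1}, t_{p+1}$ as red neighbors, temporarily raising $u$'s red degree by two. A subsequent pair-fold merges $t_{p-1}$ and $t_{p+1}$ into a single new vertex $c_1$, which collapses $u$'s two new red edges into a single red edge $uc_1$. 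The net effect is that $u$'s red degree increases by exactly one (hence is at most $\tww(G)$), and $c_1$ has the required red neighbors $u$, $L_2$, and $c_2$; the precondition that $u$ has no black neighbors and degree at most $\tww(G) - 1$ is used precisely to keep both the intermediate and the final red degrees of $u$ within the allowed bound $\tww(G)+1$.

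Ensuring the red-degree bound $\tww(G) + 1$ throughout the contraction sequence is the principal obstacle and dictates the ordering of contractions, mirroring Lemma~\ref{lem:connect-cats}. Specifically, all pair-folds on the tail must precede the body-extension contractions involving $\cen_0$'s original body-end, so that by the time the body-end is contracted its red neighbors have been thinned down to at most its body predecessor, one leg-path vertex, and one surviving original tail endpoint. The additive slack in the budget $2\ell + 2t' + 14$ over the naive count of produced vertices absorbs the pair-fold overhead for each leg-path internal vertex and each vertex of $\cen_2$'s tail cycle, together with the additional tail vertex consumed by the attachment-to-$u$ contraction. All contractions can be scheduled in polynomial time, yielding the desired intermediate graph $G''$.
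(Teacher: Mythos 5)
Your high-level strategy --- contract $u$ with a dedicated tail vertex $t_p$ of $\cen_0$, then pair-fold the surrounding tail as in Lemma~\ref{lem:connect-cats} while tracking the $2\ell+2t'+14$ budget --- is the same as the paper's, and your identification of the $u$-contraction as the crucial new step (safe because $u$ has no black neighbors and degree at most $\tww(G)-1$, so the temporary red degree of $u'$ is at most $\tww(G)+1$ and then drops back after the first fold) is exactly right. The gap is in how you build the body of $\cen_2$. You define $c_1$ to be the contraction of $t_{p-1}$ and $t_{p+1}$ alone, and claim it is the head of $\cen_2$ with red neighbors $u$, $L_2$, and $c_2$, where $c_2$ is a contraction of two tail vertices flanking a length-$t'$ segment. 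This does not fit together. If you do the full symmetric pair-fold around $u'$ (which you need for the chain that becomes the leg path and the body extension), then $t_{p-2}$ and $t_{p+2}$ are merged into a single vertex, leaving $c_1$ with only two red neighbors --- too few for the head of $\text{cen}(1,t')$. If instead you keep the two sides of $t_p$ separate so that one side supplies $L_2$ and the other supplies $c_2$ and its tail, then $c_2$, being a contraction of two tail vertices flanking a segment, acquires four red neighbors (the two segment endpoints plus the two ``outer'' neighbors), and you never account for how the fourth one is removed.

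The paper's proof handles exactly this with an additional cross-fold that your sketch omits: after pair-folding the tail around $u'$ into a chain $u', x_1, \ldots, x_m$, it first contracts $x_2$ with $x_{t'+3}$ into $y_2$, and then $x_1$ with $x_{t'+4}$ into $y_1$. The head of $\cen_2$ is thus $y_1$, a contraction of four original tail vertices rather than two. These two cross-folds simultaneously close $\cen_2$'s tail cycle (making $y_2$ adjacent to both $x_3$ and $x_{t'+2}$), give $y_1$ exactly three red neighbors ($u'$, $y_2$, and the leg $x_{t'+5}$), and leave the remaining segment $x_{t'+5},\ldots,x_{m-2}$ of the chain to serve as the length-$(\ell-1)$ leg path to the new free leg of $\cen_1$. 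Without this cross-fold the degree bookkeeping for $c_1$ and $c_2$ does not close, so the construction needs to be reworked along the paper's lines.
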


\iflong \begin{proof}
Let $(v_1,\ldots, v_{d+1})$ be the body of $\cen_0$ and $(w_1, \ldots, w_t)$ be the tail of $\cen_0$. Suppose $w_1$ and $w_t$ are neighbors of $v_{d+1}$.
Let $m = \ell + t' + 6$.
Start by contracting $w_{m+1}$ with $u$ into a new vertex $u'$, which has red degree at most $\tww(G) + 1$.
Then, for $i \in [m]$ in increasing order, contract $w_{m+1-i}$ and $w_{m+1+i}$, into a new vertex called $x_i$.
Observe that the neighbors of $x_m$, after it is created, are $x_{m-1}$, $w_{2m+2}$, and $v_{d+1}$.
Contract $x_m$ and $w_t$ into $v_{d+2}$ (its red neighbors are $x_{m-1}$, $w_{2m+2}$, $v_{d+1}$, and $w_{t-1}$),
and then $x_{m-1}$ and $v_{d+1}$ into $v_{d+1}'$ (its red neighbors are $v_d$, $x_{m-2}$ and $v_{d+2}$).
Observe that now we have replaced $\cen_0$ with a centipede $\text{cen}(d+1, t - 2m - 2)$: its body is $(v_1, \ldots, v_d, v_{d+1}', v_{d+2}$).

Now contract $x_2$ with $x_{t' + 3}$ into $y_2$ (its neighbors are $x_1, x_3, x_{t'+2}$, and $x_{t'+4}$),
and $x_{t'+4}$ with $x_1$ into $y_1$ (its red neighbors are $u'$, $x_{t'+5}$ and $y_2$).
Now observe that we have created a centipede $\text{cen}(1, t')$: its body is $y_1$ and $y_2$, and its tail is $(x_3, \ldots, x_{t'+2})$.
Finally, observe that $(x_{t'+5}, \ldots, x_{m-2})$ is a leg path of length $m - 2 - t' - 5 = m - t' - 7 = \ell - 1$, which concludes the proof.
  \end{proof} \fi

\fi

\subsection{Contraction Sequences for Representatives}\label{sub:repr}

In this subsection, we will construct partial contraction sequences between the representatives of consecutive checkpoints by making use of the operations with the centipedes defined in Subsection~\ref{sub:centipedes}. We distinguish between two cases depending on whether the core changes between the checkpoints or not, see Lemmas~\ref{lem:non-dec} and~\ref{lem:dec} below.

We start by observing that if the core does not change between two consecutive checkpoints, then the blueprint of the latter checkpoint contains all edges present in the blueprint of the former checkpoint.

\iflong \begin{lemma} \fi \ifshort \begin{lemma}[$\clubsuit$]\fi\label{lem:non-dec_reachability-increase}
Let $G_i$ and $G_j$ be two consecutive trigraphs in $C_{BS}$ s.t. $H_i = H_j$. This means that $V(B_i) = V(B_j)$, see Definition~\ref{def:bluep}.
It holds that $E(B_i) \seq E(B_j)$. 
\end{lemma}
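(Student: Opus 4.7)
The plan is to first show that the setup forces every contraction in $C$ between $G_i$ and $G_j$ to act on two outer vertices, and then to show that such contractions never destroy a red path in $G$ witnessing an edge of $B_i$.

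First, I would observe that the bag of every vertex grows monotonically along $C$, so once a descendant of $H$ has been contracted with another vertex it cannot be recovered. The assumption $H_i=H_j$ (as the same subtrigraph, with identical vertex set and identical edges and colors) therefore rules out any contraction between $G_i$ and $G_j$ that involves a descendant of $H$. Every such contraction merges two outer vertices into an outer vertex, hence $H_k=H_i$ as trigraphs for every $k\in[i,j]$, and in particular $V(B_i)=V(B_j)$.

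Because $E(B_i)$ is defined as a transitive closure, it suffices to fix two classes $U,U'\in V(B_i)$ witnessing the non-transitive relation directly, i.e., a red path $P=(u=x_0,x_1,\dots,x_m=u')$ in $G_i$ with $u\in U$, $u'\in U'$ and $x_1,\dots,x_{m-1}\notin V(H_i)$, and then to produce such a red path from $u$ to $u'$ in $G_j$. I would do this by induction on $k\in[i,j]$, maintaining the invariant that $G_k$ contains a red path from $u$ to $u'$ whose internal vertices are all outer.

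The only real work sits in the inductive step. When the contraction producing $G_{k+1}$ merges two outer vertices $v_1,v_2$ into an outer vertex $v$, I would patch the current red path by replacing every occurrence of $v_1$ or $v_2$ with $v$, short-circuiting the segment between them if both occur. The contraction rule guarantees that every edge incident to $v_1$ or $v_2$ that was red in $G_k$ contributes a red edge incident to $v$ in $G_{k+1}$, while edges disjoint from $\{v_1,v_2\}$ keep their color, so the patched sequence is still a walk of red edges from $u$ to $u'$ whose internal vertices are all outer. Extracting a simple red path from this walk preserves these properties, giving the desired witness in $G_{k+1}$ and, iterating to $k=j$, yielding $UU'\in E(B_j)$.
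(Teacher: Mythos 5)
Your proof is correct, and it takes a genuinely different (and arguably cleaner) route than the paper's.

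The paper never makes your key structural observation explicit — that $H_i = H_j$ forces every contraction between $G_i$ and $G_j$ to act on two outer vertices. Instead, the paper works "at the other end": it looks at the descendants $p_a'$ of each vertex $p_a$ on the witnessing path $P$ directly in $G_j$, partitions the resulting walk $P'$ into \emph{segments} bounded by descendants of $H$, shows that each segment certifies an edge of $B_j$, and then invokes transitivity of $E(B_j)$ to chain these together. Written this way, the paper's argument is robust to the possibility that some \emph{internal} $p_a'$ lands inside $H_j$ — which your observation shows cannot actually happen here — and so it defensively needs the transitive closure to stitch segments. Your approach instead proves the strictly stronger containment $E^{(i)}\subseteq E^{(j)}$ of the pre-closure relations by patching the red path one contraction at a time, which makes transitivity unnecessary and avoids the (slightly under-argued in the paper) case analysis of where a segment begins and ends. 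The price you pay is that the correctness of the step-by-step patching depends on the literal interpretation of $H_i = H_j$ and on the convention that a contraction always introduces a genuinely new vertex; the paper's segment argument would degrade more gracefully if those conventions were relaxed. Both proofs handle the degenerate length-one witness $(u_1,u_2)$ correctly, though the paper glosses over it (no segment exists when $\ell=2$) while your patching argument handles it trivially since neither endpoint is ever contracted.

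One small presentational point: you re-derive $V(B_i)=V(B_j)$ at the start, but the lemma statement already asserts this as an immediate consequence of $H_i=H_j$; the real content of your first paragraph is the much more useful claim that all contractions between $G_i$ and $G_j$ are outer--outer, which deserves to be stated as the headline of that paragraph rather than as an intermediate step.
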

\iflong \begin{proof}
Let $U_1, U_2 \in V(B_i)$ be such that %
$U_1U_2 \in E(B_i)$.
Suppose that there are vertices $u_1 \in U_1$, $u_2 \in U_2$ s.t. there is a path between $u_1$ and $u_2$ in $G_i$ which contains no vertices of $V(H_i)$ except for $u_1$ and $u_2$.
Let us call this path $P = (p_1 = u_1, p_2, \ldots, p_\ell = u_2)$.
For $a \in [\ell]$, let $p_a'$ be the descendant of $p_a$ in $G_j$, and let $P' = (p_1', \ldots, p_\ell')$.
Observe that if $p_a'$ is a descendant of $H$, then $p_a' \in U$ for some $U \in V(B_j)$ because $H_i = H_j$. %

We say that a subsequence $(p_a', \ldots, p_b')$ of $P'$ is a \emph{segment} if $p_a'$ and $p_b'$ are descendants of $H$, $b > a + 1$, and for all $a < c < b$, $p_c'$ is not a descendant of $H$.
By Definition~\ref{def:bluep}, each segment adds an edge to $B_j$, namely between the two vertices of $B_j$ which contain $p_a'$ and $p_b'$.
Observe that the first segment starts in $U_1$ and the last one ends in $U_2$, and so the segments induce a walk from $U_1$ to $U_2$ in $B_j$.
Since $E(B_j)$ is transitive, we obtain that $U_1U_2 \in E(B_j)$.

Finally, suppose that there is no such path $P$. By Definition~\ref{def:bluep}, $U_1U_2 \in E(B_i)$ only because of the transitive closure, and so we may conclude by induction.
  \end{proof} \fi
  
We are now ready to define the partial contraction sequence between two representatives in the case when the core does not change.

\iflong \begin{lemma} \fi \ifshort \begin{lemma}[$\clubsuit$]\fi\label{lem:non-dec}
Let $G_i$, $G_j$ ($i<j$) be two consecutive trigraphs in $C_{BS}$, such that $H_i= H_j$. 
For any $\ell\in \mathbb{N}$, there is a partial contraction sequence from $R_i^{\ell+1}$ to $R_j^\ell$ whose width is at most $\tww(G)+1$.
\end{lemma}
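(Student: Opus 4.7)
Since $H_i = H_j$, Lemma~\ref{lem:non-dec_reachability-increase} gives $V(B_i) = V(B_j)$ and $E(B_i) \subseteq E(B_j)$; a symmetric argument (no contraction in the interval $[i,j]$ touches the descendants of $H$) shows that every vertex non-isolated in $B_i$ remains non-isolated in $B_j$, while some vertices isolated in $B_i$ may become non-isolated. The plan is to transform $R_i^{\ell+1}$ into \emph{some} representative $R_j^\ell$ (recall that representatives are not uniquely determined by their blueprint and order) by sequentially applying the centipede operations developed in Subsection~\ref{sub:centipedes}.

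First, for each $U \in V(B_j)$ that has turned from isolated in $B_i$ into non-isolated in $B_j$, I would invoke Lemma~\ref{lem:new-cat} to spawn the missing centipede $\cen_U$, drawing its material from one of the already existing centipedes. Next, for each non-isolated $U$ whose required attachment vertex in $R_j^\ell$ differs from the current one, I would invoke Lemma~\ref{lem:move-cat}; and for each new edge $UU' \in E(B_j) \setminus E(B_i)$, I would invoke Lemma~\ref{lem:connect-cats} to add the missing leg path. I would then apply Observation~\ref{obs:reorder-legs} to permute the legs into the order demanded by the target representative, and conclude with Observation~\ref{obs:shortening} to trim every tail and every leg path to the length $f_H(\ell)$ prescribed by $R_j^\ell$. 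Each individual step preserves the width bound $\tww(G)+1$ by the statement of the corresponding operation lemma, so the overall partial contraction sequence has width at most $\tww(G)+1$ as required.

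The main obstacle---and essentially the only real content of the argument---is the bookkeeping that guarantees every centipede of $R_i^{\ell+1}$ remains long enough throughout the process to admit each of its charged operations. There are at most $|V(B_j)| \le |V(H)|$ centipedes and at most $|E(B_j)| \le |V(H)|^2$ leg-path creations or moves; each invocation of Lemma~\ref{lem:new-cat}, \ref{lem:connect-cats}, or~\ref{lem:move-cat} consumes a bounded constant plus a term proportional to a leg-path length, and that leg-path length can be taken to be at most $f_H(\ell) + O(1)$ (since Observation~\ref{obs:shortening} lets us defer any further trimming to the very end). The multiplicative slack $f_H(\ell+1)/f_H(\ell) = 3^{|V(H)|+4}\cdot|V(H)|^2$ is tailored exactly for this accounting: the factor $|V(H)|^2$ accommodates the up to $|V(H)|^2$ leg-path creations, while the factor $3^{|V(H)|+4}$ dominates all per-operation constants and the leg-swap overhead incurred by Observation~\ref{obs:reorder-legs}. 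By scheduling the spawning operations (Lemma~\ref{lem:new-cat}) first on the longest available centipede and performing the global shortening only at the very end, we ensure that no centipede is ever shrunk below the length required by any subsequent step, so the terminal trigraph is a valid $R_j^\ell$.
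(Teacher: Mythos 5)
Your high-level plan matches the paper's: move/spawn/connect centipedes via the operations of Subsection~\ref{sub:centipedes}, then shorten everything to the lengths required by a representative of order $\ell$. But two points that carry the real weight of the paper's argument are missing.

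First, when you ``spawn the missing centipede $\cen_U$, drawing its material from one of the already existing centipedes,'' you have not said which one, and the choice matters. Lemma~\ref{lem:new-cat} does not merely transfer material: the new centipede $\cen_2$ is born together with a leg path to the remnant $\cen_1$ of the donor centipede. If that donor belongs to a component $U'$ with $UU'\notin E(B_j)$, you have created a leg path with no counterpart in the target representative, and nothing in your procedure removes it. The paper deals with this by first proving that a component $U$ which flips from isolated in $B_i$ to non-isolated in $B_j$ necessarily has degree at least one in $B_j$, and moreover that the witnessing path in $G_i$ lands in a component $U'$ that was already non-isolated in $B_i$ (hence already has a centipede in $R_i^{\ell+1}$); the paper then always spawns across a pair $UU'\in E(B_j)\setminus E(B_i)$ where exactly one endpoint has a centipede, and iterates. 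You need this degree claim and this careful pairing; without them the spawning step is not well-defined.

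Second, the tail-length accounting is the content of the lemma, and you only gesture at it. Saying ``the multiplicative slack is tailored exactly for this accounting'' is circular: one has to actually verify that the tail length survives $\bigoh(|V(H)|)$ moves, $\bigoh(|V(B_i)|)$ spawns (each of which divides the tail by roughly $3$, not just subtracts a constant), and $\bigoh(|V(B_i)|)$ connection steps per centipede. The paper writes this as the inequality
$\alpha^{|V(B_i)|}\bigl(\beta^{|V(B_i)|}\bigl(\gamma^{|V(H_i)|}(f_H(\ell+1))\bigr)\bigr) \ge f_H(\ell)$
for the three shrink functions $\alpha,\beta,\gamma$, and then checks it against the definition of $f_H$. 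Your proposed scheduling heuristic (``spawn first on the longest available centipede'') is neither needed nor obviously sufficient; the paper's argument is schedule-agnostic precisely because the inequality above bounds the worst case. Also note that the new leg paths must be created directly at length $f_H(\ell)$ (so that each call to Lemma~\ref{lem:connect-cats} only costs $\bigoh(f_H(\ell))$ tail); you cannot defer \emph{all} shortening to the end, since creating a leg path of length $f_H(\ell+1)$ would eat $\bigoh(f_H(\ell+1))$ of the tail and the accounting would collapse.

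As a minor remark, the leg-reordering step via Observation~\ref{obs:reorder-legs} is unnecessary here: Definition~\ref{def:representative} does not fix a leg order, so any ordering you arrive at yields a valid $R_j^\ell$; the paper omits this step.
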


\ifshort
\begin{proof}[Proof Sketch]
Even though the number of contraction happening between $G_i$ and $G_j$ in $C$ can be huge, the effect on the blueprints (and thus the representatives) is somewhat limited. After checking what could differ between the two representatives, we present a sequence of operations on the centipedes---namely moving, creating, connecting, shortening and destroying centipedes---which is sufficient to obtain $R_j^\ell$ from $R_i^{\ell+1}$. 

To prove that this sequence of operation on centipedes is feasible, we verify that the tails and leg paths are sufficiently longer in $R_i^{\ell+1}$ than in $R_j^\ell$ to sustain the operations without becoming too short. Moreover, the control we have over the operations enables us to make sure that no vertex has a red degree higher than $tww(G)+1$ at any point in the created contraction sequence.
\end{proof}
\fi

\iflong \begin{proof}
First, we recall that $H_i=H_j$ implies $V(B_i)=V(B_j)$ and that Lemma~\ref{lem:non-dec_reachability-increase} gives us $E(B_i)\subseteq E(B_j)$. We start by showing that if $U \in V(B_j)$ is isolated in $B_i$ and non-isolated in $B_j$, then the degree of $U$ in $B_j$ is at least 1. By Definition~\ref{def:bluep}, there is a vertex $u \in U$ that is adjacent to an outer vertex $v$ in $G_j$. Since $U$ is isolated in $B_i$, there must be two outer vertices $u', v' \in V(G_i)$ such that $u'v' \in R(G_i)$ and $u', v'$ are ancestors of $u, v$. Since $G_i$ is connected, there must be a path $P$ from $u'$ to a vertex $w \in V(H_i)$ in $G_i$ such that $w$ is the only vertex of $H_i$ in $P$. However, $U$ is isolated in $G_i$, which means $w \notin U$. Now look at the descendants of $P$ in $G_j$ and observe that they form a walk from $u$ to $w \in V(H_j) \setminus U$. The presence of this walk in $G_j$ implies $UU' \in E(B_j)$, where $U'$ is the fully-red component containing $w$.

Now we explain how to create a partial contraction sequence from $R_i^{\ell+1}$ to $R_j^\ell$. First, for each component having a centipede in $R_i^{\ell+1}$ but not in $R_j^\ell$, we use Observation~\ref{obs:destroy-cat} to destroy the centipede in $R_i^{\ell+1}$.
Additionally, for each $U\in V(B_j)$, if the centipede $\cen_U$ belonging to $U$ is attached to a vertex $u \in U$ such that $N_{G_j}(u) \seq V(H_j)$, we move $\cen_U$ to a vertex $v \in U$ such that $N_{G_j}(v) \nsubseteq V(H_j)$ using Lemma~\ref{lem:move-cat}, by following a path from $u$ to $v$ contained in $U$ (such path exists by connectivity of $U$). Note that such vertex $v$ must exist because $U$ is non-isolated since it has a centipede.
Let us call the obtained trigraph $R^*$.

Let $S = \{UU' \sep UU' \in E(B_j) \setminus E(B_i)\}$, i.e., $S$ contains pairs of components which we need to connect.
Suppose there is $UU' \in S$ such that $U$ has a centipede $\cen_0=\text{cen}(d,t)$ in $R^*$ but $U'$ does not. %
Using Lemma~\ref{lem:new-cat}, we create a centipede $\cen_1=\text{cen}(1,t')$, where $t'=(t-2f_H(\ell)-16)/3$, attached to any vertex of $U'$ which is adjacent to an outer vertex in $G_j$ and with a leg path of length $f_H(\ell)$ to $\cen_0$. Note that the centipede remaining in place of $\cen_0$ after this operation is $\text{cen}(d+1, t')$ since $t - 2(f_H(\ell)+1) - 2t' - 14 = t'$.

We now %
remove $UU'$ from $S$ and repeat this step as long as we can find such a pair in $S$. After this, it is easy to see that every component $U$ appearing in a pair in $S$ has a centipede $\cen_U$ attached to it.
For each remaining pair $UU'$ in $S$, we now use Lemma~\ref{lem:connect-cats} to create a leg path of length $f_H(\ell)$ from $\cen_U$ to $\cen_{U'}$, and this shortens the tail of one centipede by $2\cdot f_H(\ell)+6$ and the tail of the other one by $2$.
We repeat this step until $S$ is empty.

In the last step, we use Observation~\ref{obs:shortening} to shorten each of the leg paths and tails until they are of length $f_H(\ell)$, and we obtain the representative $R_j^\ell$ that we were seeking.
Note that at any point during the presented steps, all outer vertices always have red degree at most $4\leq \tww(G)+1$ (since this property is guaranteed by each operation defined in Subsection~\ref{sub:centipedes}).
As for the vertices of $H_i$, they can have at most $1$ neighbor outside of $H_i$ at once---since at most one centipede is attached to each fully-red component---and hence their red degree is also never higher than $\tww(G)+1$.

To complete the proof, we only need to check that we can always use the described operations with centipedes, namely that the tails of the centipedes are always long enough. Notice that we move each centipede a distance of at most $|V(H_i)|$, the number of times a new centipede is created is upper-bounded by $|V(B_i)|$, and the number of new leg paths created from any given centipede is upper-bounded by $|V(B_i)|$; thus it is enough to check that the tail of every centipede in $R_i^{\ell+1}$ is long enough to sustain all of these operations. 
We remark that the estimates provided below are overestimations of the upper bounds which focus on simplicity of presentation rather than tightness. We begin by recalling how much each operation shortens the tail. If the tail of a centipede $\cen$ has length $x$, then:

\begin{itemize}
\item after creating a connection to another centipede as per Lemma~\ref{lem:connect-cats}, the tail of $\cen$ has length $\alpha(x) := x - 2\cdot f_H(\ell) - 6$;
\item after creating a new centipede connected to $\cen$ as per Lemma~\ref{lem:new-cat}, the tail of $\cen$ has length $\beta(x) := (x - 2\cdot f_H(\ell)-16)/3$, see also above;
\item after moving $\cen$ by one vertex, its tail has length $\gamma(x) := x - 2$.
\end{itemize}

We shall use standard notation for function composition; namely, for $g \in\{\alpha,\beta,\gamma\}$ and $i\in \mathbb{N}$, we denote by $g^i(x)$ the identity function if $i=0$, and $g(g^{i-1}(x))$ otherwise. 

Recall that we first move the centipedes, then we create the new centipedes, and finally we add the connections between existing centipedes. Thus, what we want to ensure is: 

$$\alpha^{|V(B_i)|}\left(\beta^{|V(B_i)|}\left(\gamma^{|V(H_i)|}(f_H(\ell+1)\right)\right) \geq f_H(\ell) $$

Clearly, $f_H(\ell) \geq 1$, which immediately yields $\alpha(x)\geq x - 8 f_H(\ell)$.
Let us denote $t := |V(H)|$ and observe that $1 \le |V(B_i)|\leq |V(H_i)| \leq t$.
We approximate $\beta$ by observing:
\[\beta^t(x) \ge \frac{x}{3^t} - t\cdot(2\cdot f_H(\ell)+16) \]

Using these two approximations, we progressively strengthen the desired inequality to make it cleaner.

\begin{align*}
\left( \dfrac{f_H(\ell+1)-2t}{3^t} - t\cdot (2f_H(\ell)+16) \right) - 8tf_H(\ell) &\geq f_H(\ell) \\
\dfrac{f_H(\ell+1)-2t}{3^t} &\ge f_H(\ell) \cdot (1 + 8t + 2t) + 16t \\
f_H(\ell+1) &\ge f_H(\ell) \cdot 27t \cdot 3^t + 2t \\
f_H(\ell+1) &\ge 3^t \cdot 29t \cdot f_H(\ell)
\end{align*}

Which is true for any positive integer $\ell$ according to the definition of $f_H$; let us recall that $f_H(\ell)=(3^{t+4}t^2)^{\ell}$.

We have shown that the tails are long enough to allow for the required operations.
As for the leg paths, we will show that each leg path that existed already in $R_i^{\ell+1}$ has length at least $f_H(\ell+1)-t$ after all the operations (and hence it can be shortened to $f_H(\ell) \le f_H(\ell+1)-t$ in $R_j^\ell$). Indeed, the leg paths are shortened only by moving centipedes (not by creating new centipedes or connections) and each such path underwent at most $|V(H_i)|\le t$ move operations.
Finally, the newly-created leg paths have the desired length $f_H(\ell)$ and they are never shortened.
\end{proof} \fi

Next, we define the partial contraction sequence between two representatives in the second and final case, namely the case when the core does change.
Note that in this case, we allow the sequence to terminate in a slightly different trigraph (which will be handled by Lemma~\ref{lem:suffix_contract}).

\iflong \begin{lemma} \fi \ifshort \begin{lemma}[$\clubsuit$]\fi\label{lem:dec}
Let $G_i$, $G_j$ ($i<j$) be two consecutive trigraphs in $C_{BS}$ such that $H_i\neq H_j$. 
For any $\ell\in \mathbb{N}$, there is a partial contraction sequence $C^p$ from $R_i^{\ell+1}$ to a trigraph that is a pseudoinduced subtrigraph of $R_j^\ell$ such that $w(C^p) \le \tww(G)+1$.
\end{lemma}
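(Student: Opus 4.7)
Because $G_i$ and $G_j$ are consecutive in $C_{BS}$ while $H_i \neq H_j$, there must be a decisive contraction strictly between them in $C$; by the definition of $C_{BS}$ this forces $j = i+1$ and $G_j$ to be obtained from $G_i$ by contracting two vertices $a, b$, at least one of which lies in $V(H_i)$. I will split on whether (A) both $a, b$ belong to $V(H_i)$, or (B) $a \in V(H_i)$ while $b$ is an outer vertex of $G_i$, and construct $C^p$ in three phases: a \emph{preparation} phase that repositions centipedes, a \emph{core-adjustment} phase that performs the decisive contraction on the representative, and a \emph{blueprint-adjustment} phase that reshapes the centipedes to approximate $R_j^\ell$.

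In the preparation phase I will use Lemma~\ref{lem:move-cat} to move any centipede attached to $a$ (and, in case (A), to $b$) to a different vertex of its fully-red component; such an anchor exists because, by Definition~\ref{def:representative}, the component is connected and contains a vertex adjacent to an outer vertex in $G_j$. In case (B) I will additionally re-anchor the centipede of $a$'s component precisely at $a$, so that its head can play the role of the outer vertex $b$ in the next phase. Then, in the core-adjustment phase, I contract $a$ and $b$ directly in case (A), and in case (B) I contract $a$ with the head of the centipede anchored at it. In both cases the resulting new vertex $ab$ has red degree at most $\tww(G)+1$: its red degree in $G_j$ is at most $\tww(G)$, and it can have at most one additional red neighbor, namely the next vertex of the centipede in case (B), or the head of any centipede still anchored at $a$ or $b$ via a different component in case (A).

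In the blueprint-adjustment phase I reshape the centipedes so that the result is a pseudoinduced subtrigraph of $R_j^\ell$. Because the core has changed, the blueprint can change more drastically than in Lemma~\ref{lem:non-dec}: components of $B_i$ can merge, new edges can appear in $E(B_j)$, and the isolated/non-isolated classification can change in either direction. Merges of components will be handled by Lemma~\ref{lem:merge-cats}, new edges between components already carrying centipedes by Lemma~\ref{lem:connect-cats}, superfluous centipedes (for instance, those attached to components that have become isolated in $B_j$, or that collapsed into a merged component) by Observation~\ref{obs:destroy-cat}, and the final length adjustment of tails and leg paths by Observation~\ref{obs:shortening}. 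The key use of the ``pseudoinduced subtrigraph'' relaxation is here: I do not need to create centipedes for components that are non-isolated in $B_j$ but whose centipede would have nowhere to come from, nor do I need to realize every edge of $E(B_j)$; I may simply leave the corresponding red edges or leg paths absent and still land inside $R_j^\ell$.

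The main obstacle lies in controlling red degrees during the core-adjustment phase in case (B), and in verifying that tail lengths remain sufficient throughout. For the former, the preparation step guarantees that after the move the only extra red neighbor of $a$ outside $H_i$ is the centipede head, so the contraction stays within the budget $\tww(G)+1$. For the latter, I will apply essentially the same accounting as in the proof of Lemma~\ref{lem:non-dec}: each of the finitely many move, merge, connect, and destroy operations used in the blueprint-adjustment phase consumes at most $\bigO(|V(H)| \cdot f_H(\ell))$ vertices of any particular tail, and at most $\bigO(|V(H)|^2)$ such operations are ever needed. The ratio $f_H(\ell+1)/f_H(\ell) = 3^{|V(H)|+4} \cdot |V(H)|^2$ was precisely chosen to absorb this cost, so every tail of length $f_H(\ell+1)$ in $R_i^{\ell+1}$ retains length at least $f_H(\ell)$ in the final trigraph, as required.
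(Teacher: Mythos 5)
Your proof proposal diverges from the paper's approach in the critical case (B), where the decisive contraction merges a core vertex $a$ with an outer vertex $b$, and this divergence introduces a genuine gap in the red-degree bound. You propose to actually realize this contraction inside the representative by contracting $a$ with the head $h$ of the centipede attached to $a$'s component, asserting that the new vertex $ab$ ``can have at most one additional red neighbor, namely the next vertex of the centipede.'' This overlooks the fact that when the centipede has at least one leg — which happens exactly when $a$'s fully-red component has positive degree in $B_i$, a completely generic situation — the head $h$ has \emph{two} non-core red neighbors: the next body vertex $u_2$ and the leg $v_1$. The vertex $ah$ therefore has red degree $\deg_{H_i}(a)+2$, since contracting with an outer vertex turns every edge of $a$ red. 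One can only guarantee $\deg_{H_i}(a)\le\tww(G)$ (indeed every $H_i$-neighbor of $a$ becomes a red neighbor of the contracted vertex $u'$ in $H_j$, and the red degree of $u'$ in $G_j$ is at most $\tww(G)$), and equality is achievable, so $ah$ may reach red degree $\tww(G)+2$, exceeding the budget. The preparation step of re-anchoring the centipede at $a$ does not rescue this and can itself already push $a$ to red degree $\tww(G)+1$ before the problematic contraction.

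The paper sidesteps this entirely by \emph{never} performing the decisive $u$-$v$ contraction inside the representative in case (B). It leaves the core as $H_i$ — which is automatically a pseudoinduced subtrigraph of $H_j$, since the effect of contracting $u$ with an outer $v$ is only to redden black edges and add red edges incident to $u'$ — and performs \emph{only} centipede operations (move within a fully-red component, merge, create, connect, destroy, shorten). The ``pseudoinduced subtrigraph'' relaxation in the lemma statement is precisely what makes this legal: the missing red edges incident to $u'$ in $H_j$ are allowed to simply be absent. Your own proposal even notices that the relaxation lets you ``leave the corresponding red edges... absent,'' but then does not exploit it for the core; exploiting it is the key idea. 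Two further details you gloss over are also handled explicitly in the paper and matter for correctness: in case (A) the subcase where $u,v$ lie in the same fully-red component (where $B_i=B_j$ and nothing beyond shortening is needed) is distinguished from the subcase where they lie in different components, and in the latter the paper also deals with black neighbors $w$ of $u$ or $v$ that lose their last black edge through the contraction, which turns a previously isolated singleton $\{w\}$ into a vertex of a larger component and may trigger additional centipede moves and merges. Your blueprint-adjustment phase mentions the right toolbox of operations, but without the ``don't contract in the core'' insight the width bound does not close.
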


\ifshort
\begin{proof}[Proof Sketch]
By construction of $C_{BS}$, since $H_i \neq H_j$, we know that $j=i+1$, i.e., that there are vertices $u,v \in V(G_i)$ such that $G_j$ is the trigraph obtained from $G_i$ by contracting $u$ and $v$. However, even though one contraction was sufficient to transfer from $G_i$ to $G_j$, more operations might be necessary to move from $R_i^{\ell+1}$ to $R_j^\ell$. Indeed, the construction of our representatives makes it so that the contractions occurring on the outer vertices are not discernible in the blueprints, until some decisive contraction makes all of these changes appear at once. 

To establish the lemma, we prove that a careful sequence of operations on the centipedes---moving, creating, connecting, shortening and destroying---can be used to transform our initial representative into the target one, without creating vertices whose red degree is too high. Similarly as in the previous lemma, here we once again make use of the fact that the leg paths and tails of the centipedes in the initial representative are much longer than in the target representative.
\end{proof}
\fi

\iflong \begin{proof} 
By construction of $C_{BS}$, since $H_i \neq H_j$, we know that $j=i+1$, i.e., that there are vertices $u,v \in V(G_i)$ such that $G_j$ is the trigraph obtained from $G_i$ by contracting $u$ and $v$. Let $u'$ be the new vertex in $G_j$.

Let us first consider the case that $u,v \in V(H_i)$.
Suppose that $u$ and $v$ are in the same component of $H_i^R$.
Observe that $w \in \{u, v\}$ has black degree 0 in $H_i$; otherwise all edges incident to $w$ would be deleted in $H_i^R$ and $\{w\}$ would be a fully-red component.
For this reason, the contraction of $u$ and $v$ affects the rest of $G_i$ only by decreasing the red degree of some vertices by 1. In particular, $B_i = B_j$.
Thus if we contract $u$ and $v$ in $R_i^{\ell+1}$ and then shorten the tails and the leg paths of all centipedes from length $f_H(\ell+1)$ to length $f_H(\ell)$ using Observation~\ref{obs:shortening}, we obtain $R_j^\ell$ (it is clearly a pseudoinduced subtrigraph of itself).

Now suppose that $u,v$ are in fully-red components $U, V$ and $U \ne V$. 
In this case, the blueprint changes because $U$ and $V$ are merged together in $B_j$, into a new component called $U'$.
Let $\cen_U$ and $\cen_V$ be the centipedes attached to $U$ and $V$ in $R_i^{\ell+1}$ (possibly, one or both of them does not exist).
Now to get from $R_i^{\ell+1}$ to $R_j^\ell$, we first contract $u$ with $v$.
Then, if both $\cen_U$ and $\cen_V$ exist, we move $\cen_U$ to the vertex which $\cen_V$ is attached to using Lemma~\ref{lem:move-cat} by following a path in $U'$ and we merge the two centipedes using Lemma~\ref{lem:merge-cats}. Let $R^*$ be the obtained trigraph.

Observe that contracting $u$ and $v$ has also other effects, namely a black neighbor $w$ of $u$ or $v$ may lose its last black edge because of the contraction, which means that the component containing $w$ (necessarily the singleton $\{w\}$), can be merged with other components. Thus, for each such vertex $w$, let $W \in V(B_j)$ be the component containing $w$ and suppose at least one centipede $\cen$ is attached to a vertex $w' \in W$ in $R^*$. %
Now we move one by one all centipedes attached to $W$ to $w'$ using Lemma~\ref{lem:move-cat} by following a path contained in $W$, and merge them there (as soon as there are two together) into one using Lemma~\ref{lem:merge-cats}.
Note that it is possible that $W = U'$ but it is not necessary: $u'$ may have black neighbors in $H_j$.
Finally we shorten the paths and tails of all centipedes to length $f_H(\ell)$ using Observation~\ref{obs:shortening}.

\medskip
Let us now look at the other case: without loss of generality, $u\in V(H_i)$, and $v\notin V(H_i)$. %
The effects of contracting $u$ and $v$ on $G_i$ are the following: turning red (previously black) edges incident to $u$ and adding red edges between $u$ and $N_{G_i}(v)$.
This translates to the blueprint in this way: $H_j$ is equal to $H_i$ after turning red all edges incident to $u$ and adding new red edges between $u$ and $N_{H_i}(v)$. Thus, components of $H_j^R$ can be different from the ones of $H_i^R$, but only the component $U'$ containing $u'$ (initially $U$ containing $u$) can get larger (possibly merging with other components). Indeed, all new red edges in $H_j$ are incident to $u'$, thus the modifications in $H_j^R$ are also limited to $u'$ and its neighborhood, and the only component of $H_j^R$ which can be different from those in $H_i^R$ contains $u'$. 
Moreover, $U'$ might also be adjacent to a component $U'' \in V(B_j)$ such that no component that got merged into $U'$ is adjacent to $U''$ in $B_i$.
Indeed, $u$ has been connected to the elements of $N_{G-V(H_i)}(v)$, and that might have created new paths from $U$ to some other components (and these changes can propagate because of the transitivity of $E(B_j)$). Note that the label \emph{(non)-isolated} can also possibly change for $U$.

Note that since our target trigraph can be a pseudoinduced subtrigraph of $R_j^\ell$, we do not have to add red edges between descendants of $H$. Let us consider the following course of action:
\begin{enumerate}
    \item If $U'$ is isolated in $B_j$, destroy all centipedes attached to vertices of $U'$.
    \item Move one by one all centipedes (if any) attached to $U'$ to some vertex $u'$ of $U'$ such that $N_{G_j}(u')\nsubseteq V(H_j)$, using Lemma~\ref{lem:move-cat} by following a path contained in $U'$, and merge them there (as soon as there are two together) into one with Lemma~\ref{lem:merge-cats}.
    \item Suppose $U'$ is non-isolated in $B_j$ and there is no centipede attached to $U'$ yet. Observe that this is possible only if $u'$ has a neighbor $v'$ outside of $H_j$ such that there is a path from $v'$ to a component $U'' \in V(B_j)$, $U' \ne U''$ (in $G_i$, $v'$ is a neighbor of $v$). Create a centipede $\text{cen}(1,t')$---where $t'=(t-2f_H(\ell)-16)/3$---attached to $u'$ using the centipede of $U''$ (with Lemma~\ref{lem:new-cat}).
    \item Create all other missing edges of $B_j$, always with a leg path of length $f_H(\ell)$, by repeated use of Lemma~\ref{lem:connect-cats}.
    \item Shorten the paths and tails of all centipedes to reach $f_H(\ell)$ using Observation~\ref{obs:shortening}.
\end{enumerate}

Now let us check whether the tails and leg paths in $R_i^{\ell + 1}$ are long enough to follow the described procedure. Let us denote $t := |V(H)|$ and observe $t \ge |V(H_j)| \ge |V(B_j)|$. Let us now look at all the operations that we perform.
\begin{itemize}
    \item Each centipede is moved at most $t$ steps. This shortens the tail by at most $2t$ vertices and the leg paths by at most $t$ vertices, see Lemma~\ref{lem:move-cat}.
    \item We have performed at most $t$ merges of two centipedes. This shortens the tail by at most $2t^2$ vertices and the leg paths by at most $4t^2$ vertices, see Lemma~\ref{lem:merge-cats}. Note that for the tails, we obtained this by upper-bounding the degree of a centipede by $t$.
    \item We create at most $1$ new centipede, namely in step 3 (in the case $v \in V(H_i)$, no new centipede is created). This shortens the tail from length $x$ down to $(x-2f_H(\ell)-16)/3$ but it does not shorten leg paths, see Lemma~\ref{lem:new-cat}.
    \item We create at most $t$ leg paths for each centipede, namely in step 4 (in the case $v \in V(H_i)$, no new leg paths are created). This shortens the tail by at most $(2f_H(\ell)+6)\cdot t$ but it does not shorten leg paths, see Lemma~\ref{lem:connect-cats}.
\end{itemize}

We need to ensure that the length of each tail and leg path is at least $f_H(\ell)$ after all the operations.
For the tails, we must check:
 
$$  (f_H(\ell+1)-2t-2t^2-2f_H(\ell)-16)/3-(2f_H(\ell)+6)t \ge f_H(\ell)  $$

We can see that since $t\geq 1$ and $f_H(\ell)\ge 1$, we can progressively strengthen the condition:
\begin{align*}
    f_H(\ell+1)-2t-2t^2-2f_H(\ell)-16 &\ge 27\cdot t\cdot f_H(\ell)\\
    f_H(\ell+1) &\ge 49\cdot t^2\cdot f_H(\ell)
\end{align*}
The last inequality is satisfied by definition of $f_H$ (recall $f_H(\ell) = (3^{t+4}\cdot t^2)^{\ell}$).

No operation (aside from the final shortening) capable of shortening leg paths was used after the creation of new connections between centipedes, so for leg paths, we only need to check that the leg paths that existed already in $R_i^{\ell+1}$ are long enough. Recall that the only operations which shorten leg paths are moving and merging. Thus it is enough to ensure $f_H(\ell+1) - t - 4t^2 \geq f_H(\ell)$. However, it is easy to observe that the following stronger condition is satisfied: $f_H(\ell+1)\ge 5t^2f_H(\ell)$.

Last but not least, we never create a vertex with red degree higher than $\tww(G)+1$: for outer vertices, this is guaranteed by all operations defined in Subsection~\ref{sub:centipedes}, and each descendant $w$ of $H$ either is adjacent to at most one outer vertex or it has two such neighbors but that can happen only when merging two centipedes attached to $w$, which means that $w$ has a neighbor outside of $H_j$ also in $G_j$.
\end{proof} \fi

We now combine the previous two lemmas to obtain a contraction sequence of the representative for $G_1$---the first trigraph in $C_{BS}$ as well as in $C$. More generally:

\iflong \begin{lemma} \fi \ifshort \begin{lemma}[$\clubsuit$]\fi\label{lem:suffix_contract}
Let $G_i$ be a trigraph in $C_{BS}$ such that there are $\ell$ trigraphs after $G_i$ in $C_{BS}$. There is a contraction sequence of $R_i^\ell$ whose width is at most $\tww(G)+1$. 
\end{lemma}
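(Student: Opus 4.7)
The plan is to proceed by induction on $\ell$, using Lemmas~\ref{lem:non-dec} and~\ref{lem:dec} as the workhorses to ``walk'' between successive representatives of the checkpoints in $C_{BS}$.

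The base case $\ell = 0$ requires a short argument that $G_i$ must in fact coincide with the terminal single-vertex trigraph $G_n$ of $C$: otherwise every contraction strictly after $G_i$ would have to be non-decisive (by the choice of $G_i$ as the last checkpoint in $C_{BS}$), hence take place entirely between two outer vertices; but since $G_n$ is a single vertex lying in $H_n$, some contraction after $G_i$ would need to merge an outer vertex with a core vertex or contract two core vertices---either of which is decisive, a contradiction. Consequently $H_i$ is a single vertex, $B_i$ consists of a single isolated vertex, and $R_i^0$ is itself a single-vertex trigraph, for which the empty contraction sequence trivially satisfies the width bound.

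For the inductive step, assume the statement for $\ell - 1$ and let $G_j$ be the trigraph in $C_{BS}$ immediately following $G_i$, so $G_j$ is followed by $\ell - 1$ checkpoints. If $H_i = H_j$, then Lemma~\ref{lem:non-dec} produces a partial contraction sequence of width at most $\tww(G)+1$ from $R_i^\ell$ to $R_j^{\ell-1}$; concatenating it with the contraction sequence of $R_j^{\ell-1}$ provided by the inductive hypothesis yields the desired sequence of $R_i^\ell$.

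If instead $H_i \neq H_j$, Lemma~\ref{lem:dec} provides a partial contraction sequence of width at most $\tww(G)+1$ from $R_i^\ell$ to some trigraph $R^\star$ that is a pseudoinduced subtrigraph of $R_j^{\ell-1}$. The inductive hypothesis supplies a contraction sequence of $R_j^{\ell-1}$ of width at most $\tww(G)+1$, and Observation~\ref{obs:induced} converts this into a contraction sequence of $R^\star$ of the same width, which I then concatenate with the partial sequence from Lemma~\ref{lem:dec}. The only non-routine ingredient I expect is the base case argument forcing $G_i = G_n$; the rest is a clean composition of already-established machinery, with the apparent gap in Case~2---terminating at a pseudoinduced subtrigraph rather than at $R_j^{\ell-1}$ itself---absorbed cleanly by Observation~\ref{obs:induced}.
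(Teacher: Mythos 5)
Your proposal is correct and takes essentially the same inductive approach as the paper: Lemmas~\ref{lem:non-dec} and~\ref{lem:dec} supply the inductive step, and Observation~\ref{obs:induced} absorbs the pseudoinduced-subtrigraph mismatch. Your base case is somewhat sharper than the paper's---the paper merely observes that $R_i^0$ is either $K_1$ or a red $P_3$ and that any contraction sequence of such a trigraph has width at most $2\le\tww(G)+1$, without pinning down that $G_i=G_n$---but both arguments are valid.
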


\iflong \begin{proof}
We will prove the result by induction.
If $\ell=0$, then $R_i^0$ is either $K_1$---if the single vertex of the corresponding blueprint is isolated---or a red $P_3$ if it is non-isolated (since a centipede $\text{cen}(0, f_H(0)) \cong P_2$ is attached to $H_i$). Hence, in this case any contraction sequence of $R_i^\ell$ satisfies the requirement.

If $\ell\geq 1$, then there is a trigraph $G_j$ right after $G_i$ in $C_{BS}$.
Let $C_j$ be the contraction sequence for $R_j^{\ell-1}$ provided by the induction hypothesis.
If $H_i = H_j$, then by Lemma~\ref{lem:non-dec}, there is a partial contraction sequence $C^p$ from $R_i^\ell$ to $R_j^{\ell-1}$ such that $w(C^p) \le \tww(G)+1$, and we obtain our desired contraction sequence for $R_i^\ell$ by simply concatenating $C^p$ and $C_j$.
Otherwise, by Lemma~\ref{lem:dec}, there is a partial contraction sequence $C^p$ from $R_i^\ell$ to a pseudoinduced subtrigraph $R^*$ of $R_j^{\ell-1}$ such that $w(C^p) \le \tww(G) + 1$. Let $C_j^*$ be the contraction sequence of $R^*$ such that $w(C_j^*) \le w(C_j)$ obtained by using Observation~\ref{obs:induced}. Now we obtained our desired  contraction sequence for $R_i^\ell$ by concatenating $C^p$ and $C_j^*$.
\end{proof} \fi

One more thing we need to do is initialization: the trigraph we are interested in, i.e., the trigraph obtained from $G$ by shortening all dangling paths to bounded length, does not contain any centipedes. This is handled by (the proof of) Theorem~\ref{thm:paths_shortening} below, which also summarizes the outcome of this subsection.

\iflong
\begin{theorem}
\fi
\ifshort
\begin{theorem}[$\clubsuit$]
\fi
\label{thm:paths_shortening}
Let $G$ be a tidy $(H, \ca P)$-graph such that $\tww(G) \ge 3$ and all paths in $\ca P$ have length at least $3\cdot f_H(|2V(H)|^2) +9$ and let $G'=(H, \ca P')$ be any trigraph obtained from $G$ by shortening paths in $\ca P$ to arbitrary lengths no shorter than $3\cdot f_H(2|V(H)|^2) +9$. Then $\tww(G') \le \tww(G) + 1$.
\end{theorem}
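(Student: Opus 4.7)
I would combine the machinery already assembled in Section~\ref{sec:tww-3+}. Take any optimal contraction sequence $C = (G_1 = G, G_2, \ldots, G_n)$ of $G$ and its big-step sequence $C_{BS}$. The first step is to bound $|C_{BS}| - 1 \le 2|V(H)|^2$: every step in $C_{BS}$ past $G_1$ is caused by a change of $H_i$, which is either a contraction of two vertices of $V(H_{i-1})$ (contributing at most $|V(H)|-1$ steps in total) or a color/presence change of an edge among core vertices (each unordered core pair admits at most a bounded constant number of such transitions, since edges among core vertices evolve monotonically from black to red to absent). Let $\ell := |C_{BS}| - 1 \le 2|V(H)|^2$, and fix a representative $R_1^\ell$ for the blueprint $\ca B_1 = (H, B_1)$ of $G_1$.

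\textbf{Two-stage construction.} Lemma~\ref{lem:suffix_contract} already supplies a contraction sequence of $R_1^\ell$ of width at most $\tww(G)+1$, so it suffices to exhibit a partial contraction sequence $C^p$ from $G'$ to $R_1^\ell$ of width at most $\tww(G)+1$; concatenation then yields the desired contraction sequence of $G'$. Because $G'$ and $R_1^\ell$ share the same core $H$, the initialization $C^p$ only touches outer vertices, and by tidiness the substructures of $R_1^\ell$ associated with different edges of $B_1$ live on disjoint outer vertex sets, so each dangling path in $\ca P'$ can be handled independently. For each tidy red dangling path $P = u - x_1 - \cdots - x_L - v$ in $\ca P'$ with $L \ge 3 f_H(\ell)+9$, I would construct, using only contractions within $P$, two centipedes $\text{cen}(1, f_H(\ell))$ attached to $u$ and $v$ together with a leg path of length $f_H(\ell)$ joining their legs, by folding roughly the first and last $f_H(\ell)+4$ vertices of $P$ into the two centipedes and leaving the middle $\approx f_H(\ell)-1$ vertices as the internal leg-path (applying Observation~\ref{obs:shortening} where needed); the vertex count matches because $L+1 \ge 3f_H(\ell)+10$.

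\textbf{Main obstacle.} The delicate part is carrying out the folding of a linear segment of $P$ into the cycle-plus-body-plus-leg structure of a centipede while respecting the width bound. Creating the cycle at the tail of a centipede inevitably requires contracting two non-adjacent vertices of the path, which produces a vertex of red degree $4$, and this is exactly $\tww(G)+1$ when $\tww(G) = 3$. This is precisely the reason one loses a unit of width compared to the optimum and is compatible with Proposition~\ref{prop:dangling}, which forbids any purely twin-width preserving local shortening of dangling paths. By scheduling the contractions carefully---creating the tail cycle first, then attaching the body vertex, then the leg, and only finally shortening tails and leg paths to their prescribed lengths via Observation~\ref{obs:shortening}---one can ensure that no vertex ever exceeds red degree $\tww(G)+1$ throughout the initialization. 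After initialization, all remaining contractions come from Lemma~\ref{lem:suffix_contract} and are already within the budget, completing the argument.
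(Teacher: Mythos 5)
Your proposal follows essentially the same route as the paper's proof: bound the number of checkpoints by $2|V(H)|^2$, invoke Lemma~\ref{lem:suffix_contract} to get a width-$(\tww(G)+1)$ contraction sequence of $R_1^\ell$, and then build an initialization from $G'$ to $R_1^\ell$ by folding each red dangling path into a $\text{cen}(1,f_H(\ell))$ centipede at each endpoint linked by a leg path of length $f_H(\ell)$ (using tidiness to observe that every non-isolated fully-red component of $B_1$ is a singleton connector of degree one, so the centipedes have exactly one leg). The paper's initialization is the same folding, realized by first shortening the path via consecutive contractions and then performing four non-adjacent contractions $(u_0,u_{f_H(\ell)+3})$, $(u_1,u_{f_H(\ell)+2})$, $(u_{2f_H(\ell)+4},u_{3f_H(\ell)+7})$, $(u_{2f_H(\ell)+5},u_{3f_H(\ell)+6})$, each keeping red degree at most $4 \le \tww(G)+1$, exactly the degree-$4$ bottleneck you identify.

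===END===
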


\iflong \begin{proof}

Observe that $G'$ is a tidy $(H, \ca P')$-graph for some $\ca P'$.
We want to construct a contraction sequence $C'$ of width at most $\tww(G) + 1$ from an optimal contraction sequence $C$ of $G$.
Remark that for any optimal contraction $C$ of $G'$, it holds that
$|C_{BS}|\leq 2|V(H)|^2$ because $C_{BS}$ cannot contain two consecutive non-decisive steps, and each decisive step either adds a new red edge or contracts together two vertices of $H'$ (or both), thus the number of decisive steps is bounded by $|V(H)|(|V(H)|-1)/2 + (|V(H)|-1)\leq |V(H)|^2-1$.

Thus, using Lemma~\ref{lem:suffix_contract} there exists a positive integer $\ell \leq 2|V(H)|^2$, such that $R_1^\ell$ has twin-width at most $\tww(G)+1$. It is important to notice that here $R_1^\ell$ is well-defined (for a given $\ell$) even if we haven't fixed any contraction sequence $C$ to define $C_{BS}$. This comes from the fact that in any big step contraction sequence of $G$, the graph $G_1$ is exactly $G$ by definition, so we can define the first blueprint without knowing anything about the contraction sequence itself. We now consider $\ell$ fixed (even though we do not know it)
and will prove that we can construct a partial contraction sequence leading from $G'$ to $R_1^\ell$ of width at most $\tww(G)+1$.

Let $u \in V(H)$ be a connector in $G'$. By definition of tidiness, $u$ is adjacent to only one vertex of $\sump\ca P'$ in $G'$, $u$ has a unique neighbor $u'$ in $H'$, and $u'$ has positive black degree. Thus $u$ is the only vertex in its fully-red component, recall Definition~\ref{def:bluep}, and so each centipede in the representative $R_1^\ell$ of $G'$ has exactly one leg. 

Observe that since $\ell\le 2|V(H)|^2$, all paths in $\ca P'$ contain at least $3f_H(\ell)+9$ vertices. For each path $(u_0,..,u_x)$ in $\ca P'$, $C'$ first repeatedly contracts $u_{3\cdot f_H(\ell)+7}$ with the next vertex in line, until the end of the path. Then, $C'$ contracts together the pairs of vertices $(u_0,u_{f_H(\ell)+3})$, $(u_1,u_{f_H(\ell)+2})$, $(u_{2\cdot f_H(\ell)+4},u_{3\cdot f_H(\ell)+7})$ and finally $(u_{2\cdot f_H(\ell)+5},u_{3\cdot f_H(\ell)+6})$. Note that by symmetry, it does not matter in which direction we index the vertices of the path. Also note that all vertices throughout this process have red degree at most $4 \le \tww(G) + 1$.

Observe that the obtained trigraph is exactly $R_1^\ell$. %
Now it suffices to continue by applying Lemma~\ref{lem:suffix_contract}, and we obtain a contraction sequence for $G'$ whose width is at most $\tww(G)+1$.
\end{proof} \fi

\subsection{Putting Everything Together}

We are now ready to prove Theorem~\ref{thm:tww-3+}. 

\thmtwo*
 
 \ifshort
 \begin{proof}[Proof Sketch]
 We begin by handling the case where $\tww(G)\le 2$ by invoking Theorems~\ref{thm:tww1-deciding}~and~\ref{thm:tww2}. For the rest of the proof, we assume $\tww(G)\ge 3$.
Here, we first use Theorem~\ref{thm:pruning} to get in $n^{\bigoh(1)}$ time an original $(H, \ca{P})$-graph such that $|V(H)|\leq 16k$ and $|\ca{P}|\leq 4k$. Recall that this $(H, \ca P)$-graph has effectively the same twin-width as $G$, so any optimal contraction sequence for it can be lifted to an optimal one for $G$.
Using Corollary~\ref{cor:no-stumps}, we obtain---also in $n^{\bigoh(1)}$ time---a tidy $(H', \ca{P}')$-graph $G'$ such that $|V(H')|\leq 112k$, $|\ca{P}'|\leq 4k$, and $G'$ still has effectively the same twin-width as $G$.

At this point, we check the length of each path in $\ca P'$, whereas if we identify a path $P\in \ca P'$ whose length is below the bound required by Theorem~\ref{thm:paths_shortening} (w.r.t.\ the current size of $H'$), we add $P$ into $H'$ and update our choices of $H'$ and $\ca P'$ accordingly. After exhaustively completing the above check, we are guaranteed to have satisfied the conditions of Theorem~\ref{thm:paths_shortening}. We now begin constructing our contraction sequence for $G'$ as follows. First, we iteratively contract the paths which remain in $\ca P'$ until they have length precisely $3\cdot f_{H'}(2|V(H')|^2) +9$; recall that by Theorem~\ref{thm:paths_shortening}, we are guaranteed that the resulting graph $G^*$ has twin-width at most one larger than $G$ (and also $G'$). Moreover, the number of vertices in $G^*$ can be upper-bounded by a non-elementary function of our parameter, specifically $2^{2^{\dots^{2^{\bigoh(log(k))}}}}$ where the height of the tower of exponents is upper-bounded by $4k+3$. At this point, we apply Observation~\ref{obs:brute-force} to construct an optimal contraction sequence of $G^*$ and append it after the trivial sequence of contractions which produced $G^*$. The proof now follows by the fact that $G'$ has effectively the same twin-width as $G$.
 \end{proof}
 \fi
 
\iflong \begin{proof}
 We begin by handling the case where $\tww(G)\le 2$ by invoking Theorems~\ref{thm:tww1-deciding}~and~\ref{thm:tww2}. For the rest of the proof, we assume $\tww(G)\ge 3$.
Here, we first use Theorem~\ref{thm:pruning} to get in $n^{\bigoh(1)}$ time an original $(H, \ca{P})$ graph such that $|V(H)|\leq 16k$ and $|\ca{P}|\leq 4k$. Recall that this $(H, \ca P)$-graph has effectively the same twin-width as $G$, so any optimal contraction sequence for it can be lifted to an optimal one for $G$.
Using Corollary~\ref{cor:no-stumps}, we obtain---also in $n^{\bigoh(1)}$ time---a tidy $(H', \ca{P}')$-graph $G'$ such that $|V(H')|\leq 112k$, $|\ca{P}'|\leq 4k$, and $G'$ still has effectively the same twin-width as $G$.

From now on, we will use $t=|V(H')|$.
If some of the paths in $\ca{P}'$ are strictly shorter than $3\cdot f_{H'}(2t^2)+9$ (remember that $f_{H'}(\ell)= (3^{t+4}t^2)^\ell$), we update $H'$ by adding all vertices of those paths to it, and we remove them from $\ca{P}'$. It is factually the same graph $G'$, but now the $H'$ part is bigger. Note that the value of $f_{H'}(2t^2)$ is increasing each time a path is added to $H'$ because both the function and arguments are changing. Hence, after a path is added to $H'$, another paths might become too short.  We repeat this process until it holds that all paths remaining in $\ca P'$ are longer than the final value of $f_{H'}(2t^2)$.

Let us bound the size of $H'$ after this operation. When adding one single path to $H'$, the new size is at most $t+8+3\cdot f_{H'}(2t^2)=t+8+3(3^{t+4}t^2)^{2t^2}\leq 4(3t)^{6t^3}\leq t^{13t^3}\leq t^{t^6}=2^{2^{6log(t)}log(t)}$ if we assume that $t\geq 3$, which we can since $t\leq2 \Rightarrow k\leq1 \Rightarrow \tww(G)\leq 2$ since $H$ contains at least the endpoints of the feedback edge set, combined with Theorem~\ref{thm:fenone}. In the worst case, we add each of the paths in $\ca P'$ (at most $4k$) to $H'$, which leads to a maximum size of $H'$ of $2^{2^{\dots^{2^{\bigoh(log(k))}}}}$  where the tower's height is $4k+2$.

We now define $G^*=(H^*, \ca{P}^*)$ by setting $H^*=H'$ and shortening all the paths in $\ca(P')$ to a length of $3\cdot f_{H^*}(2|V(H^*)|^2)+9$ vertices. Note that to obtain a corresponding partial contraction sequence $C_{\rightarrow G^*}$ from $G'$ to $G^*$, it suffices to greedily contract arbitrary consecutive vertices in each path until it has the right size. This greedy contraction sequence only affects vertices from dangling paths, without ever increasing their degree beyond $2$, so the width of $C_{\rightarrow G^*}$ is at most the maximum between $2$ and the maximum red degree in $G^*$.

By Theorem~\ref{thm:paths_shortening}, it holds that $\tww(G^*)\leq \tww(G)+1$. Moreover, $|V(G^*)|$ is bounded by $2^{2^{\dots^{2^{\bigoh(log(k))}}}}$ where the tower's height is $4k+3$.
Now we apply Observation~\ref{obs:brute-force} to find an optimal contraction sequence of $G^*$ in time $2^{\bigoh(m\cdot\log m)}$, where $m = |V(G^*)|$.
Using Knuth's up-arrow notation, we can upper bound this running time by $2 \upuparrows \bigoh(k)$~\cite{Knuth1976}.

Now, to lift the solution $C^*$ obtained for $G^*$ to our initial graph $G$, we need first to prefix it with $C_{\rightarrow G^*}$ to obtain a contraction sequence of width at most $\tww(G)+1$ for $G'$. Moreover, we know that $G'$ has effectively the same twin-width as the original $(H, \ca P)$-graph it was created from, which itself has effectively the same twin-width as the initial graph $G$. By Definition~\ref{def:effective}, we can obtain in polynomial time a contraction sequence $C$ of width at most $\tww(G)+1$ for $G$.

The total running time is then $\left( 2 \upuparrows \bigoh(k)\right) \cdot n^{\bigoh(1)}$.
\end{proof} \fi

\section{Concluding Remarks}

While the feedback edge number parameterization employed by our algorithms is highly restrictive, we believe Theorems~\ref{thm:tww2} and~\ref{thm:tww-3+} represent a tangible and important first step towards more general algorithms for computing near-optimal contraction sequences, with the ``holy grail'' being a fixed-parameter algorithm for computing near-optimal contraction sequences parameterized by twin-width itself. The natural next goals in this line of research would be to obtain fixed-parameter algorithms for the problem when parameterized by treedepth~\cite{sparsity} and then by treewidth~\cite{RobertsonS86}. 

Towards this direction, we note that it is not at all obvious how one could apply classical tools such as \emph{typical sequences}~\cite{BodlaenderGHK95,DBLP:conf/innovations/EibenGHJK22,BodlaenderJT23} in the context of computing contraction sequences. At least for treedepth, it may be possible to employ the general approach developed in Section~\ref{sec:tww-3+}---in particular, establishing the existence of a near-optimal but ``well-structured'' contraction sequence and using that to identify safe reduction rules---but the details and challenges arising there seem to differ significantly from the ones handled in this article. 

Last but not least, we remark that the algorithms developed here rely on reduction rules which are provably safe, simple to implement, and run in polynomial time; we believe these may potentially be of interest for heuristic and empirical purposes. We also believe that the additive error of $1$ incurred by Theorem~\ref{thm:tww-3+} is avoidable, albeit this may perhaps be seen as a less pressing question than settling the approximability of twin-width under the parameterizations outlined in the previous paragraph.

\bibliographystyle{plainurl}
\bibliography{main.bib}

\end{document}